\title{Online Task Assignment Problems with Reusable Resources}
\author{
Hanna Sumita\thanks{Tokyo Institute of Technology, sumita@c.titech.ac.jp} \and
Shinji Ito\thanks{NEC Corporation, \{kei\_takemura, i-shinji\}@nec.com} \and
Kei Takemura\textsuperscript{$\dag$} \and
Daisuke Hatano\thanks{RIKEN AIP, daisuke.hatano@riken.jp} \and
Takuro Fukunaga\thanks{Chuo University, fukunaga.07s@chuo-u.ac.jp} \and
Naonori Kakimura\thanks{Keio University, kakimura@math.keio.ac.jp} \and
Ken-ichi Kawarabayashi\thanks{National Institute of Informatics, k\_keniti@nii.ac.jp}
}
\date{}
\newcommand{\veryfast}{$<\varepsilon$}
\newcommand{\Random}{R}
\newcommand{\Greedy}{G}
\newcommand{\NAdap}{N}
\newcommand{\Proposed}{Pro}
\newtheorem{theorem}{Theorem}
\newtheorem{example}{Example}
\newtheorem{lemma}{Lemma}
\newtheorem{remark}{Remark}
\newtheorem{claim}{Claim}
\newcommand{\OPT}{\mathrm{OPT}}
\newcommand{\ALG}{\mathrm{ALG}}
\newcommand{\cI}{\mathcal{I}}
\newcommand{\ang}[1]{{#1}^{\circ}}
\begin{document}

\maketitle

\begin{abstract}
We study online task assignment problem with reusable resources, motivated by practical applications such as ridesharing, crowdsourcing and job hiring.
In the problem, we are given a set of offline vertices~(agents), and, at each time, an online vertex~(task) arrives randomly according to a known time-dependent distribution.
Upon arrival, we assign the task to agents immediately and irrevocably.
The goal of the problem is to maximize the expected total profit produced by completed tasks.
The key features of our problem are
(1) an agent is reusable, i.e., an agent comes back to the market after completing the assigned task, (2) an agent may reject the assigned task to stay the market, and (3) a task may accommodate multiple agents.
The setting generalizes that of existing work in which an online task is assigned to one agent under~(1).

In this paper, we propose an online algorithm that is $1/2$-competitive for the above setting, which is tight.
Moreover, when each agent can reject assigned tasks at most $\Delta$ times, the algorithm is shown to have the competitive ratio $\Delta/(3\Delta-1)\geq 1/3$.
We also evaluate our proposed algorithm with numerical experiments.
\end{abstract}

\section{Introduction}
Online task assignment problem has attracted extensive attention recently in combinatorial optimization and artificial intelligence.
The problem models practical situations that assign agents to tasks arriving online.
For example, in a rideshare platform~\cite{Dickerson2018, Dong2021,Lowalekar2020,Nanda2020} such as Uber and Lyft, we match drivers to riders where requests from riders arrive one by one. 
Other applications include crowdsourcing~\cite{Assadi2015,Ho2012,Xu2017} and job hiring~\cite{Anagnostopoulos2018,Dickerson2019}.

In this paper, we study online task assignment problem under \emph{known adversarial distributions}.
In the problem, we are given a time horizon $T$ and a bipartite graph $G=(U,V; E)$, where $U$ corresponds to a set of agents and $V$ represents \emph{types} of tasks.
A type of a task is usually defined by attributes of the task, e.g., pick-up/drop-off locations of a rider in a rideshare platform and language skills required to complete a task in a crowdsourcing platform.
An edge $(u,v) \in E$ means that $u$ has ability to process an online task $v$.
In the problem, the set $U$ of agents is known in advance~(offline vertices), while we are given a vertex with a type $v \in V$~(an online vertex) at each time, randomly according to a time-dependent distribution $D_t$ over $V$.
We identify an online vertex with its type $v$.
On arrival of $v$, we immediately and irrevocably either assign some agents to $v$, or discard the chance of processing $v$.
Then we obtain a profit produced by an assigned agent if she completes a task.
The goal is to design an online algorithm that maximizes the total profit.
We here assume that $D_t$ is known in advance, which is called a \emph{Known Adversarial Distribution (KAD) model}~\cite{Dickerson2018}; see also~\cite{Alaei2012}.
If $D_t$ is the same for all $t$, it is called a \emph{Known I.I.D.~(KIID) model}.

Motivated by practical applications, it has been studied to generalize the online task assignment problem with additional specific conditions.
An example is \emph{reusability} of agents~\cite{Dickerson2018,Dong2021,Gong2021,Goyal2021,Rusmevichientong2020}.
In the standard online task assignment problem, when an agent is assigned a task, she will leave immediately from the market.
However, if agents are reusable, an agent comes back to the market after completing the task, and she can be matched to a new task again.
Such situation especially arises in a rideshare platform with drivers and riders.
Another example is \emph{rejections} by agents~\cite{Nanda2020}.
It is natural that an agent is allowed to reject an assigned task if it is not satisfactory.
Moreover, each agent may have an upper bound on the number of rejections, that is, when she rejects an assigned task $\Delta$ times, then she must leave from the market.
It should be noted that the previous work mentioned above studied the \emph{online bipartite matching} where only one agent can be assigned to an online task, and it is not straightforward to extend\footnotemark\ the results to the online task assignment problem.\footnotetext{We can consider a simple reduction that replaces an online vertex $v$ with capacity $b_v$ as $b_v$ vertices with capacity $1$ arriving sequentially. Such reduction, however, would not work in the KIID/KAD model because an online vertex is chosen independently at each time.}

In this paper, we introduce the online task assignment problem in the KAD model with all the constraints mentioned above, that is, the problem with the following constraints: 
\begin{enumerate}[label=\Alph*.]
    \item \textbf{(Reusability)} an agent comes back to the market after she completes a task, where the occupation time is drawn from a known distribution,
    \item \textbf{(Rejections)} an agent $u$ rejects an assignment with some probability, and has to leave after rejecting $\Delta_u$ times, and
    \item \textbf{(Task capacity)} an online vertex $v$ can accommodate at most $b_v$ offline vertices.
\end{enumerate}
See Section~\ref{sec:model} for a formal problem definition.

\subsection{Our Contribution}
Our main result is to propose an online algorithm for the above problem.
We prove that our algorithm has the following theoretical guarantees:
\begin{itemize}
    \item $1/2$-competitive for the unlimited rejection case, i.e., when $\Delta_u$ is $+\infty$ for all $u$~(Theorem~\ref{thm:main-unlimited}),
    \item $\Delta/(3\Delta-1)$-competitive for the general case (Theorem~\ref{thm:main-general}), where $\Delta = \max_{u\in U}\Delta_u$. 
\end{itemize}
We here evaluate the performance of online algorithms with the common measure called \emph{competitive ratio}.
We say that an algorithm is $\alpha$-competitive if the expected profit obtained by the algorithm is at least $\alpha$ times the offline optimal value~(See Section~\ref{sec:compratio}).

Our results and existing results are summarized in Table~\ref{tab:summary}.
Note that our algorithm is useful for the problem without reusability.
In fact, we prove that our algorithm has the competitive ratio $(1-1/e^2)/2>0.432$ in the KIID model without reusability, which improves on the competitive ratio $1/e$~\cite{Nanda2020} for the same setting~(Theorem~\ref{thm:KIID}).

In addition, we show that no (adaptive) algorithm can achieve the competitive ratio better than $1/2$.
This is implied by a well-known example for the prophet inequality problem~\cite{krengel1977semiamarts,krengel1978semiamarts}. 
We note that this hardness result is applied even 
in the setting of~\cite{Dickerson2018}.
Thus our result complements their result, which considers only non-adaptive algorithms.

Below let us describe technical highlights of the proof of our algorithms.

\begin{table*}[htb]
    \centering
    \caption{Summary of our results and previous work. The algorithm~\cite{Dickerson2018} requires to know in advance the probability that an agent is available at each time.} 
    \label{tab:summary}
    \scalebox{0.95}{
    \begin{tabular}{c|cccccc}
    \toprule
     & Reusability & \# Rejections & \begin{tabular}{c}\# Assigned \\ agents\end{tabular} &  \begin{tabular}{c}Online \\ vertices\end{tabular} & \begin{tabular}{c}Competitive\\ ratio\end{tabular} & Hardness \\ \hline
        \textbf{This work} & \checkmark & $+\infty$ & $\geq 1$ & KAD & $1/2$ & \multirow{2}{*}{$1/2$}\\
        \textbf{This work} & \checkmark & $\leq \Delta$ & $\geq 1$ & KAD & $\Delta/(3\Delta-1)$
        & \\
        \citet{Dickerson2018} & \checkmark & NA & $1$ & KAD &$1/2$ & ($1/2$) \\ \hline
        \textbf{This work} & & $\leq \Delta$ & $\geq 1$ & KIID & $>0.432$ & \\
        \citet{Nanda2020} & & $\leq \Delta$ & $1$ & KIID & $1/e$ & $1-1/e$\\ 
        \bottomrule
    \end{tabular}
    }
\end{table*}

\paragraph{Technique}

To design an online algorithm, we first construct an offline linear program~(LP) that gives an upper bound on the offline optimal value.
This is a similar approach to the online bipartite matching in the KIID/KAD model~\cite{Alaei2012,Dickerson2018,Nanda2020}.
Our LP has variables $x_{uv, t}$ for each $(u, v)\in E$ and $t$, and has linear inequalities that incorporate the constraints (A)--(C) above. 
See Section~\ref{sec:model} for the formal definition of our LP.
Intuitively, each variable $x_{uv, t}$ corresponds to a probability that $v$ arrives at time $t$ and $u$ is assigned to $v$.
We shall use an LP optimal solution $x^*_{uv,t}$ to determine how to assign agents to $v$ at each time.
There are, however, several difficulties to obtain our results as described below.

The first difficulty is to handle the task capacity condition, i.e., that we choose a set of agents each time.
Given LP optimal solution $x^*_{e,t}$ for each edge $e$, we would like to have a distribution of feasible sets so that the probability of choosing $e$ at time $t$ is $x^*_{e, t}$.
For the online bipartite matching, this is easy; we just choose an edge $e$ with probability proportional to $x^*_{e, t}$ at time $t$.
However, for the online task assignment problem, such independent choice of edges may violate the task capacity condition.
To avoid it, we use Carath\'eodory's theorem in convex analysis.
The theorem allows us to decompose $x^*_{e, t}$ to a polynomial number of feasible sets, which can be regarded as a probability distribution on feasible sets.

Another difficulty is that the LP optimal value may give a loose upper bound on the offline optimal value.
In this case, finding a feasible set based on LP is not sufficient; we may select an agent which should not be chosen.
In fact, there exists a problem instance such that an algorithm similarly to \citet{Nanda2020} that just assigns an agent according to an LP optimal solution 
would fail to obtain the competitive ratio more than $1/3$~(See Appendix~\ref{sec:simple-is-bad} for a specific example).
To overcome this difficulty, we adapt the idea of \citet{Alaei2012}, who analyzed a different online matching problem in the KAD model.
Specifically, after finding a feasible set $S$ based on $x^*_{e, t}$, we decide whether to assign a task $v$ to $u$ for each agent $u\in S$.
We compute the expected profit that $u$ earns at and after the current time $t$ by assigning $u$ to $v$, and, if it is larger than the one when not assigning, then we decide to assign $u$ to $v$.

To prove that the proposed algorithm admits desired competitive ratio, we evaluate the expected profit that each vertex $u$ earns.
Let $R^d_{u,t}$ be the expected profit that $u$ earns at and after time $t$, when $u$ has a remaining budget $d$.
Then 
$R^d_{u,t}$ can be represented recursively with respect to $t$ and $d$.
Since the recursive equation is linear, we can bound $R^d_{u, t}$ by introducing another linear program and its dual.
We note that when $\Delta_u$ is infinite, the recursive equation becomes simpler depending on only $t$, that gives a better competitive ratio.

Let us describe the differences from~\cite{Dickerson2018,Nanda2020}.
The algorithm of \citet{Dickerson2018} assumes that the algorithm can access a probability that $u\in U$ is \emph{available} at time $t$, i.e., $u$ is not occupied by a task at time $t$.
This assumption may be problematic because it is not easy to obtain such information precisely.
In contrast, we adopt expected profit after $t$ to decide the assignment, which can be computed deterministically and efficiently by dynamic programming.
Our algorithm is also different from \citet{Nanda2020}, as their algorithm just uses $x^*_{e, t}$ as a probability mentioned above, which works only for the KIID model.
Our algorithm is shown to admit a better competitive ratio in their setting, which improves on the competitive ratio $1/e$ by~\citet{Nanda2020}~(Theorem~\ref{thm:KIID}).

\paragraph{Experiments}
We evaluate the performance of our algorithm through experiments.
We use a synthetic dataset and the real-world dataset of taxi trip records, similarly to previous work~\cite{Dickerson2018,Nanda2020}.
Our algorithm performs the best in most cases, and runs practically fast enough.
The results imply the superiority of our algorithm.

\subsection{Related Work}

The online task assignment problem is a generalization of the online bipartite matching~(where each vertex can match to at most one neighbor).
Nowadays there exist a large body of literature on online matching.
We here mention some of them closely related to our problem.
See~\cite{Mehta2013Survey} for the detailed survey.

The online bipartite matching was introduced by \citet{Karp1990}.
They considered the adversarial input model, that is, the model that online vertices arrive in an adversarial order, and proposed a randomized $(1-1/e)$-approximation algorithm for the unweighted case.
It is known that the ratio is tight~\cite{Birnbaum2008,Mehta2007}.
When online vertices arrive independently according to a distribution~(i.e., the KIID model),
\citet{Manshadi2012} proposed a $0.702$-competitive algorithm and showed that the ratio cannot be better than $0.823$.
For the edge-weighted case, a $0.667$-competitive algorithm is known~\cite{Haeupler2011}. 
It is also known that we can do better if arrival rates (the expected number of arrivals in $T$ rounds) are integral~\cite{Brubach2016}.

\emph{Online stochastic matching}, introduced by~\citet{Mehta2012}, is the problem that an offline vertex accepts an assignment with a probability.
This can be viewed as that infinite number of rejections are allowed in the process. 
For the problem in the adversarial input model, \citet{Mehta2015} presented a $0.534$-competitive algorithm for the unweighted case when edge probabilities go to $0$.
For the KIID model, \citet{Brubach2016} gave a $(1-1/e)$-competitive algorithm, which works also in the edge-weighted case. 

Online bipartite matching in the KAD model was introduced by \citet{Alaei2012} under a name of \emph{prophet-inequality} matching, as the problem includes the prophet inequality problem~\cite{krengel1977semiamarts,krengel1978semiamarts} as a special case.
\citet{Alaei2012} extended the problem so that an offline vertex has a capacity.
When offline vertices have capacities, 
the online bipartite matching is often called the AdWords problem~\cite{Mehta2007}. 
This variant is also studied extensively~\cite{Devanur2009,Lowalekar2020}.

Recently, online task assignment with the reusability condition receives attention.
\citet{Dickerson2018} gave a $1/2$-competitive algorithm for the KAD model.
Variant problems have been studied~\cite{Dong2021,Gong2021,Goyal2021,Rusmevichientong2020}; see also references therein.
Our problem can be viewed as a general framework that unifies the problem of \citeauthor{Dickerson2018} and online stochastic matching with limited number of rejections. 
We also mention that our setting of the task capacity is close to the online assortment optimization (see e.g.~\cite{Gong2021}). In the problem, an online vertex is offered a set of offline vertices (assortment), and the online vertex selects one of them.

\citet{Nanda2020} discussed the online bipartite matching problem that maximizes fairness, instead of the total profit.
Here, the fairness means the smallest acceptance ratio in tasks.
Note that our results can easily be extended to their setting to balance a trade-off between profit and fairness.

\section{Model}\label{sec:model}
In this section, we describe a formal definition of our problem. 
For a positive integer $k$, we denote $[k]=\{1,\dots, k\}$.
We are given a bipartite graph $G=(U,V; E)$ with edge weight $w_e \geq 0$, where $U$ is the set of offline vertices and $V$ is the set of types of online vertices. 
We are also given a time horizon $T$.
Each offline vertex $u \in U$ has a positive integer $\Delta_u$, which is a budget of allowed rejections in the process.
In addition, each edge $(u, v)\in E$ has a random variable $C_e\in [T]$ that represents the occupation time for $u$ to complete the task $v$.
In other words, when $u$ accepts $v$, $u$ is absent from the market, and will be available at time $t+C_{e}$, where $C_e$ is drawn from a given distribution.
We say that an offline vertex $u$ is \emph{available} at time $t$ if $u$ is in the market~(i.e., not being occupied by some task) and $u$ has not rejected online vertices $\Delta_u$ times.

For each time $t\in [T]$, an online vertex $v$ arrives according to a probability distribution $\{p_{v,t}\}_v$.\footnote{Nothing arrives at time $t$ with probability $1-\sum_{v\in V} p_{v,t}$.}
Upon arrival of a vertex $v$, we immediately and irrevocably either assign at most $b_v$ neighbors to $v$ that are available, or discard $v$. 
When $u \in U$ is assigned, $u$ either accepts $v$ with probability $q_e$ or rejects $v$ with probability $1-q_{e}$, where $e=(u,v)$. 
When $u$ accepts $v$, we obtain a profit $w_{e}$ and $u$ becomes absent from the market during the occupation time $C_e$.

We note that we may assume $\Delta_u > 0$ for all $u\in U$.
When no rejection is allowed for $u$, i.e., $\Delta_u = 0$, only edges $e$ with $q_e = 1$ can be matched to $u$.
Therefore, we can remove all the edges $e$ such that $e$ is incident to $u$ and $q_e<1$, and set $\Delta_u$ to $1$.

Our goal is to design an online (randomized) algorithm that maximizes the expected total profit. 
The performance of an online algorithm is evaluated by the competitive ratio. 
In the subsequent sections, we define the offline optimal value and the competitive ratio.

We note that our model includes the online bipartite matching problem studied in~\cite{Dickerson2018,Nanda2020} as special cases.
We assume that $b_v=1$ for each $v\in V$.
When $\Delta_u=+\infty$ for any $u\in U$ and the acceptance probability $q_e$ is one for each $e\in E$, we can ignore the constraint on rejections, and hence our model is identical to the one in~\cite{Dickerson2018}.
On the other hand, when probability distribution $\{p_{v,t}\}_v$ is the same for all $t\in [T]$ and $\Pr[C_e=T] = 1$ for any $e\in E$, our model is the exactly one in the KIID model without reusable resources, which was studied in~\cite{Nanda2020}. 
We also note that our model and the problem of \citet{Alaei2012} have no inclusion relationships.

\subsection{Offline Optimal Algorithms}
Given a problem instance, a sequence of online vertices is determined according to probability distributions $\{p_{v,t}\}_{v, t}$.
We denote by $\cI$ the probability distribution over all input sequences of online vertices. 
In the offline setting, 
we suppose that we know the sequence $I$ of online vertices in advance. 
An offline algorithm that maximizes the expected profit is called an \emph{offline optimal algorithm for $I$}.
We note that the algorithm does not know whether each offline vertex accepts or rejects an online vertex at each time. 
We denote the expected profit of the offline optimal algorithm for $I$ by $\OPT(I)$. 
Define the \emph{offline optimal value} by $\mathbb{E}_{I\sim \cI}[\OPT(I)]$.
Thus we consider the best algorithm for each $I$ in the offline optimal value.

We denote the set of edges incident to $u$ by $E_u=\{(u,v) \mid (u,v)\in E \}$ for each $u\in U$, and similarly $E_v$ for $v\in V$.
We introduce an offline LP whose optimal value is an upper bound of the offline optimal value:
		\begin{align}
		    {(\texttt{Off}) } \quad 
			\text{max} \ & \sum_{t\in [T]} \sum_{e \in E} w_e q_e x_{e,t}  \nonumber \\ 
			\text{s.t.} \ &\sum_{t'< t} \sum_{e\in E_u} x_{e,t'} q_e \Pr[C_e \geq t-t'+1] + \sum_{e\in E_u} x_{e,t} q_e \leq 1 \quad ( u\in U, t\in [T]) \label{eq:offlineLP-1}\\
			& \sum_{t\in [T]}\sum_{e\in E_u}x_{e,t}\left(1-q_e\Pr[C_e\leq T-t ]\right) \leq \Delta_u \quad (u \in U)\label{eq:offlineLP-2} \\
			& \sum_{e\in E_v} x_{e,t} \leq p_{v,t}b_v \quad (v\in V, t\in [T]) \label{eq:offlineLP-3} \\
			& 0\leq x_{e,t}\leq p_{v,t} \quad (v\in V, e\in E_v, t\in [T]) \label{eq:offlineLP-4}
		\end{align}
For each edge $e=(u,v)$ and time $t$, a variable $x_{e,t}$ corresponds to a probability that $e$ is chosen at time $t$.
Intuitively, constraints~\eqref{eq:offlineLP-1}--\eqref{eq:offlineLP-3} corresponds to the constraints (A)--(C) mentioned in the introduction, respectively.

\begin{lemma}\label{lem:offlineOPT}
The optimal value of LP {\rm (\texttt{Off})} is at least the offline optimal value $\mathbb{E}_{I\sim \cI}[\OPT(I)]$.
\end{lemma}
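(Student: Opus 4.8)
The plan is to exhibit a single feasible solution to (\texttt{Off}) whose objective value equals the offline optimal value, so that the LP optimum can only be larger. Fix an offline optimal algorithm for each realized sequence $I$, and consider running it under the full randomness of the process: the draw $I\sim\cI$, the acceptance coins, the occupation times $C_e$, and any internal randomness of the algorithm. For every edge $e=(u,v)$ and time $t$ I would set $x_{e,t}$ to be the probability, over all of this randomness, that the algorithm assigns $u$ to $v$ at time $t$. The crucial structural fact I will use repeatedly is that the decision to assign $e$ at time $t$ is made before the corresponding acceptance coin is flipped and before $C_e$ is revealed; since the offline algorithm knows $I$ but not the outcomes, the acceptance event (probability $q_e$) and the occupation time $C_e$ are independent of the event ``$e$ is assigned at $t$''.

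With this definition the objective is immediate: the expected profit collected from assigning $e$ at time $t$ is $w_e q_e x_{e,t}$ by the independence above, and summing over $e$ and $t$ gives exactly $\sum_{t}\sum_{e}w_e q_e x_{e,t}$, which equals $\mathbb{E}_{I\sim\cI}[\OPT(I)]$. Feasibility of \eqref{eq:offlineLP-4} and \eqref{eq:offlineLP-3} is equally direct: assigning $e=(u,v)$ at $t$ requires $v$ to arrive, so $x_{e,t}\le p_{v,t}$, and $\sum_{e\in E_v}x_{e,t}$ is the expected number of agents assigned to $v$ at time $t$, which is at most $b_v$ conditioned on $v$ arriving and $0$ otherwise, giving $\le b_v p_{v,t}$. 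For \eqref{eq:offlineLP-1} I would argue that the first sum is exactly $\Pr[u\text{ is occupied at time }t]$: an assignment of $e$ at $t'<t$ keeps $u$ busy at time $t$ precisely when it is accepted and $C_e\ge t-t'+1$, and since $u$ can be occupied by only one task at a time these events are disjoint across $(e,t')$ and partition the occupancy event. The second sum is $\Pr[u\text{ is assigned and accepts at }t]$, which lies inside the event ``$u$ is assigned at $t$''; as $u$ must be available (hence not occupied) to be assigned, this event is disjoint from ``$u$ occupied at $t$'', so the two sums add to at most $1$.

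The main obstacle is the rejection-budget constraint \eqref{eq:offlineLP-2}. Rewriting $1-q_e\Pr[C_e\le T-t]=(1-q_e)+q_e\Pr[C_e\ge T-t+1]$, the left-hand side becomes $\mathbb{E}[\text{number of rejections by }u]+\Pr[u\text{ ends with an unfinished accepted task}]$, where the second term sums the disjoint events that $u$ accepts $e$ at $t$ with $C_e\ge T-t+1$. To bound this by $\Delta_u$ I would prove the pathwise inequality $Z_u\le\Delta_u$ for $Z_u:=(\text{number of rejections by }u)+\mathbf{1}[u\text{ has an unfinished task at }T]$. The point is that if $u$ carries an unfinished task, that task was accepted at some time $t_1$ with occupation running past $T$, so $u$ is busy from $t_1$ onward and makes no further rejection; at the instant of the assignment at $t_1$ the agent was still active, hence had rejected at most $\Delta_u-1$ times, giving $Z_u\le\Delta_u$. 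When there is no unfinished task, $Z_u$ is just the number of rejections, which never exceeds $\Delta_u$ since $u$ leaves upon its $\Delta_u$-th rejection. Taking expectations yields \eqref{eq:offlineLP-2}. Assembling the four constraints shows that $x$ is feasible for (\texttt{Off}), and since its objective equals $\mathbb{E}_{I\sim\cI}[\OPT(I)]$, the LP optimum is at least this value.
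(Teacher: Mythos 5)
Your proposal is correct and follows essentially the same route as the paper: define $x_{e,t}$ as the probability (over the input sequence and all coin flips) that the offline optimal algorithm assigns $u$ to $v$ at time $t$, observe the objective then equals $\mathbb{E}_{I\sim\cI}[\OPT(I)]$, and verify the four constraint families via disjointness of occupancy/assignment events for \eqref{eq:offlineLP-1} and a pathwise bound of the form (rejections $+$ never-returning assignments) $\le \Delta_u$ for \eqref{eq:offlineLP-2}. Your write-up is in fact slightly more explicit than the paper's on two points it leaves implicit — the independence of the acceptance coins and occupation times from the assignment decision, and the argument that an agent holding an unfinished task has rejected at most $\Delta_u-1$ times — but these are elaborations, not a different method.
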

\begin{proof}
For each input sequence $I$, an offline optimal algorithm for $I$ determines a probability that, when an online vertex $v$ arrives at time $t$, $v$ is assigned to offline vertices $u$ for each $e=(u,v)\in E$ and $t\in [T]$.
By taking expectations over $\cI$, 
we set $x_{e,t}$ to be the probability that an online vertex $v$ arrives at $t$ and an offline vertex $u$ is assigned to $v$ for each $e=(u,v)\in E$ and $t\in [T]$.
We show that $x$ defined above is a feasible solution to the LP. 
By definition, $x_{e,t}$ is nonnegative and at most the probability $p_{v,t}$, and hence \eqref{eq:offlineLP-4} is satisfied.

Let us see that the first constraint \eqref{eq:offlineLP-1} is valid.
Fix an input sequence $I$, $u \in U$, and $t\in [T]$. 
Let $v^{t'}$ be an online vertex arriving at time $t'<t$ (if exists).
For any realization of acceptances/rejections and $C_e$'s, we have any one of the following: $u$ has no assignment, 
or $u$ is occupied by $v^{t'}$ for some $t'<t$ with occupation time at least $t-t'+1$. 
Since acceptances/rejections and $C_e$'s are sampled independently, the expected number of online vertices occupying $u$ is $\Pr[u \text{ accepts }v^t \mid I]+ \sum_{t' <t} \Pr[u \text{ accepts } v^{t'} \text{ at }t' \mid I]\Pr[C_{(u,v^{t'})} \geq t-t'+1]$.
Since $\Pr[u \text{ accepts }v^t \mid I]=x_{(u,v)}q_{(u,v)}$ and the expected number is at most $1$, \eqref{eq:offlineLP-1} holds.

To see the second constraint \eqref{eq:offlineLP-2}, let us fix an input sequence $I$ and $u \in U$.
For any realization of acceptances/rejections and $C_e$'s, 
the number of rejections plus assignments with $u$ never coming back (i.e., $C_e$ being at least $T-t+1$) is at most $\Delta_u$. 
Then the expected number of rejections is $\sum_{t\in [T]} \Pr [u\text{ rejects some } v \in V \text{ at }t\mid I] = \sum_{e\in E_u} x_{e,t}(1-q_{e})$. 
The number of assignments that $u$ never returns is $\sum_{t\in [T]} \Pr[u \text{ never comes back}\mid I] =\sum_{t \in [T]} \sum_{e\in E_u} x_{e,t}q_e \Pr[C_e \geq T-t+1]$.
Hence, we have
$\sum_{t \in [T]} \sum_{e\in E_u} (x_{e,t}(1-q_{e}) + x_{e,t}q_e \Pr[C_e \geq T-t+1]) \leq \Delta_u$, and 
\eqref{eq:offlineLP-2} holds. 

The third constraint \eqref{eq:offlineLP-3} is satisfied because the expected number of assignments made for an online vertex $v$ is at most $p_{v,t}b_v$.
Therefore, the solution $x$ defined from offline optimal algorithms is feasible to the LP.
\end{proof}

Note that the LP (\texttt{Off}) is a non-trivial extension of those in the previous papers~\cite{Dickerson2018,Nanda2020}. 
The constraints on rejections can be naturally expressed as $\sum_{t\in [T]} \sum_{e\in E_u} x_{e,t}(1-q_{e}) \leq \Delta_u$, similarly to the offline LP in \cite{Nanda2020}. 
However, this is not enough to show our result, and we need a stronger formulation \eqref{eq:offlineLP-2} of the constraints.

\begin{example}
	Let us consider the following instance. 
	Let $U=\{u\}$, $V=\{v_1,v_2,v_3\}$ and $E=\{(u,v)\mid v\in V\}$. 
	Let also $T=3$, $\Delta=1$ and $\varepsilon\ll1$.
	Assume that, at time $t$, only a vertex $v_t$ can arrive, whose probability is $p_t$. 
	For our notational convenience, we identify an edge $(u,v_t)$ with $t$. 
	Let $p_1=p_2=1$, $p_3=\varepsilon$, $q_1=q_2=1/2$, $q_3=1$, $w_1=4/9$, $w_2=6/9$, and $w_3 = 4/(9\varepsilon)$.
	We set $\Pr[C_t=1] = 1/2$, $\Pr[C_t=2]=1/2$, $\Pr[C_t= 3] = 0$ for all $t$.

		We demonstrate the calculation of the expected profit for an adaptive online algorithm. 
	    Consider an adaptive algorithm that we always assign an arriving vertex to $u$ if possible. 
        Then, at time $1$, we obtain a profit $\frac{4}{9}\cdot \frac{1}{2}=\frac{2}{9}$ in expectation.
        The probability that $u$ is available at time $2$ is $q_1 \Pr[C_t=1]=\frac{1}{4}$ because it is equal to the probability that $u$ accepts $v_1$ and $C_1$ takes $1$. 
        Thus the probability that $v_2$ is assigned to $u$ at time $2$ is 
        $\Pr[(u \text{\ is available at time 2}) \wedge (v_2\text{\ arrives at time\ } 2)] = \frac{1}{4}$.
        Hence the expected profit at time $2$ is $\frac{6}{9} \cdot \frac{1}{4} \cdot \frac{1}{2}=\frac{1}{12}$.
        By a similar discussion, the expected profit at time $3$ is $\frac{5}{36}$.
        Therefore, the expected profit in total is $\frac{2}{9}  +  \frac{1}{12} + \frac{5}{36} = \frac{4}{9}$.

	    The offline optimal value can be calculated as follows.
	    When we know $v_3$ arrives at time $3$ in advance, 
	    the best strategy is that we discard $v_1$ and $v_2$ but assign $v_3$ to obtain a profit $\frac{4}{9\varepsilon}$. 
	    On the other hand, suppose that $v_3$ does not arrive and we know it.
	    We describe that the expected profit when we assign $v_1$ at time $1$ is at most $\frac{11}{36}$.
	    We obtain the expected profit $q_1w_1=\frac{2}{9}$ from $v_1$.
	    Since $v_2$ is the last vertex to arrive, we should assign $v_2$ if $u$ is available at time $2$. 
	    The probability that $u$ is available at time $2$ is $q_1 \Pr[C_t=1]=\frac{1}{4}$ because it is equal to the probability that $u$ accepts $v_1$ and $C_1$ takes $1$. 
	    Thus the probability that $v_2$ is assigned to $u$ at time $2$ is 
        $\Pr[(u \text{\ is available at time 2}) \wedge (v_2\text{\ arrives at time\ } 2)] = \frac{1}{4}$.
        Hence the expected profit at time $2$ is $\frac{6}{9} \cdot \frac{1}{4} \cdot \frac{1}{2}=\frac{1}{12}$.
        Therefore, the expected profit in total is $\frac{2}{9}  +  \frac{1}{12} = \frac{11}{36}$.
        On the other hand, the expected profit obtained when we discard $v_1$ is $\frac{1}{3}$, and thus we see that it is better to discard $v_1$. 
	    Since the probability that $v_3$ arrives is $\varepsilon$, 
        the offline optimal value is $\varepsilon \cdot \frac{4}{9\varepsilon}+ (1-\varepsilon)\cdot \frac{3}{9} = \frac{7-3\varepsilon}{9}$.
	    
	    Let us see a corresponding feasible solution $x$ to the offline LP.
	    We denote $x_{uv_t, t}=x_t$ for $t=1,2,3$ for simplicity.
	    Then as in the proof of Lemma~\ref{lem:offlineOPT}, $x_t$ corresponds to the probability that $v_t$ is assigned to $u$.
	    Since $v_1$ is not chosen in any offline optimal algorithm, $x_1$ is $0$.
        Moreover, $x_2$ is $\varepsilon \cdot 0 + (1-\varepsilon)\cdot 1 = 1-\varepsilon$ since we choose $v_2$ if $v_3$ does not arrive. 
	    Finally, $x_3$ is $\varepsilon\cdot 1 + (1-\varepsilon)\cdot 0 = \varepsilon$. 
	    The LP objective value for this solution $x$ is $\frac{2}{9}x_1+\frac{3}{9}x_2+\frac{4}{9\varepsilon}x_3 = \frac{7-3\varepsilon}{9}$.
		\end{example}

\subsection{Competitive Ratio}\label{sec:compratio}
We evaluate the performance of an online algorithm by a competitive ratio.
Let $\ALG(I)$ be the expected profit of an online algorithm $\ALG$ when the input sequence is $I$.
We say that an online algorithm is \emph{$\alpha$-competitive} if $\mathbb{E}_{I \sim \cI}[\ALG(I)] \geq \alpha \mathbb{E}_{I\sim \cI}[\OPT(I)]$ for any instance.

		\section{Proposed Algorithm}
		In this section, we present our algorithm,
		and then analyze the competitive ratio in subsequent subsections. 
		
		The overview of our proposed algorithm is described as follows.
		We first find an optimal solution $x^*$ to LP (\texttt{Off}).
		Then we use $x^*_{e,t}$ to determine a probability that $v$ comes at time $t$ and we assign $u$ to $v$ where $e=(u, v)$.
		That is, we choose a set $S$ of at most $b_v$ vertices in $U$ so that $\Pr [u\in S, v \text{\ arrives at time\ } t] = x^*_{e,t}$. 
		Then, for each $u\in S$, we compute the expected profit earned at and after $t$ by assigning $u$ to $v$,
		and, if it is larger than the one by not assigning, then the algorithm tries to assign $u$ to $v$.

        Our algorithm is summarized in Algorithm~\ref{alg}.
		We now explain how to implement each step in more detail.
		
		\paragraph{Finding a Set of Offline Vertices}
		We first explain how to design a probabilistic distribution to find a set of offline vertices when $v$ arrives at time $t$.
		We note that it is easy when $b_v=1$, as we can just choose a vertex $u$ in $U$ with probability $x^*_{uv, t}/p_{v, t}$.
		However, when $b_v\geq 2$, an independently random choice of offline vertices may violate the feasibility constraint.
		To avoid it, we construct a probability distribution over feasible vertex sets and choose a feasible vertex set according to the distribution.

        Let $\mathcal{S}_v = \{S \subseteq E_v \mid |S|\leq b_v\}$. 
        For $S \in \mathcal{S}_v$, its characteristic vector is a vector $\chi_{S}\in \{0,1\}^{E_v}$ such that $(\chi_{S})_e=1$ if and only if $e\in S$.
        Consider the convex hull $P_v$ of all characteristic vectors $\chi_S$ ($S\in \mathcal{S}_v$).
        Then the convex hull coincides with the polytope $\{ y \in [0,1]^{E_v} \mid \sum_{e\in E_v} y_{uv} \leq b_v \}$.
        In addition, every vertex in $P_v$ is an integral vector, which corresponds to a characteristic vector $\chi_S$ for some  $S\in \mathcal{S}_v$.
        
        For an optimal solution $x^*$ of LP~(\texttt{Off}), 
        define $y^{v,t}=(x^*_{e,t}/p_{v,t})_{e \in E_v}$.
        Then we can see that $y^{v,t} \in P_v$ since $x^*$ satisfies \eqref{eq:offlineLP-3}. 
        It follows from well-known Carath\'eodory theorem that $y^{v,t}$ can be decomposed as a convex combination of at most $|E_v|+1$ vertices in $P_v$. 
        Since every vertex in $P_v$ corresponds to a characteristic vector of a feasible set, there exist $S^v_k\in \mathcal{S}$ and $\lambda^{v,t}_k$~($k=1,\dots, |E_v|+1$) such that
\begin{align}
\textstyle
    y^{v,t} = \sum_{k=1}^{|E_v|+1} \lambda^{v,t}_{k} \chi_{S^v_k},
    \label{eq:decomp_y}
\end{align}
where $\sum_{k}\lambda^{v,t}_{k} = 1$ and $\lambda^{v,t}_{k} \geq 0$ for any $k\in [|E_v|+1]$.

        We regard $\lambda^{v,t}_{k}$~($k=1,\dots, |E_v|+1$) as a probability distribution over $\mathcal{S}_v$. 
        That is, we choose a set $S_k \in \mathcal{S}_v$ with probability $\lambda^{v,t}_k$. 
        Then the probability that $u$ is chosen when $v$ arrives at $t$ is $\sum_{k: u\in S^v_k} \lambda^{v,t}_{k}= y^{v,t}_{uv} = \frac{x^*_{uv,t}}{p_{v,t}}$.

    We note that $\lambda^{v,t}_k$'s can be obtained in polynomial time by the constructive proof of Carath\'eodory theorem. 
    Indeed, this can be done by the following procedure:
\begin{enumerate}
\item $y \leftarrow y^{v,t}$ and $k\leftarrow 1$;
\item while $y$ is not zero,
\begin{enumerate}
    \item let $S_k$ be a set of offline vertices with the $b_v$ largest positive value of $y_u$ (if only less than $b_v$ vertices with positive values exist, then take all);
    \item $\lambda^{v,t}_k = \min_{u \in S_k} y_u$, $y \leftarrow y - \lambda^{v,t}_k \chi_{S_k}$ and $k\leftarrow k+1$.
\end{enumerate}
\end{enumerate}

		\begin{remark}
 		In the above, we use only the fact that a subset of a feasible set is also feasible.
		Thus our algorithm would work even for more general constraints.
		We note, however, that it is required to solve linear programming problem over the convex hull $P_v$ with constraints~\eqref{eq:offlineLP-1}, \eqref{eq:offlineLP-2}, and~\eqref{eq:offlineLP-4}.
        This implies that, to run our algorithm in polynomial time, we need to describe $P_v$ efficiently.
		An example is a matroid constraint, that is, each online vertex $v$ has a matroid on the ground set $U$ and $v$ chooses an independent set of the matroid.
		We note that a cardinality constraint is a special case of a matroid constraint.
		\end{remark}
		
		\paragraph{Assigning Offline Vertices}

        We next describe how to decide whether we assign each online vertex $u\in S^v_k$ to $v$. 
		
		Let $R^d_{u,t}$ be the expected profit that $u$ earns at and after time $t$, when $u$ has a remaining budget $d$.
		By definition, $R^0_{u,t}=0$.
		Also, for our notational convenience, we assume that $R^d_{u, t}=0$ for all $d > T$.
        If we assigned an online vertex $v$ to $u$ with remaining budget $d>1$ at time $t$, then the expected profit that $u$ earns at and after $t$ would be 
		\begin{align*}
		Q^d_{e,t} := q_{e}\left(w_{e}+\sum_{\ell=1}^{T-t} \Pr[C_e = \ell] R^d_{u,t+\ell}\right) +(1-q_{e})R^{d-1}_{u,t+1}.
		\end{align*}
		Otherwise, the expected profit will be $R^d_{u, t+1}$. 

		In our algorithm, for each vertex $u\in S^v_k$, we compare $Q^d_{uv,t}$ and $R^d_{u, t+1}$. 
		If the former is larger, then this means that assigning $u$ to $v$ makes larger profit than not assigning, and thus we assign $u$ to $v$.
		Otherwise, i.e., if assigning $u$ to $v$ does not make enough profit, then we do not assign $u$ to $v$.
		Thus the expected profit for a vertex $u\in S^v_k$ is $\max \left\{ Q^d_{uv,t}, R^d_{u,t+1} \right\}$.
		
		Since the probability that $v$ arrives at time $t$ and $u$ is chosen by $v$ is $x^*_{e,t}$ for each $e=(u, v)\in E_v$ and $t\in [T]$,  $R^d_{u,t}$ can be represented recursively by
		\begin{align}
		R^d_{u,t}=\sum_{e\in E_u} x^*_{e,t} \max \left\{ Q^d_{e,t}, R^d_{u,t+1} \right\} + \left(1-\sum_{e\in E_u} x^*_{e,t}\right) R^d_{u,t+1}.
		    \label{eq:R^d_t}
		\end{align}

		Our algorithm needs to compute $R^d_{u,t}$ for each $d$, $u\in U$ and $t\in [T]$.
        This can be done efficiently in advance by dynamic programming with the above recursive equation~\eqref{eq:R^d_t}.

        \begin{algorithm}[tb]
		    \caption{Proposed online algorithm}\label{alg}
		    \begin{algorithmic}[1]
		        \STATE Solve LP~(\texttt{Off}) to obtain an optimal solution $x^*$
		        \FOR{$t=1,\ldots, T$}
		            \STATE Let $v$ be a vertex that arrives at time $t$~(if none, skip)
    		        \STATE Choose $S^v_k$ with probability $\lambda^{v,t}_k$ by ~\eqref{eq:decomp_y}
    		        \FOR{$u\in S^v_k$}
    		            \STATE Let $d$ be the remaining budget of $u$
    		            \IF {$Q^d_{uv, t} \geq R^d_{u,t+1}$ and $u$ is available} \label{alg:line}
    		            \STATE Assign $u$ to $v$
    		            \ELSE
    		            \STATE Do nothing for $u$
    		            \ENDIF
    		        \ENDFOR
    		    \ENDFOR
		    \end{algorithmic}
		\end{algorithm}
		
		\subsection{Modifying Instance}\label{sec:modify}

		We first observe that the total expected profit of the algorithm is equal to $\sum_{u\in U} R^{\Delta_u}_{u,1}$.
		Since LP (\texttt{Off}) gives an upper bound on $\mathbb{E}_{I\sim \cI}[\OPT(I)]$ by Lemma~\ref{lem:offlineOPT}, we aim to determine the ratio between $\sum_{u\in U} R^{\Delta_u}_{u,1}$ and the LP optimal value $\sum_{u \in U} \sum_{e\in E_u} \sum_{t} w_e q_e x^*_{e,t}$.
		To this end, we fix a vertex $u$ in $U$, and we evaluate the ratio $\alpha_u$ between $R^{\Delta_u}_{u,1}$ and $\sum_{e\in E_u} \sum_{t} w_e q_e x^*_{e,t}$. 
		Then the competitive ratio of Algorithm~\ref{alg} is at least $\min_u \alpha_u$.
		
		To obtain a lower bound of $R^{\Delta_u}_{u,1}$, we modify a given instance to a simpler one with LP optimal solution $x^*$. 
		We will show that the expected profit for the modified instance gives a lower bound on that for the original instance.

		The new instance is defined as follows.
		We define $G'=(U', V'; E')$ where $U'=\{u\}$, $V'=\{v_t\mid t\in [T]\}$, and $E'=\{(u, v_t)\mid t\in [T]\}$.
		In this instance, we have only one offline vertex $u$, and, at each $t \in [T]$, only one vertex $v_t$ may arrive with edge $(u, v_t)$.
		We set the parameters\footnote{For ease of notation, we use simpler subscripts, e.g., $p'_t$ instead of $p_{uv_t,t}$, as we have only one online vertex at time $t$.} for $v_t$ in the new instance as follows:
		\begin{itemize}
		    \item the probability of arrival: $p'_t = \sum_{e\in E_u} x^*_{e,t}$;
		    \item the probability of acceptance: $q'_t = \sum_{e\in E_u} x^*_{e,t}q_{e}/p'_t$ if $p'_t>0$, and $q'_t=0$ otherwise;
		    \item the profit: $w'_t = \sum_{e\in E_u} x^*_{e,t} q_{e}w_{e}/(p'_tq'_t)$ if $p'_tq'_t>0$ and $0$ otherwise;
		  \item the distribution of occupation time (denoted by $C_t$):
		  \[
		  \Pr[C_t=\ell] = \frac{\sum_{e\in E_u} x^*_{e,t}q_{e}\Pr[C_{e}=\ell]}{p'_tq'_t} \quad (\ell \in [T])
		  \]
		  if $p'_t,q'_t > 0$, and otherwise, $\Pr[C_t=T]=1$;
		  \item the task capacity: $b'_t = 1$.
		\end{itemize}
		The budget $\Delta_u$ of $u$ is set the same value as in the original instance.
		Since $x^*$ is a feasible solution to LP {\rm (\texttt{Off})}, it holds that $p'_t \leq 1$ and $q'_t \leq 1$ by \eqref{eq:offlineLP-4}, and moreover, 
		\[
    		\sum_{\ell=1}^T\Pr[C_t = \ell] = \frac{\sum_{e\in E_u} x^*_{e,t}q_{e} \sum_{\ell}\Pr[C_{e}=\ell]}{p'_tq'_t} = 1.
		\]
		Note also that $\sum_{t\in [T]}w'_t q'_t p'_t = \sum_{e\in E_u} \sum_{t} w_e q_e x^*_{e,t}$.

		Let us execute Algorithm~\ref{alg} for the modified instance, where we use $p'_t$ instead of LP optimal solution $x^*$. 
		Let $\tilde{R}^d_t$ be the expected profit that $u$ earns on and after $t$ when $u$ has a remaining budget $d$ at time $t$.
		Similarly to~\eqref{eq:R^d_t} for $R^d_{u,t}$, it holds that 
		\begin{align}\label{eq:R}
			\tilde{R}^d_{t} = p'_t \max\left\{ Q^d_t, \tilde{R}^d_{t+1} \right\} +(1-p'_t) \tilde{R}^d_{t+1},
		\end{align}
		where 
		\begin{align}\label{eq:Q}
		Q^d_t = q'_t\left(w'_t + \sum_{\ell=1}^{T-t} \Pr[C_t = \ell] \tilde{R}^d_{t+\ell}\right) +(1-q'_t)\tilde{R}^{d-1}_{t+1}.
		\end{align}
		
		We show that the modification does not increase the profit.

		\begin{lemma}\label{lem:replace}
		$R^d_{u,t} \geq \tilde{R}^d_{t}$ for all $d\in [\Delta_u]\cup \{0\}$ and $t\in [T]$.
		\end{lemma}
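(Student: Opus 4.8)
The plan is to prove the inequality by backward induction on $t$, establishing it for all budgets $d$ simultaneously at each step. The base cases are immediate: for $d=0$ both sides vanish since $R^0_{u,t}=\tilde{R}^0_t=0$, and for $t>T$ both quantities are $0$ by the boundary convention. For the inductive step I would assume $R^{d'}_{u,t'}\geq \tilde{R}^{d'}_{t'}$ for every $t'>t$ and every $d'$, and deduce $R^d_{u,t}\geq \tilde{R}^d_t$. This is the correct direction because, inspecting \eqref{eq:R^d_t} and the definition of $Q^d_{e,t}$, the value $R^d_{u,t}$ only refers to $R^d_{u,t+\ell}$ ($\ell\geq 1$), $R^{d-1}_{u,t+1}$, and $R^d_{u,t+1}$, all of which live at times strictly larger than $t$.

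The heart of the argument is to recognize that the modified instance was built so that its one-step quantity is exactly the $x^*$-weighted average of the original ones. Writing $\tilde{Q}^d_{e,t} := q_e(w_e + \sum_\ell \Pr[C_e=\ell]\tilde{R}^d_{t+\ell}) + (1-q_e)\tilde{R}^{d-1}_{t+1}$ for the original per-edge profit with every $R$ replaced by $\tilde{R}$, I would first invoke the induction hypothesis (valid since all terms sit at times $>t$) together with $\Pr[C_e=\ell]\geq 0$ to get $Q^d_{e,t}\geq \tilde{Q}^d_{e,t}$. Then, substituting the definitions $p'_t=\sum_e x^*_{e,t}$, $p'_t q'_t=\sum_e x^*_{e,t}q_e$, $p'_t q'_t w'_t=\sum_e x^*_{e,t}q_e w_e$, and $p'_t q'_t \Pr[C_t=\ell]=\sum_e x^*_{e,t}q_e \Pr[C_e=\ell]$ into \eqref{eq:Q}, a direct computation yields the key identity $p'_t Q^d_t=\sum_{e\in E_u} x^*_{e,t}\tilde{Q}^d_{e,t}$; that is, $Q^d_t$ is the normalized average of the $\tilde{Q}^d_{e,t}$.

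With these two facts in hand I would finish as follows. Rewriting \eqref{eq:R^d_t} using $\sum_e x^*_{e,t}=p'_t$, I bound the ``do nothing'' term by the induction hypothesis $R^d_{u,t+1}\geq \tilde{R}^d_{t+1}$ (note $1-p'_t\geq 0$). For the active term, monotonicity of $\max$ together with $Q^d_{e,t}\geq \tilde{Q}^d_{e,t}$ and $R^d_{u,t+1}\geq \tilde{R}^d_{t+1}$ gives $\max\{Q^d_{e,t},R^d_{u,t+1}\}\geq \max\{\tilde{Q}^d_{e,t},\tilde{R}^d_{t+1}\}$. Finally, since $z\mapsto \max\{z,\tilde{R}^d_{t+1}\}$ is convex, Jensen's inequality applied to the probability weights $x^*_{e,t}/p'_t$ yields $\sum_e x^*_{e,t}\max\{\tilde{Q}^d_{e,t},\tilde{R}^d_{t+1}\}\geq p'_t\max\{Q^d_t,\tilde{R}^d_{t+1}\}$, using the averaging identity. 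Combining the two bounds reproduces exactly the right-hand side of \eqref{eq:R}, so $R^d_{u,t}\geq \tilde{R}^d_t$.

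The main obstacle I anticipate is not any single inequality but the bookkeeping of the convexity step: one must treat the degenerate case $p'_t=0$ separately (where both recursions simply copy their $t+1$ values and the claim follows directly from the induction hypothesis), and one must ensure the Jensen step uses $\max\{\cdot,c\}$ with the \emph{same} constant $c=\tilde{R}^d_{t+1}$ across all edges, which is precisely what makes the averaged lower bound line up with $\tilde{R}^d_t$. Everything else is routine substitution once the averaging identity $p'_t Q^d_t=\sum_e x^*_{e,t}\tilde{Q}^d_{e,t}$ has been verified.
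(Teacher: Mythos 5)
Your proof is correct and follows essentially the same route as the paper's: backward induction on $t$, the averaging identity built into the construction of the modified instance, and the sum-of-maxes $\geq$ max-of-sums (convexity/Jensen) step. The only cosmetic difference is the order of operations—you apply the induction hypothesis edge-by-edge to pass to $\tilde{Q}^d_{e,t}$ before averaging, whereas the paper first pulls the max out of the sum with the original values and then applies the hypothesis—but the ingredients and the resulting bound are identical.
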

		\begin{proof}
		We show the lemma by the backward induction on $t$.
		For the base case, we observe that $\tilde{R}^d_{T}=p'_Tq'_tw'_t = \sum_{e} x^*_{e,t}q_e w_e =  R^d_{u,T}$. 
		Suppose that $R^d_{u,\ell} \geq \tilde{R}^d_{\ell}$ for all $\ell=t+1,\ldots, T$. 
		By definition of $R^d_{u,t}$,
		\begin{align*}
		    R^d_{u,t} &= \sum_{e\in E_u} x^*_{e,t} \max \left\{ Q^d_{e,t}, R^d_{u,t+1} \right\} + \left(1-\sum_{e\in E_u} x^*_{e,t}\right) R^d_{u,t+1} \\
		    &\geq \max \left\{ \sum_{e\in E_u} x^*_{e,t}Q^d_{e,t}, p'_t R^d_{u, t+1} \right\} + (1-p'_t) R^d_{u,t+1}\\
		    &\geq \max \left\{ \sum_{e\in E_u} x^*_{e,t}Q^d_{e,t}, p'_t \tilde{R}^d_{t+1} \right\} + (1-p'_t) \tilde{R}^d_{t+1},
		 \end{align*}
		 where the last inequality holds by the induction hypothesis.
		 Moreover, by the induction hypothesis, we have
		 \begin{align*}
		     \sum_{e\in E_u} x^*_{e,t} Q^d_{e,t}
		     &= p'_t q'_t w'_t + p'_tq'_t \sum_{\ell=1}^{T} \Pr[C_t = \ell] R^d_{u,t+\ell} + p'_t(1-q'_t)R^{d-1}_{u,t+1}\\
 		     &\geq p'_t \left( q'_t w'_t + q'_t \sum_{\ell=1}^{T} \Pr[C_t = \ell] \tilde{R}^d_{t+\ell} + (1-q'_t)\tilde{R}^{d-1}_{t+1}\right)\\
		     & = p'_t Q^d_t.
		 \end{align*}
		 Therefore, it holds that
		 \begin{align*}
		     R^d_{u,t} 
		     &\geq p'_t \max\left\{ Q^d_t, \tilde{R}^d_{t+1} \right\} +(1-p'_t) \tilde{R}^d_{t+1} 
		     =\tilde{R}^d_{t}.
		 \end{align*}
		Thus the lemma holds. 
		\end{proof}
		
		By this lemma, it suffices to lower-bound $\tilde{R}^{\Delta_u}_1$ to obtain a lower bound on $R^{\Delta_u}_{u,1}$ for the original instance. 
		
		\subsection{Analysis for the Unlimited Rejection Case}\label{sec:unlimited}
		
		In this section, we assume that the number of allowed rejections is unlimited, that is, $\Delta_u =+\infty$ for $u\in U$.
		This means that we can ignore the constraint~\eqref{eq:offlineLP-2} in LP (\texttt{Off}).
		The main result of this section is to prove that Algorithm~\ref{alg} is $1/2$-competitive for this case.
		For that purpose, we first modify the instance as in the previous section, and we will bound $\tilde{R}^{\Delta_u}_{1}$ from below.
		
		We observe that, when executing the algorithm for the modified instance, the remaining budget $d$ of the vertex $u$ is infinite in the whole process. 
		For simplicity, we denote $R_t=\tilde{R}^{\infty}_{t}$ and $Q_t=Q^{\infty}_t$.
        Then, by~\eqref{eq:Q}, it holds that
		\begin{align*}
		Q_t & = q'_t\left( w'_t + \sum_{\ell =1}^{T-t} \Pr[C_t=\ell] R_{t+\ell}\right)+(1-q'_t)R_{t+1}.
		\end{align*}
		Moreover, it follows from \eqref{eq:R} that
		\begin{align}\label{eq:Rt}
		    R_t 
		     & = \max \left\{ p'_t Q_t + (1-p'_t) R_{t+1}, R_{t+1} \right\}.
		\end{align}
		We note that $R_T = p'_Tq'_Tw'_T$.
		We can make the first term in~\eqref{eq:Rt} simpler as 
	 	$
        B_t w'_t + A_tR_{t+1} + B_t \sum_{\ell =2}^{T-t} \Pr[C_t=\ell] R_{t+\ell},
	 	$
 		where $B_t=p'_tq'_t$ and $A_t = p'_tq'_t \Pr[C_t=1]+p'_t(1-q'_t)+(1-p'_t)$ for each $t \in [T]$. 
 		We note that $A_t=p'_tq'_t \Pr[C_t=1] + 1- p'_tq'_t = 1-B_t\Pr[C_t \geq 2]$.

		A lower bound on $R_1$ is obtained by minimizing $R_1$ subject to the condition~\eqref{eq:Rt}.
		The minimization problem can be formulated as a linear programming problem as below.
		Note that we do not need to solve the LP, but use it for analysis.\footnote{Such LP is called a factor-revealing LP.}
		\begin{align}
		    \begin{array}{rl}
		        \min & R_1 \\
		        \text{s.t.} & R_{t} \geq B_t  w'_t + A_t R_{t+1} +  B_t\sum_{\ell =2}^{T-t} \Pr[C_t=\ell] R_{t+\ell}  \ \ (t \in[T-1])\\
		        & R_t \geq R_{t+1} \quad  (t \in [T-1]) \\
		        & R_T \geq B_T w'_T \\
		        & R_t \geq 0 \quad ( t\in [T]),
		    \end{array}
		    \label{eq:LP-alg-primal-1}
		\end{align}
		Then the optimal value of the above LP gives a lower bound of $R_1$.
		The dual of LP \eqref{eq:LP-alg-primal-1} is given by
		\begin{align}
		    \begin{array}{rl}
		        \max & \sum_{t=1}^T B_t w'_t \alpha_t\\
		        \text{s.t.}  & \alpha_1+\beta_1 \leq 1\\
		        & \alpha_2+\beta_2\leq A_1\alpha_1+\beta_1 \\
		        & \alpha_t+\beta_t\leq \sum_{\ell =1}^{t-2} \alpha_\ell B_\ell \Pr[C_\ell = t-\ell]+A_{t-1}\alpha_{t-1} + \beta_{t-1} \ \  (3 \leq t \leq T-1)\\
		        & \alpha_T \leq \sum_{\ell =1}^{T-2} \alpha_\ell B_\ell \Pr[C_\ell = T-\ell]+A_{T-1}\alpha_{T-1}+\beta_{T-1} \\
		        &\alpha_t \geq 0 \quad (t\in [T])\\
		        &\beta_t \geq 0 \quad (t\in [T-1]).
		    \end{array}
		    \label{eq:LP-alg-dual-1}
		\end{align}
		To show a lower bound on LP~\eqref{eq:LP-alg-primal-1},
		we construct a feasible solution to the dual LP \eqref{eq:LP-alg-dual-1}. 
		Let $\gamma \leq 1/2$.
		We set $\alpha_t=\gamma$ for $t \in[T]$, $\beta_1=1-\gamma$, $\beta_2 = \beta_1 -\gamma +A_1 \gamma$, and $\beta_t = \beta_{t-1}-\gamma+A_{t-1}\gamma+\sum_{\ell =1}^{t-2} B_\ell \Pr[C_\ell = t-\ell] \gamma$ ($3 \leq t\leq T-1$).

        \begin{lemma}\label{lem:feasible-unlimited}
        For $0\leq \gamma\leq 1/2$, $\alpha_t$ and $\beta_t$ defined as above are feasible to \eqref{eq:LP-alg-dual-1}.
        \end{lemma}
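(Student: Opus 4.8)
The plan is to verify the constraints of \eqref{eq:LP-alg-dual-1} in groups, isolating the single genuinely nontrivial requirement. First I would observe that the recursion defining $\beta_t$ was reverse-engineered precisely so that the ``propagation'' constraints of \eqref{eq:LP-alg-dual-1} hold with equality. Substituting $\alpha_t=\gamma$ together with the definitions of $\beta_1,\beta_2,\beta_t$ turns the first three constraints into identities: for example $\alpha_1+\beta_1=\gamma+(1-\gamma)=1$, and for $3\le t\le T-1$ the left-hand side $\alpha_t+\beta_t=\gamma+\beta_t$ coincides term-by-term with $\sum_{\ell=1}^{t-2}\gamma B_\ell\Pr[C_\ell=t-\ell]+A_{t-1}\gamma+\beta_{t-1}$, simply because every $\alpha_\ell$ equals $\gamma$. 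The final constraint (for $t=T$) is not backed by a defining equation, but if I extend the same recursion to define a formal $\beta_T$, its right-hand side equals $\gamma+\beta_T$, so that constraint reduces exactly to $\beta_T\ge 0$. Since $\alpha_t=\gamma\ge 0$ is immediate, everything hinges on proving the nonnegativity $\beta_t\ge 0$ for all $t$.

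To establish this I would unfold the recursion into closed form. Using the stated identity $A_{t-1}=1-B_{t-1}\Pr[C_{t-1}\ge 2]$, each increment simplifies to $\beta_t-\beta_{t-1}=\gamma\big(\sum_{\ell=1}^{t-2}B_\ell\Pr[C_\ell=t-\ell]-B_{t-1}\Pr[C_{t-1}\ge 2]\big)$, valid already from $t=2$ with an empty inner sum. Telescoping from $\beta_1=1-\gamma$ and swapping the order of summation in the double sum — so that the accumulated point masses $\Pr[C_{t'}=\cdot]$ combine with the tail $\Pr[C_{t'}\ge 2]$ via the elementary identity $\Pr[C_k\ge 2]-\Pr[2\le C_k\le t-k]=\Pr[C_k\ge t-k+1]$ — I expect the partial sums to collapse into tail probabilities, yielding
\[
\beta_t=(1-\gamma)-\gamma\sum_{t'<t}B_{t'}\,\Pr[C_{t'}\ge t-t'+1].
\]
The reindexing and this one probability identity are the only calculation of any length, and both are routine.

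The conceptually hard part is recognizing that the sum in this closed form is exactly the left-hand side of the availability constraint \eqref{eq:offlineLP-1} transported to the modified instance. Concretely, the definitions $B_{t'}=p'_{t'}q'_{t'}$ and of the occupation-time law $C_{t'}$ give $\sum_{e\in E_u}x^*_{e,t'}q_e\Pr[C_e\ge m]=B_{t'}\Pr[C_{t'}\ge m]$, so feasibility of $x^*$ for \eqref{eq:offlineLP-1} at time $t$ yields $\sum_{t'<t}B_{t'}\Pr[C_{t'}\ge t-t'+1]\le 1-B_t\le 1$. Substituting this bound into the closed form gives $\beta_t\ge(1-\gamma)-\gamma=1-2\gamma\ge 0$ exactly when $\gamma\le 1/2$, the hypothesis of the lemma; the same bound applied to the formal $\beta_T$ discharges the last constraint. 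Thus the argument turns on the single observation that $\beta_t$ measures the slack of the reusability constraint \eqref{eq:offlineLP-1}, and it is precisely this link that forces the threshold $\gamma\le 1/2$.
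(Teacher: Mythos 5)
Your proposal is correct and follows essentially the same route as the paper's proof: the first three dual constraints hold with equality by construction, the recursion for $\beta_t$ unfolds to the closed form $\beta_t = 1-\gamma-\gamma\sum_{t'<t}B_{t'}\Pr[C_{t'}\geq t-t'+1]$, and nonnegativity (including the $t=T$ constraint, which the paper likewise reduces to the inequality $1\geq(1+\sum_{t'<T}B_{t'}\Pr[C_{t'}\geq T-t'+1])\gamma$, i.e.\ your ``formal $\beta_T\geq 0$'') follows from constraint \eqref{eq:offlineLP-1} of LP (\texttt{Off}) precisely when $\gamma\leq 1/2$. The only cosmetic differences are that the paper proves the closed form by backward induction rather than telescoping, and uses the bound $\leq 1$ where you use the slightly sharper $\leq 1-B_t$.
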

        \begin{proof}
        By the construction, we can see that they satisfy the first, second and third constraints in \eqref{eq:LP-alg-dual-1}.
        In fact, these constraints are satisfied with equality as $\alpha_t = \gamma$ for any $t$.
        It remains to show that $\beta_t$ is non-negative, and that they satisfy the fourth constraint.
        
        \begin{claim}
        For $t\geq 2$, we have
        \begin{align}\label{eq:betaLB}
         \beta_t = 1-\gamma - \gamma \sum_{t'<t} B_{t'}\Pr[C_{t'}\geq t-t'+1].
        \end{align}
        \end{claim}
        \begin{proof}[Proof of Claim]
        By a simple calculation, we have
		\begin{align*}
		    \beta_2 &=1-\gamma -\gamma   + (1-B_1\Pr[C_1\geq 2])\gamma \\
		    &= 1-\gamma - \gamma \sum_{t'<2} B_{t'} \Pr[C_{t'}\geq 2-t'+1].
		\end{align*}        
		Moreover, by induction, it holds that, for any $t\geq 3$,  
		\begin{align*}
		    \beta_t &=\beta_{t-1} +\gamma \sum_{\ell=1}^{t-2}B_\ell \Pr[C_\ell = t-\ell] - (1-A_{t-1})\gamma\\
		    &= \left(1-\gamma - \gamma \sum_{t'<t-1} B_{t'}\Pr[C_{t'}\geq t-t']\right) +\gamma \sum_{t'<t-1}B_{t'} \Pr[C_{t'} = t-t'] - (B_{t-1} \Pr[C_{t-1} \geq 2])\gamma\\
		    &= 1-\gamma - \gamma \sum_{t'<t} B_{t'}\Pr[C_{t'}\geq t-t'+1].
		\end{align*}		
		Thus the claim holds.
        \end{proof}

		By~\eqref{eq:betaLB}, 
		we see that, for $t=2,\ldots, T-1$,
		\begin{align*}
		    \beta_t  &=1-\gamma - \gamma \sum_{t'<t} B_{t'}\Pr[C_{t'}\geq t-t'+1] \\
		    &=1-\gamma - \gamma\left( \sum_{t'<t} \sum_{e\in E_u}x^*_{e,t'} q_{e}\Pr[C_{e}\geq t-t'+1]\right)\\
		    &\geq 1-2\gamma \quad (\because \eqref{eq:offlineLP-1}).
		\end{align*}
		Since $\gamma \leq 1/2$, it turns out that $\beta_t$ is nonnegative for any $t$.

		Finally, let us check that the fourth constraint is satisfied.
		The RHS is equal to $ - \gamma (1-A_{T-1} - \sum_{\ell=1}^{T-2} B_\ell \Pr[C_\ell = T-\ell])$.
		Since $A_{T-1}=1-B_{T-1}\Pr [C_{T-1}\geq 2]$, the RHS is at least $ - \gamma B_{T-1}\Pr [C_{T-1}\geq 2]$.
		Since $\beta_{T-1}=1-\gamma - \gamma \sum_{t'<T-1} B_{t'}\Pr[C_{t'}\geq T-t'+1]$ by~\eqref{eq:betaLB}, the fourth constraint  is satisfied if  
		\begin{align}
		    1 &\geq \left(1+\sum_{t'<T} B_{t'} \Pr[C_{t'}\geq T-t'+1]\right)\gamma\notag \\ 
		    &= \left(1+\sum_{t'<T} \sum_{e\in E_u}x^*_{e,t'} q_{e}\Pr[C_{e}\geq T-t'+1]\right)\gamma. \label{eq:unlimited-0}
		\end{align}
		Since $0\leq \sum_{t'<T} \sum_{e\in E_u}x^*_{e,t'} q_{e}\Pr[C_{e}\geq T-t'+1] \leq 1$ by \eqref{eq:offlineLP-1}, the RHS is at most $2\gamma$. 
		Since $\gamma\leq 1/2$, \eqref{eq:unlimited-0} holds, and hence the fourth constraint is satisfied.
        \end{proof}

		Therefore, $\alpha_t$ and $\beta_t$ defined as above are feasible to \eqref{eq:LP-alg-dual-1}.
		The objective value for this solution is $\sum_{t} B_t w'_t \gamma = \gamma \sum_{t} p'_t q'_t w'_t$.
		It follows from the LP duality theorem that the optimal value of LP~\eqref{eq:LP-alg-primal-1} is at least $\gamma \sum_{t} p'_t q'_t w'_t$.
		This is maximized when $\gamma = 1/2$, and in this case, $R_1$ is lower-bounded by $\frac{1}{2} \sum_{t} p'_t q'_t w'_t$.
		Since $\sum_{t} p'_t q'_t w'_t = \sum_{t} \sum_{e\in E_u} w_{e}q_ex^*_{e,t}$, we obtain $R_1 \geq \frac{1}{2} \sum_{t} \sum_{e\in E_u} w_{e}q_ex^*_{e,t}$.
		
		Summarizing, we have $\sum_{u\in U} R^{\Delta_u}_1 \geq \frac{1}{2} \mathbb{E}[\OPT(I)]$ by Lemma~\ref{lem:offlineOPT}, which implies the following theorem.

		\begin{theorem}\label{thm:main-unlimited}
		Algorithm~\ref{alg} is $1/2$-competitive for the problem with unlimited rejections.
		\end{theorem}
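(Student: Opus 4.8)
The plan is to show that the algorithm's expected profit is at least half the optimal value of LP~(\texttt{Off}), which by Lemma~\ref{lem:offlineOPT} upper-bounds the offline optimal value $\mathbb{E}_{I\sim\cI}[\OPT(I)]$. Since the algorithm's total expected profit decomposes as $\sum_{u\in U} R^{\Delta_u}_{u,1}$ and the LP objective decomposes as $\sum_{u\in U}\sum_{e\in E_u}\sum_t w_e q_e x^*_{e,t}$, it suffices to establish the per-vertex bound $R^{\Delta_u}_{u,1}\geq \tfrac12\sum_{e\in E_u}\sum_t w_e q_e x^*_{e,t}$ for each fixed $u$ and then sum over $u$.

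For a fixed $u$ with $\Delta_u=+\infty$, I would first invoke Lemma~\ref{lem:replace} to pass to the single-vertex modified instance, giving $R^{\infty}_{u,1}\geq \tilde R^{\infty}_1 = R_1$; this is safe because the modification only shrinks the profit while preserving the LP objective, as $\sum_t p'_t q'_t w'_t = \sum_{e\in E_u}\sum_t w_e q_e x^*_{e,t}$. Next I would lower-bound $R_1$ through the factor-revealing LP~\eqref{eq:LP-alg-primal-1}: the recursion~\eqref{eq:Rt} exhibits the true values $R_t$ as a feasible point of that minimization, so its optimum is a valid lower bound on $R_1$. Rather than solve the primal, I would pass to its dual~\eqref{eq:LP-alg-dual-1} and exhibit one feasible solution with large objective; by weak LP duality this objective then lower-bounds $R_1$.

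The heart of the argument, and the step I expect to be the main obstacle, is constructing that dual feasible solution, which is exactly the content of Lemma~\ref{lem:feasible-unlimited}. The natural guess is the uniform choice $\alpha_t=\gamma$ together with the $\beta_t$ forced by the equality constraints, and the delicate part is verifying that every $\beta_t$ stays nonnegative and that the final dual constraint holds. This is precisely where reusability enters: each check reduces, via the closed form $\beta_t = 1-\gamma-\gamma\sum_{t'<t}B_{t'}\Pr[C_{t'}\geq t-t'+1]$, to the availability inequality~\eqref{eq:offlineLP-1}, which bounds the relevant occupation sums by~$1$ and thus keeps everything nonnegative exactly when $\gamma\leq\tfrac12$. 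Taking $\gamma=\tfrac12$ maximizes the dual objective $\gamma\sum_t B_t w'_t = \gamma\sum_t p'_t q'_t w'_t$, yielding $R_1\geq \tfrac12\sum_t p'_t q'_t w'_t$.

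Finally I would undo the modification and combine the pieces: translating $\sum_t p'_t q'_t w'_t$ back to $\sum_{e\in E_u}\sum_t w_e q_e x^*_{e,t}$ gives the per-vertex bound, summing over $u\in U$ yields $\sum_{u} R^{\Delta_u}_{u,1}\geq \tfrac12\sum_{u}\sum_{e\in E_u}\sum_t w_e q_e x^*_{e,t}$, and Lemma~\ref{lem:offlineOPT} turns the right-hand side into $\tfrac12\,\mathbb{E}_{I\sim\cI}[\OPT(I)]$, establishing the claimed $1/2$-competitiveness.
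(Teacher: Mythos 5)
Your proposal is correct and follows essentially the same route as the paper's own proof: pass to the single-vertex modified instance via Lemma~\ref{lem:replace}, lower-bound $\tilde R^{\infty}_1$ through the factor-revealing LP~\eqref{eq:LP-alg-primal-1} and its dual~\eqref{eq:LP-alg-dual-1}, exhibit the dual solution $\alpha_t=\gamma$ with the induced $\beta_t$ whose nonnegativity (and the last dual constraint) reduce to~\eqref{eq:offlineLP-1} exactly when $\gamma\leq 1/2$, and then take $\gamma=1/2$ and sum over $u$ using Lemma~\ref{lem:offlineOPT}. This matches the paper's argument, including the key content of Lemma~\ref{lem:feasible-unlimited}, step for step.
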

		
		\subsection{Analysis for the Limited Rejection Case}\label{sec:limited}
		
		In this subsection, we prove that Algorithm~\ref{alg} is $\Delta/(3\Delta-1)$-competitive for the general case with $\Delta = \max_{u \in U}\Delta_u$. 
		As in the analysis for the unlimited rejection case~(Section~\ref{sec:unlimited}),
		we present a lower bound on $\tilde{R}^d_t$.
		
		We first show the following lemma. 
		\begin{lemma}\label{lem:general}
		$\tilde{R}^{d-1}_t \geq \frac{d-1}{d}\tilde{R}^{d}_t$ for each $d\in [\Delta_u]$ and $t \in [T]$.
		\end{lemma}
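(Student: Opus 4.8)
The plan is to prove the inequality by backward induction on $t$, establishing the statement $P(d,t)\colon\ \tilde R^{d-1}_t\ge\frac{d-1}{d}\tilde R^d_t$ for \emph{all} $d$ simultaneously at each fixed $t$. Only values at times strictly greater than $t$ will be referenced, so no separate induction on $d$ is needed, and $P(1,t)$ is trivial because $\tilde R^0_t=0$. For the base case $t=T$, equations \eqref{eq:R} and \eqref{eq:Q} give $\tilde R^d_T=p'_Tq'_Tw'_T$ for every $d\ge 1$, whence $\tilde R^{d-1}_T=\tilde R^d_T\ge\frac{d-1}{d}\tilde R^d_T$.

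For the inductive step I would abbreviate $G^d_t:=p'_tQ^d_t+(1-p'_t)\tilde R^d_{t+1}$, so that \eqref{eq:R} reads $\tilde R^d_t=\max\{G^d_t,\tilde R^d_{t+1}\}$. Since $\frac{d-1}{d}\tilde R^d_t=\max\{\frac{d-1}{d}G^d_t,\ \frac{d-1}{d}\tilde R^d_{t+1}\}$, it suffices to bound $\tilde R^{d-1}_t$ below by each term. The term $\frac{d-1}{d}\tilde R^d_{t+1}$ is immediate: the recursion gives $\tilde R^{d-1}_t\ge\tilde R^{d-1}_{t+1}$, and the hypothesis $P(d,t+1)$ gives $\tilde R^{d-1}_{t+1}\ge\frac{d-1}{d}\tilde R^d_{t+1}$.

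The crux is $\tilde R^{d-1}_t\ge\frac{d-1}{d}G^d_t$. Here I would keep \emph{both} lower bounds furnished by the recursion, $\tilde R^{d-1}_t\ge G^{d-1}_t$ and $\tilde R^{d-1}_t\ge\tilde R^{d-1}_{t+1}$, and show $\max\{G^{d-1}_t,\tilde R^{d-1}_{t+1}\}\ge\frac{d-1}{d}G^d_t$. Expanding $G^{d-1}_t$ and $G^d_t$ through \eqref{eq:Q}, the profit contribution $p'_tq'_tw'_t$, the occupation-time terms $p'_tq'_t\Pr[C_t=\ell]\tilde R^{\cdot}_{t+\ell}$, and the term $(1-p'_t)\tilde R^{\cdot}_{t+1}$ all meet the factor $\frac{d-1}{d}$ directly via the hypotheses $P(d,t+\ell)$. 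The single offending piece is the rejection branch $p'_t(1-q'_t)\tilde R^{d-2}_{t+1}$ in $G^{d-1}_t$ against $\frac{d-1}{d}p'_t(1-q'_t)\tilde R^{d-1}_{t+1}$ in $\frac{d-1}{d}G^d_t$: the hypothesis $P(d-1,t+1)$ only yields $\tilde R^{d-2}_{t+1}\ge\frac{d-2}{d-1}\tilde R^{d-1}_{t+1}$, and since $\frac{d-2}{d-1}<\frac{d-1}{d}$ the single bound $G^{d-1}_t\ge\frac{d-1}{d}G^d_t$ fails by a deficit proportional to $\frac{1}{d(d-1)}\,p'_t(1-q'_t)\,\tilde R^{d-1}_{t+1}$.

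I expect this rejection deficit to be the main obstacle, and the plan is to cover it with the second lower bound $\tilde R^{d-1}_{t+1}$. Writing $W:=p'_tq'_tw'_t$, the assign bound certifies $\tilde R^{d-1}_t\ge\frac{d-1}{d}G^d_t$ once $W$ is large enough to absorb the deficit, whereas the skip bound $\tilde R^{d-1}_t\ge\tilde R^{d-1}_{t+1}$ certifies it once $W$ is small; the two admissible ranges of $W$ cover all of $[0,\infty)$ precisely because $p'_t(1-q'_t)\le p'_t\le 1$. Concretely I would check that the assign threshold $W\ge\tilde R^{d-1}_{t+1}\big(\frac{d}{d-1}p'_t(1-q'_t)-1\big)$ and the skip threshold $W\le\frac{1}{d-1}\tilde R^{d-1}_{t+1}$ leave no gap exactly when $p'_t(1-q'_t)\le 1$; I first verified this mechanism in the representative hard case where the extra budget is momentarily worthless, $\tilde R^d_{t+1}=\tilde R^{d-1}_{t+1}$ with occupation time equal to $1$, in which it reduces literally to $p'_t(1-q'_t)\le 1$. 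A cleaner packaging of the same idea, which I would aim for in the final write-up, is to exhibit a convex weight $\theta\in[0,1]$ with $\theta G^{d-1}_t+(1-\theta)\tilde R^{d-1}_{t+1}\ge\frac{d-1}{d}G^d_t$, chosen to cancel the rejection term and then invoking $P(d-1,t+1)$, $P(d,t+\ell)$, monotonicity of $\tilde R^{d}_{\cdot}$ in $t$, and $p'_t,q'_t\le 1$; such a combination dominates the maximum and closes the induction.
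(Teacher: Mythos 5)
Your proposal is correct and takes essentially the same route as the paper: the paper's proof is exactly your ``cleaner packaging,'' bounding the max by the convex combination $\tfrac{d-1}{d}G^{d-1}_t+\tfrac{1}{d}\tilde R^{d-1}_{t+1}$ (written there, with shifted index, as $\tfrac{d}{d+1}\bigl(p'_tQ^d_t+(1-p'_t)\tilde R^d_{t+1}\bigr)+\tfrac{1}{d+1}\tilde R^d_{t+1}$), where the weight is chosen so that the rejection branch cancels via $\tfrac{d-1}{d}\cdot\tfrac{d-2}{d-1}+\tfrac{1}{d}=\tfrac{d-1}{d}$, and the remaining terms are handled by the induction hypothesis at later times together with monotonicity of $\tilde R^d_t$ in $t$. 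Your threshold/case analysis on $W$ is an unnecessary detour, but the final convex-combination argument, with $\theta=\tfrac{d-1}{d}$, is precisely the paper's proof.
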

		\begin{proof}
		We prove the lemma by induction on $t$. 
		The base case is when $t=T$. In this case,
	    $\tilde{R}^d_T=\tilde{R}^{d'}_T$ holds for any $d,d'\in [\Delta_u]$, and thus the statement holds.
		
		Let us consider the case when $t < T$. When $d=1$, the statement is trivial.
		Hence we prove that the statement holds for $t$ and $d> 1$.

		By \eqref{eq:R}, we have $\tilde{R}^d_{t}=\max(p'_{t} Q^d_{t}+(1-p'_{t})\tilde{R}^d_{t+1}, \tilde{R}^d_{t+1})$.
		\begin{align*}
		    \tilde{R}^d_{t} &= \max \{p'_tQ^d_t+(1-p'_t)\tilde{R}^d_{t+1}, \tilde{R}^d_{t+1}\}
		    \\
		    &\geq \max \left\{\frac{d}{d+1}(p'_tQ^d_t+(1-p'_t)\tilde{R}^d_{t+1})+\frac{1}{d+1}\tilde{R}^d_{t+1}, \tilde{R}^d_{t+1}\right\} 
		\end{align*}
		By the induction hypothesis, $\tilde{R}_{t+1}^{d-1} \ge \frac{d-1}{d} \tilde{R}_{t+1}^{d}$ holds. This shows that
		$Q^d_t = q'_t(w_t+\sum_{\ell \geq 1} \Pr[C_t = \ell] \tilde{R}^d_{t+\ell}) + (1-q'_t)\tilde{R}^{d-1}_{t+1} \geq q'_t(w'_t+\sum_{\ell \geq 1} \Pr[C_t = \ell] \tilde{R}^d_{t+\ell}) + (1-q'_t) \frac{d-1}{d}\tilde{R}^{d}_{t+1}$.
		Also we can decompose $\tilde{R}^d_{t+1}
		    = p'_tq'_t \sum_{\ell \geq 1} \Pr[C_t = \ell]\tilde{R}^d_{t+1}
		    + p'_t(1-q'_t) \tilde{R}^d_{t+1}
		    + (1-p'_t) \tilde{R}^d_{t+1}$.
		Hence, it holds that
		\begin{align*}
		    &\frac{d}{d+1}(p'_tQ^d_t+(1-p'_t)\tilde{R}^d_{t+1})+\frac{1}{d+1}\tilde{R}^d_{t+1} \\
		    &\geq \frac{d}{d+1}p'_tq'_tw'_t + p'_tq'_t\sum_{\ell \geq 1} \Pr[C_t = \ell]\left(\frac{d}{d+1} \tilde{R}^d_{t+\ell} +  \frac{1}{d+1}\tilde{R}^d_{t+1}\right) \\
		& \quad
		+ p'_t(1-q'_t) \left(\frac{d-1}{d+1}\tilde{R}^{d}_{t+1} + \frac{1}{d+1} \tilde{R}^d_{t+1} \right)
		+ (1-p'_t)\tilde{R}^d_{t+1}\\
		&\geq \frac{d}{d+1}p'_tq'_tw'_t 
		+ p'_tq'_t\sum_{\ell \geq 1} \Pr[C_t = \ell] \tilde{R}^d_{t+\ell} \\
		&\quad + p'_t(1-q'_t) \frac{d}{d+1}\tilde{R}^d_{t+1}+ (1-p'_t)\tilde{R}^d_{t+1}, 
		\\
		&\geq \frac{d}{d+1} \left( p'_t Q^{d+1}_t+(1-p'_t)\tilde{R}^{d+1}_{t+1}\right)
		\end{align*}
		The second inequality holds because $\tilde{R}^d_{t+1} \geq \tilde{R}^d_{t+\ell}$ ($\ell \geq 1$), and the last inequality follows from the induction hypothesis. 
		Therefore, we have 
		\begin{align*}
		    \tilde{R}^d_t &\geq  \max \left\{
		    \frac{d}{d+1}(p'_tQ^d_t+(1-p'_t)\tilde{R}^d_{t+1})+\frac{1}{d+1}\tilde{R}^d_{t+1}, 
		    \tilde{R}^d_{t+1} \right\}\\
		    &\geq \frac{d}{d+1} \max \left( p'_t Q^{d+1}_t+(1-p'_t)\tilde{R}^{d+1}_{t+1}, \tilde{R}^d_{t+1} \right)\\
		    &= \frac{d}{d+1} \tilde{R}^{d+1}_{t}.
		\end{align*}
		\end{proof}

		By~\eqref{eq:R} with the above lemma, we have the following relationships:
		\begin{align}
		    \tilde{R}^d_{t} \geq  \max\left\{ p_t\hat{Q}^d_t+(1-p_t)\tilde{R}^d_{t+1}, \tilde{R}^d_{t+1}\right\},
		    \label{eq:R_replaced}
		\end{align}
		where $\hat{Q}^d_t = q_t(w_t+\sum_{\ell \geq 1} \Pr[C_t = \ell] \tilde{R}^d_{t+\ell}) + \frac{d-1}{d}(1-q_t)\tilde{R}^{d}_{t+1}$.
		
		Recall that it suffices to give a lower bound on $\tilde{R}^{\Delta_u}_1$ 
		to obtain the competitive ratio of Algorithm~\ref{alg}.
		By \eqref{eq:R_replaced}, we can construct a linear program to bound $\tilde{R}^{\Delta_u}_1$ in a similar way to the previous subsection.

		\begin{theorem}\label{thm:main-general}
		Algorithm~\ref{alg} is a $\Delta/(3\Delta-1)$-competitive algorithm, where $\Delta = \max_{u \in U} \Delta_u$.
		\end{theorem}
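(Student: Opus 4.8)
The plan is to follow the template of the unlimited-rejection case (Section~\ref{sec:unlimited}), now starting from the strengthened recursion~\eqref{eq:R_replaced} that already folds in Lemma~\ref{lem:general}. As in Section~\ref{sec:modify}, the total expected profit equals $\sum_{u\in U} R^{\Delta_u}_{u,1}$, and by Lemmas~\ref{lem:offlineOPT} and~\ref{lem:replace} it suffices, for each fixed $u$, to lower bound $\tilde{R}^{\Delta_u}_1$ against $\sum_{e\in E_u}\sum_t w_e q_e x^*_{e,t}=\sum_t B_t w'_t$, where $B_t=p'_tq'_t$; the competitive ratio is then the minimum over $u$ of these ratios. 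Fix $u$ and write $d=\Delta_u$.

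First I would turn~\eqref{eq:R_replaced} into a factor-revealing LP exactly as~\eqref{eq:LP-alg-primal-1}, minimizing $R_1$ subject to $R_t\ge B_t w'_t+\hat{A}_tR_{t+1}+B_t\sum_{\ell\ge 2}\Pr[C_t=\ell]R_{t+\ell}$, the monotonicity $R_t\ge R_{t+1}$, $R_T\ge B_Tw'_T$, and $R_t\ge 0$. The only change from the unlimited case is that the coefficient $A_t$ becomes
\[
\hat{A}_t = 1 - B_t\Pr[C_t\ge 2] - \tfrac{1}{d}\,p'_t(1-q'_t),
\]
i.e.\ it acquires an extra rejection-loss term $\tfrac{1}{d}p'_t(1-q'_t)$ coming from the $\tfrac{d-1}{d}$ factor in $\hat{Q}^d_t$. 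I would then dualize, obtaining the same shape as~\eqref{eq:LP-alg-dual-1} with $A_t$ replaced by $\hat{A}_t$, and try to exhibit a feasible dual solution of objective $\gamma\sum_t B_tw'_t$ with $\gamma=\Delta/(3\Delta-1)$; by weak duality this lower-bounds $R_1$ and hence $\tilde{R}^{d}_1$.

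Mirroring Lemma~\ref{lem:feasible-unlimited}, I would take $\alpha_t\equiv\gamma$ and define $\beta_t$ by the equalities of the first dual constraints, obtaining the closed form
\[
\beta_t = 1-\gamma-\gamma\sum_{t'<t}\Bigl(B_{t'}\Pr[C_{t'}\ge t-t'+1] + \tfrac{1}{d}\,p'_{t'}(1-q'_{t'})\Bigr),
\]
the new summand being exactly the rejection-loss accumulated over $t'<t$. Checking the remaining dual constraints reduces to bounding the accumulated load $\Phi(t):=\sum_{t'<t}(B_{t'}\Pr[C_{t'}\ge t-t'+1]+\tfrac{1}{d}p'_{t'}(1-q'_{t'}))$. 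The decisive point is to combine the occupation constraint~\eqref{eq:offlineLP-1}, which controls $\sum_{t'<t}B_{t'}\Pr[C_{t'}\ge t-t'+1]$, with the \emph{strengthened} rejection constraint~\eqref{eq:offlineLP-2}: writing $1-q_e\Pr[C_e\le T-t]=(1-q_e)+q_e\Pr[C_e\ge T-t+1]$, constraint~\eqref{eq:offlineLP-2} simultaneously caps the rejection mass and the occupation-to-horizon mass by $d$. At the final dual constraint this yields $\Phi(T)\le 2-\tfrac{1}{d}$, so the binding requirement becomes $\gamma(1+\Phi(T))\le 1$, i.e.\ $\gamma\le\tfrac{1}{3-1/d}=\tfrac{d}{3d-1}$, giving exactly the claimed constant.

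The main obstacle is this dual feasibility: bounding the two loads separately by $1$ each only certifies $\gamma\le 1/3$, and the whole improvement to $\Delta/(3\Delta-1)$ hinges on using~\eqref{eq:offlineLP-2} in full strength, so that the rejection load and the occupation load cannot both be maximal. The most delicate point is verifying $\beta_t\ge 0$ at the interior times $t<T$, where the budget may be spent early and the occupation weights $\Pr[C_{t'}\ge t-t'+1]$ no longer align with the horizon weights in~\eqref{eq:offlineLP-2}; this is precisely where the strengthened formulation (rather than the weaker $\sum_{e,t}x_{e,t}(1-q_e)\le\Delta_u$) is essential, and it may force $\alpha_t$ to be chosen nonconstant at times carrying little profit weight $B_tw'_t$ so as to preserve both $\beta_t\ge 0$ and the objective lower bound. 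Finally, since $d/(3d-1)$ is decreasing in $d$, the minimum over $u$ is attained at $d=\Delta=\max_u\Delta_u$, yielding the stated competitive ratio.
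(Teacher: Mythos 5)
Your proposal follows the paper's proof essentially line by line: the same reduction to lower-bounding $\tilde{R}^{\Delta_u}_1$ via Lemmas~\ref{lem:offlineOPT} and~\ref{lem:replace}, the same use of Lemma~\ref{lem:general} through \eqref{eq:R_replaced}, the same factor-revealing LP \eqref{eq:LP-alg-primal-2} (your $\hat{A}_t$ is exactly the paper's $A_t$), the same dual \eqref{eq:LP-alg-dual-2}, the same constant solution $\alpha_t\equiv\gamma$ with $\beta_t$ given by the closed form \eqref{eq:beta}, and the same splitting of the accumulated load into an \eqref{eq:offlineLP-2}-type term and an \eqref{eq:offlineLP-1}-type term to reach $\gamma=\Delta_u/(3\Delta_u-1)$. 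However, the step you leave open --- verifying $\beta_t\ge 0$ at interior times $t<T$ --- is not a cosmetic detail: without it the dual certificate is infeasible, weak duality gives nothing, and your proposal as written does not prove the theorem. So there is a genuine gap.

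What is worth recording is that your suspicion about this step is exactly on target, because it is also the weakest point of the paper's own proof. The paper dispatches it with the one-line claim that $\theta \le 1+\frac{\Delta_u-1}{\Delta_u}$ ``by the constraints of LP (\texttt{Off})'', which tacitly applies \eqref{eq:offlineLP-2} with the prefix tail weights $\Pr[C_e\ge t-\ell+1]$ in place of the horizon weights $\Pr[C_e\ge T-\ell+1]$; for $t<T$ this substitution goes the wrong way, and the claimed pointwise bound is false in general. Concretely, take $T=4$, $\Delta_u=1$, a single agent, $v_1$ arriving surely at time $1$ with $q_{e_1}=\epsilon$ and $C_{e_1}\equiv 1$, and $v_2$ arriving surely at time $2$ with $q_{e_2}=1$ and $C_{e_2}\equiv 2$: then $x^*_{e_1,1}=x^*_{e_2,2}=1$ is the LP optimum, and at $t=3$ one gets $\theta=(1-\epsilon)+1=2-\epsilon$, far above the claimed bound of $1$, so $\beta_3<0$ at $\gamma=1/2$. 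The theorem's conclusion still holds in that instance: the primal LP \eqref{eq:LP-alg-primal-2} has value at least $\frac{1}{2}\sum_t B_t w'_t$, certified by a dual solution with $\alpha_3=0$ (a time with $B_3=0$), which is precisely the nonconstant-$\alpha_t$ repair you gesture at. But neither your proposal nor the paper supplies such a construction in general, so completing the proof genuinely requires making that repair rigorous (e.g., zeroing out $\alpha_t$ at suitable times and controlling the objective loss), not merely citing \eqref{eq:offlineLP-2}.
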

		\begin{proof}
		By the discussion in Section~\ref{sec:limited}, we show a lower bound on $\tilde{R}^{\Delta_u}_1$ given in \eqref{eq:R_replaced}.
		By \eqref{eq:R_replaced}, the optimal value of the following LP is a lower bound on $\tilde{R}_1^{\Delta_u}$.
		\begin{align}
		    \begin{array}{rl}
		        \min  & R_1  \\
		        \text{s.t.}  & R_{t} \geq B_t  w'_t + A_t R_{t+1} +  B_t\sum_{\ell =2}^T \Pr[C_t=\ell] R_{t+\ell} \ \ (t \in [T-1])\\
		        & R_t \geq R_{t+1} \quad  ( t \in [T-1]) \\
		        & R_T \geq B_T w'_T  \\
		        & R_t \geq 0 \quad ( t \in [T]),
		    \end{array}
		    \label{eq:LP-alg-primal-2}
		\end{align}
		where $B_t=p'_tq'_t$ and $A_t = p'_tq'_t \Pr[C_t=1]+p'_t(1-q'_t)\frac{\Delta_u-1}{\Delta_u}+(1-p'_t)$.
		The dual of LP \eqref{eq:LP-alg-primal-2} is
		\begin{align}
		    \begin{array}{rl}
		        \max  & \sum_{t} B_t w'_t \alpha_t \\ 
		        \text{s.t.} & \alpha_1+\beta_1 \leq 1\\
		        & \alpha_2+\beta_2 \leq A_1\alpha_1+\beta_1 \\
		        & \alpha_t+\beta_t \leq \sum_{\ell =1}^{t-2} B_\ell \Pr[C_\ell = t-\ell] \alpha_\ell +A_{t-1}\alpha_{t-1}+\beta_{t-1} \  (3 \leq t \leq T-1) \\
		        & \alpha_T \leq \sum_{\ell =1}^{T-2} B_\ell \Pr[C_\ell = T-\ell] \alpha_\ell +A_{T-1}\alpha_{T-1}+\beta_{T-1}\\
		        &\alpha_t \geq 0 \quad (t \in [T])\\
		        &\beta_t \geq 0 \quad (t \in [T-1]).
		    \end{array}
		    \label{eq:LP-alg-dual-2}
		\end{align}
		In a similar way to the previous subsection, we present a feasible solution to LP~\eqref{eq:LP-alg-dual-2}.
		We set $\alpha_t=\gamma$ ($t\leq T$), $\beta_1=1-\gamma$, $\beta_2 = 1-\gamma - (1-A_1)\gamma$. 
		For $t \geq 3$, we set
		\begin{align}
		    \beta_t
		        &= 1-\gamma - \gamma \cdot \frac{1}{\Delta_u} \sum_{\ell <t} p'_\ell (1-q'_\ell)- \gamma \sum_{\ell <t} p'_\ell q'_\ell \Pr[C_\ell \geq t-\ell+1]. \label{eq:beta}
		    \end{align}
		It is not difficult to see that the above solution satisfies the constraints in \eqref{eq:LP-alg-dual-2} except the nonnegativity ones. 
Let us find $\gamma$ that makes the solution nonnegative.
By definitions of $\alpha_t$ ($t\in [T]$) and $\beta_1$, $\gamma$ has to satisfy $0 \leq \gamma \leq 1$.

For $\beta_2$, 
it holds that $1-A_1 \leq 1$ since $A_1 \geq 0$, 
and hence $\beta_2 \geq 1-2\gamma$. 

For $t \geq 3$, we have $\beta_t = 1-\gamma - \gamma \theta$, where
\begin{align*}
\theta &= \frac{1}{\Delta_u}\sum_{\ell <t} p'_\ell (1-q'_\ell) + \sum_{\ell <t} p'_\ell q'_\ell \Pr[C_\ell \geq t-\ell+1]\\
& = \frac{1}{\Delta_u}\sum_{\ell <t} \sum_{e \in E_u}  x^*_{e,\ell}\left\{(1-q_e) +q_e  \Pr[C_\ell \geq t-\ell+1]\right\} + \frac{\Delta_u-1}{\Delta_u}
\sum_{\ell <t} \sum_{e \in E_u} x^*_{e,\ell}q_e  \Pr[C_\ell \geq t-\ell+1]
\end{align*}	
Note that $\theta \leq 1 + \frac{\Delta_u-1}{\Delta_u}$ by the constraints of LP {\rm (\texttt{Off})}.
Therefore, 
the dual solution is feasible to \eqref{eq:LP-alg-dual-2} 
when $\gamma \ge 1/(2+(\Delta_u-1)/\Delta_u) = \Delta_u/(3\Delta_u-1)$, that achieves
the objective value $\frac{\Delta_u}{3\Delta_u-1}\cdot \sum_{t} p'_t q'_t w'_t$.
This implies that $\tilde{R}^{\Delta_u}_1 \geq \frac{\Delta_u}{3\Delta_u-1}\cdot \sum_{t} p'_t q'_t w'_t$.
		
Summarizing the above discussion, we have $\mathbb{E}[\ALG(I)] = \sum_{u\in U} R^{\Delta_u}_1  \geq \frac{\Delta}{3\Delta-1}\cdot \sum_{t} \sum_{e} w_e q_e x^*_{e,t} \geq\frac{\Delta}{3\Delta-1}  \mathbb{E}[\OPT(I)]$ by Lemma~\ref{lem:offlineOPT}.
\end{proof}

\subsection{Analysis for the KIID Model with Non-reusable Agents}

In this subsection, we also show that our algorithm can be $\frac{1}{2}\left(1-\frac{1}{e^2}\right)$-competitive in the KIID model with non-reusable agents. 
We note that our algorithm has a better competitive ratio than the $\frac{1}{e}$-competitive algorithm by \citet{Nanda2020}.

\begin{theorem}\label{thm:KIID}
There is a $(\frac{1}{2-1/\Delta}\left(1-\frac{1}{e^{2-1/\Delta}}\right))$-competitive algorithm for the problem in the KIID model with non-reusable resources, where $\Delta = \max_{u \in U} \Delta_u$.
\end{theorem}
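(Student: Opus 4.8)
The plan is to reuse the per-vertex reduction of Section~\ref{sec:limited} and specialize the factor-revealing recursion to the KIID non-reusable regime, where two simplifications occur. First, since $\Pr[C_e = T] = 1$, an agent that accepts never returns within the horizon, so $\Pr[C_e \geq t - t' + 1] = 1$ and $\Pr[C_e \leq T - t] = 0$; consequently \eqref{eq:offlineLP-1} collapses to the cumulative budget $\sum_{t' \leq t}\sum_{e \in E_u} x_{e,t'} q_e \leq 1$ and \eqref{eq:offlineLP-2} to $\sum_t \sum_{e \in E_u} x_{e,t} \leq \Delta_u$, while every occupation term $\sum_\ell \Pr[C_t = \ell]\tilde{R}^{d}_{t+\ell}$ appearing in $\hat{Q}^d_t$ of \eqref{eq:R_replaced} vanishes. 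Second, because the arrival distribution is stationary ($p_{v,t} = p_v$), I would argue that one may take the LP optimum to be time-homogeneous: replacing any optimal $x^*$ by its time-average $\bar{x}_{e,t} = \frac1T \sum_s x^*_{e,s}$ preserves the objective and constraints \eqref{eq:offlineLP-2}--\eqref{eq:offlineLP-4} by symmetry in $t$, while the collapsed \eqref{eq:offlineLP-1} at time $t$ reads $\frac{t}{T}\sum_s\sum_e x^*_{e,s}q_e \leq \frac{t}{T} \leq 1$, so feasibility survives. Feeding this homogeneous solution to Algorithm~\ref{alg} makes the modified-instance parameters $p'_t = p'$, $q'_t = q'$, $w'_t = w'$ constant in $t$.

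With these simplifications, combining Lemma~\ref{lem:general} through relation \eqref{eq:R_replaced}, the lower bound $R_t := \tilde{R}^{\Delta_u}_t$ obeys a constant-coefficient recursion $R_t \geq \max\{B w' + A R_{t+1},\, R_{t+1}\}$, where $B = p'q'$ and $A = p'(1-q')\frac{\Delta_u-1}{\Delta_u} + (1-p')$ (the $\Pr[C_t = 1]$ term drops out). This is exactly LP \eqref{eq:LP-alg-primal-2} with $A_t \equiv A$, $B_t \equiv B$ and vanishing occupation terms, so I would solve the minimizing recursion in closed form: the first branch dominates along the recursion (since $(1-A)R_{t+1} = Bw'(1-A^{T-t}) \leq Bw'$), yielding $R_1 \geq B w' \cdot \frac{1 - A^T}{1 - A}$.

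I would then convert this into a competitive ratio by dividing by the LP objective $\sum_t B_t w'_t = T B w' = \sum_e\sum_t w_e q_e x^*_{e,t}$, giving $\frac{R_1}{TBw'} \geq \frac{1 - A^T}{T(1-A)}$. A short computation gives $1 - A = \frac{p'}{\Delta_u} + p'q'\bigl(1 - \tfrac1{\Delta_u}\bigr)$, so the two collapsed constraints $Tp'q' \leq 1$ and $Tp' \leq \Delta_u$ yield $T(1-A) \leq 2 - \frac{1}{\Delta_u}$. Writing $c' = T(1-A)$, we have $A = 1 - c'/T$ and hence $A^T \leq e^{-c'}$, so the ratio is at least $\frac{1 - e^{-c'}}{c'}$; since $c \mapsto (1-e^{-c})/c$ is decreasing and $c' \leq 2 - 1/\Delta_u \leq 2 - 1/\Delta$, this is at least $\frac{1}{2-1/\Delta}\bigl(1 - e^{-(2-1/\Delta)}\bigr)$. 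Summing $R^{\Delta_u}_{u,1} \geq \tilde{R}^{\Delta_u}_1$ over $u$ via Lemma~\ref{lem:replace} and invoking Lemma~\ref{lem:offlineOPT} completes the argument.

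I expect the main obstacle to be the time-homogenization step: one must verify that averaging does not break the cumulative form of \eqref{eq:offlineLP-1}, which is exactly where KIID stationarity and the factor $t/T \leq 1$ are essential, whereas in the reusable case the per-time coefficients $\Pr[C_e \geq t-t'+1]$ would obstruct this averaging. The remaining ingredients (the geometric closed form, the bound $(1-c'/T)^T \leq e^{-c'}$, and the monotonicity of $(1-e^{-c})/c$) are routine; a secondary point worth checking carefully is that the first branch of the recursion is genuinely binding throughout, so that the explicit geometric sum, rather than merely a dual-feasible lower bound, is justified.
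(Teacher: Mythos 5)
Your proposal is correct and follows essentially the same route as the paper: time-averaging the LP optimum (valid precisely because the non-reusable KIID constraints collapse as you describe), passing to the single-vertex modified instance with constant $p',q',w'$, and extracting the geometric-sum bound $Bw'\tfrac{1-A^T}{1-A}$ together with $T(1-A)\leq 2-1/\Delta_u$ and $A^T\leq e^{-T(1-A)}$. The only cosmetic difference is that you solve the factor-revealing primal recursion in closed form (which requires your bindingness check, justified by the nonnegativity of the coefficients), whereas the paper certifies the identical value through the dual feasible solution $\alpha_t=A^{t-1}$, $\beta_t=0$ of \eqref{eq:LP-alg-dual-kiid}, which sidesteps that verification.
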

\begin{proof}
First, we assume that $T\geq 2$ because when $T=1$, we can attain the offline optimal value in expectation by assigning an offline vertex with the highest expected profit.
Then in Algorithm~\ref{alg}, we choose an optimal solution $x^*$ to LP (\texttt{Off}) as the one satisfying $x^*_{e,1} =\cdots = x^*_{e,T}$ for any $e\in E$.
We can do this because a solution defined by $x'_{e,t}=\frac{\sum_\ell x^*_{e,\ell}}{T}$ is also an optimal solution to LP (\texttt{Off}).
Here, since $\Pr[C_e = T]=1$ for all $e\in E$ and $p_{v,1}=\cdots=p_{v,T}=p_v$ for each $v\in V$, LP (\texttt{Off}) is rewritten as
\begin{align}
			\text{max} \ & \sum_{e} w_e q_e \sum_{t\in [T]}x_{e,t}  \\ 
			\text{s.t.} \ & \sum_{e\in E_u} \sum_{t\in [T]}x_{e,t} q_e \leq 1 \quad (\forall u\in U)\\
			&  \sum_{e\in E_u} \sum_{t\in [T]} x_{e,t} \leq \Delta_u \quad (\forall u\in U) \label{eq:offlineLP-kiid}\\
			& \sum_{e\in E_v} x_{e,t} \leq p_{v}b_v \quad (\forall v\in V, t\in [T])  \\
			& 0\leq x_{e,t}\leq p_{v} \quad (\forall e\in E). 
\end{align}

Let us focus on the modified instance described in Section~\ref{sec:modify} for an offline vertex $u\in U$.
By the definition, $p'_t = \sum_{e\in E_u} x^*_{e,t}$ does not depend on $t$, and similarly for $q'_t$, $w'_t$.
For the occupation time $C_t$, we have $\Pr[C_t = t]=0$ for all $t<T$ and $\Pr[C_t = T] = 1$.
Thus, we write $p'_t = p'$, $q'_t=q'$ and $w'_t=w'$.

We can write the LP \eqref{eq:LP-alg-dual-2} as
\begin{align}
		    \begin{array}{rl}
		        \max \ & p'q' w' \sum_{t} \alpha_t \\ 
		        \text{s.t.}\ & \alpha_1+\beta_1 \leq 1,\\
		        & \alpha_t + \beta_t \leq A\alpha_{t-1} +\beta_{t-1} \ (2 \leq t \leq T-1), \\
		        & \alpha_T \leq A\alpha_{T-1}+\beta_{T-1}\\
		        &\alpha_t \geq 0 \quad (t \in [T]) \\
		        &\beta_t \geq 0 \quad (t \leq T-1),
		    \end{array}
		    \label{eq:LP-alg-dual-kiid}
		\end{align}
		where $A = p'(1-q')\frac{\Delta_u-1}{\Delta_u}+(1-p') = 1-p'/\Delta_u-p'q'(\Delta_u-1)/\Delta_u \geq 0$.

Similarly to Theorems~\ref{thm:main-unlimited} and \ref{thm:main-general}, we present a feasible solution to the above LP to show a lower bound on $R^{\Delta_u}_1$.
Let $\alpha_1 = 1$ and $\beta_1=0$.
For $t\geq 2$, let $\alpha_t = A\alpha_{t-1}$, and $\beta_t=0$.
It is clear that $\alpha_t = A^{t-1} \geq 0$ and $\beta_t \geq 0$ for all $t$.
We see that $\alpha_t+\beta_t = A\alpha_{t-1}+\beta_{t-1}$ for all $2 \leq t\leq T-1$, and $\alpha_T = A\alpha_{T-1}+\beta_{T-1}$.
Thus, this solution is feasible to LP~\eqref{eq:LP-alg-dual-kiid}.

We can prove the result by evaluating a ratio between $\sum_{t} p'q'w'\alpha_t$ and $\sum_{t} p'q'w' = p'q'w'T$.
For each $t\geq 1$, we have $\sum_t \alpha_t = \sum_{t} A^{t-1}$.
We observe that $p'T = \sum_{t}\sum_{e\in E_u} x^*_{e,t}\leq \Delta_u$ and $p'q'T =  \sum_t \sum_{e\in E_u} x^*_{e,t}q_e\leq 1$.
Thus, it holds that 
\[
A = 1-\frac{p'+(\Delta_u-1)p'q'}{\Delta_u} \geq 1-\frac{2-1/\Delta_u}{T}.
\]
We have $1-\frac{2-1/\Delta_u}{T} > 0$ because $T\geq 2$, and hence
\begin{align*}
    \frac{\sum_t \alpha_t}{T}
    &=\frac{1}{T}\sum_t A^{t-1}\\
    &\geq \frac{1}{T}\sum_t \left(1-\frac{2-1/\Delta_u}{T}\right)^{t-1} \\
    &=\frac{1}{2-1/\Delta_u}\left(1-\left(1-\frac{2-1/\Delta_u}{T}\right)^{T}\right)
    \geq \frac{1}{2-1/\Delta_u}\left(1-\frac{1}{e^{2-1/\Delta_u}}\right).
\end{align*}
Therefore, by the same discussion as Theorem~\ref{thm:main-general}, 
we see that
\begin{align*}
R^{\Delta_u}_1 &\geq \sum_{t} p'q'w'\alpha_t \\
&\geq \frac{1}{2-1/\Delta_u}\left(1-\frac{1}{e^{2-1/\Delta_u}}\right) \sum_{t\in [T]}\sum_{e\in E_u} w_e q_e x^*_{e,t}.
\end{align*}
This completes the proof of Theorem~\ref{thm:KIID}.
\end{proof}

		\section{Hardness}\label{sec:hardness}
In this section, we prove that our problem has competitive ratio at most $1/2$. 
In fact, this is easily shown with an example given by \cite{krengel1977semiamarts}.

\begin{theorem}\label{thm:hardness}
No online algorithm for the problem is $(\frac{1}{2}+\delta)$-competitive for any $\delta>0$ even when $q_e=1$ for all $e\in E$.
\end{theorem}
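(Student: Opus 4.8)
The plan is to exhibit a single-agent instance on which every online algorithm loses a factor of at least $1/2$ against the offline optimum, using the classical prophet-inequality lower bound of \citet{krengel1977semiamarts}. Because the theorem only asks for a bound, one clean instance suffices, and I want it to lie inside our model with $q_e=1$ (so rejections and the budget $\Delta_u$ play no role), $b_v=1$, and occupation times chosen so that an accepted task removes the agent for the remainder of the horizon. The natural choice is $U=\{u\}$, a two-step horizon $T=2$, and two online types whose values mimic the two-random-variable prophet instance: at time $1$ a deterministic reward of value $1$ arrives with probability $1$, and at time $2$ a reward of value $1/\varepsilon$ arrives with probability $\varepsilon$ (and nothing arrives otherwise). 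Setting $\Pr[C_e = T]=1$ for every edge guarantees that accepting the time-$1$ task makes $u$ unavailable at time $2$, so the algorithm faces a genuine take-it-or-leave-it choice between the two rewards.

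The steps, in order, are as follows. First I would formally instantiate the model: specify $G$, the arrival probabilities $p_{v,t}$, the weights $w_e$, set $q_e=1$, $b_v=1$, and the occupation-time distribution $\Pr[C_e=T]=1$, and note that under these choices reusability and rejection constraints are vacuous, so the instance is exactly a two-stage single-choice problem. Second, I would compute the offline optimal value $\mathbb{E}_{I\sim\cI}[\OPT(I)]$: the offline algorithm sees the realized sequence, so it takes the large reward whenever the rare high-value type arrives (contributing $\varepsilon\cdot\frac{1}{\varepsilon}=1$) and otherwise takes the guaranteed reward of value $1$ (contributing $(1-\varepsilon)\cdot 1$), giving $\OPT = 2-\varepsilon$. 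Third, I would bound any online algorithm: at time $1$ the algorithm must irrevocably decide whether to accept the value-$1$ reward without knowing whether the large reward will materialize. If it accepts, it earns exactly $1$; if it declines, it earns only $\varepsilon\cdot\frac{1}{\varepsilon}=1$ in expectation. Either way $\mathbb{E}_{I\sim\cI}[\ALG(I)]\le 1$. Fourth, I would combine these: the ratio is at most $\frac{1}{2-\varepsilon}$, which tends to $1/2$ as $\varepsilon\to 0$, so no algorithm can be $(\frac12+\delta)$-competitive for any fixed $\delta>0$.

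The one subtlety I want to handle carefully is that our model allows \emph{adaptive} and \emph{randomized} online algorithms and also permits assigning the agent at time $2$ even after declining at time $1$; I therefore phrase the time-$1$ decision as a (possibly randomized) probability $\rho$ of accepting the guaranteed reward and verify that the total expected profit is $\rho\cdot 1 + (1-\rho)\cdot\varepsilon\cdot\frac{1}{\varepsilon} = 1$ independently of $\rho$, so randomization provides no escape. I also want to confirm the instance genuinely sits in the setting of \citet{Dickerson2018} as claimed in the introduction, which it does since $q_e=1$ and agents are reusable with the degenerate occupation time. I do not anticipate a real obstacle here — the content is essentially the textbook prophet inequality — but the part demanding the most care is the bookkeeping that shows every online strategy, adaptive or randomized and with the option of a delayed assignment, yields expected profit at most $1$, so that the lower bound is not an artifact of restricting the algorithm class.
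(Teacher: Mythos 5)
Your proposal is correct and follows essentially the same route as the paper: both instantiate the classical two-stage prophet-inequality instance (a guaranteed reward of value $1$ at time $1$, a reward of value $1/\varepsilon$ arriving with probability $\varepsilon$ at time $2$, with $q_e=1$ and $\Pr[C_e=T]=1$), compute the offline optimum $2-\varepsilon$, and show every online algorithm, randomized or adaptive, earns exactly $1$ in expectation. The only cosmetic difference is that the paper realizes the ``nothing arrives'' branch as a dummy zero-weight type $c$, which is equivalent under the model's convention that no arrival may occur.
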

\begin{proof}
Let us consider the following instance, which is well-known for the prophet inequality:
    $U=\{u\}$, $V=\{a,b,c\}$, $T=2$, $p_{a,1}=1, p_{a,2}=0$, $p_{b,1}=0$, $p_{b,2}=\epsilon$, $p_{c,1}=0$, $p_{c,2}=1-\epsilon$, $w_{u,a}=1$, $w_{u,b}=1/\varepsilon$, $w_{u,c}=0$, $C_e$ always takes $T$ ($\forall e\in E$) and $q_e=1$ ($\forall e\in E$).
    
    The optimal offline algorithm chooses $b$ at $t=2$ if $b$ comes, and chooses $a$ at $t=1$ if $c$ comes. The expected profit of the optimal algorithm is $\varepsilon\cdot 1/\varepsilon + (1-\varepsilon)=2-\varepsilon$.
    On the other hand, the expected profit of any algorithm is at most $1$. 
    If the algorithm chooses $a$ at $t=1$, then it obtains $1$. 
    Otherwise, the algorithm obtains $1$ in expectation. 
    Thus, the competitive ratio is at most $1/(2-\varepsilon)$. 
    \end{proof}

\section{Experiments}
In this section, we present experimental results to evaluate the performance of Algorithm~\ref{alg} on synthetic datasets and real-world based ones.

\paragraph{Experiment Environment}
All experiments were conducted on a server equipped with two Xeon E5-2699v3 processors and 768GB of memory, whose OS is CentOS 7.1. 
We implement algorithms by Python 3.7 and use Gurobi Optimizer 8.1.1 to solve LP~(\texttt{Off}).

\paragraph{Instance Setting}
Here we focus on the following four settings (a)--(d).
For the real-world datasets, we consider eleven more settings, which are detailed in Appendix~\ref{sec:experiments-appendix}.
The setting (a) is the KIID model with non-reusable agents~(i.e., $\Pr[C_e =T]=1$ for all $e\in E$) and $\Delta_u<T$ ($\forall u\in U$).
This coincides with that of \citet{Nanda2020}. 
The settings (b)--(d) are the KAD model with reusable agents.
In the setting (b), which is that of \citet{Dickerson2018}, offline vertices always accept assignments~(i.e., $q_e=1$ for all $e\in E$), while they may reject in (c) and (d).
We set $\Delta_u<T$ ($\forall u\in U$) in (c) and $\Delta_u=+\infty$ in (d).
For each setting, we generate one instance and $1000$ input sequences, 
and focus on the average profits and runtime for evaluation.

\paragraph{Baseline Algorithms}
We compare our algorithm with the following baseline algorithms:
(1) {Random} chooses offline vertices uniformly at random. 
(2) Greedy assigns at most $b_v$ available offline vertices $u$ to $v$ in the order of $w_{(u,v)}q_{(u,v)}$.
(3) NAdap* is a simple extension of NAdap proposed by \citet{Nanda2020}.
NAdap* solves LP~(\texttt{Off}) and to assign a set of offline vertices with probability $\lambda^{v,t}_k$.
When $b_v=1$ for all $v$, NAdap* coincides with NAdap in the setting (a), and is also used in experiments in \cite{Dickerson2018}.
In our experiments, we do not evaluate the algorithm by \citet{Dickerson2018} due to its impractical assumption.

\subsection{Synthetic Dataset}
We generate synthetic datasets similarly to \cite{Nanda2020} as follows.
We set $|U|=30$, $|V|=100$, and $T=200$. 
For each $u\in U$ and $v\in V$, an edge $(u,v)$ exists in $E$ with probability $0.1$.
For each $e\in E$, we set $q_e \sim U(0.5,1)$ (uniform distribution) and $w_e \sim U(0,1)$, respectively.
For each $u\in U$ and $e\in E_u$, the distribution of $C_e$ for reusability is defined as a binomial distribution $B(20, \eta_u)$, where $\eta_u\sim U(0,1)$.
For the settings (a) and (c), $\Delta_u$ is drawn uniformly at random from $\{1,2,3\}$ for each $u\in U$.
We set the same $b_v$ for all $v$ from $\{2, 4, 6, 8, 10\}$.

Figure~\ref{fig:synthetic} plots the ratios between average profits and the LP optimal values.
We see that our algorithm (the black solid line) performs the best in almost all cases. 
Moreover, it obtains more than half of the LP optimal value, while Random and Greedy sometimes fail.
NAdap* performs almost the same as us in (a), but is worse in (b)--(d). 

All the algorithms run within less than 1 second for processing online vertices.
Our algorithm needs additionally around 6 seconds on average because of the preprocess of solving LP (\texttt{Off}) and computing the table of $R^d_{u,t}$'s.
For the detailed results, see Appendix~\ref{sec:runtime}.

\begin{figure}[htbp]
    \centering
    \begin{minipage}[b]{0.24\linewidth}
    \includegraphics[keepaspectratio, width=\linewidth]{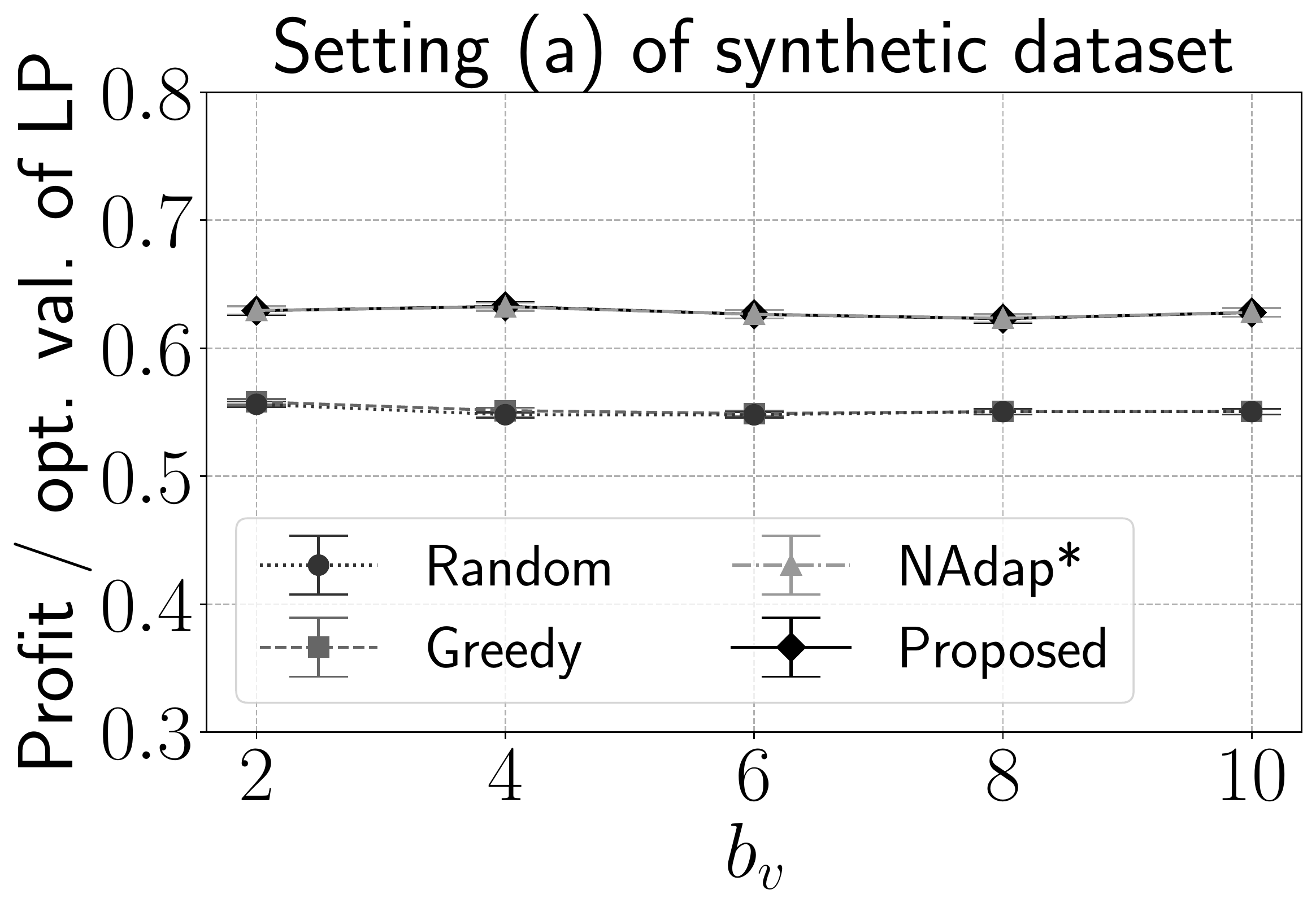}
    \subcaption{KIID without reusability}
    \end{minipage}
    \begin{minipage}[b]{0.24\linewidth}
    \includegraphics[keepaspectratio, width=\linewidth]{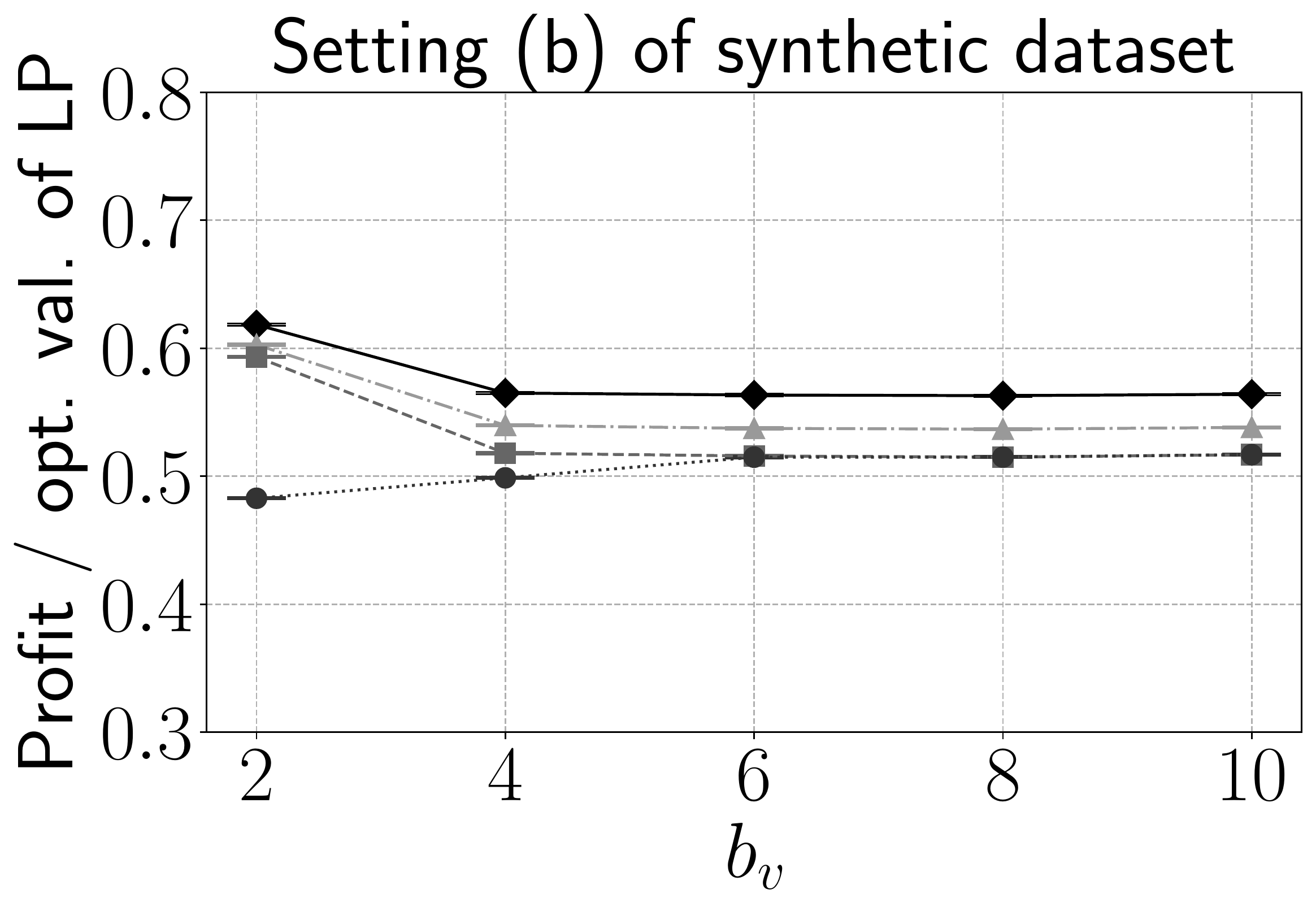}
    \subcaption{KAD without rejection}
    \end{minipage}
    \begin{minipage}[b]{0.24\linewidth}
    \includegraphics[keepaspectratio, width=\linewidth]{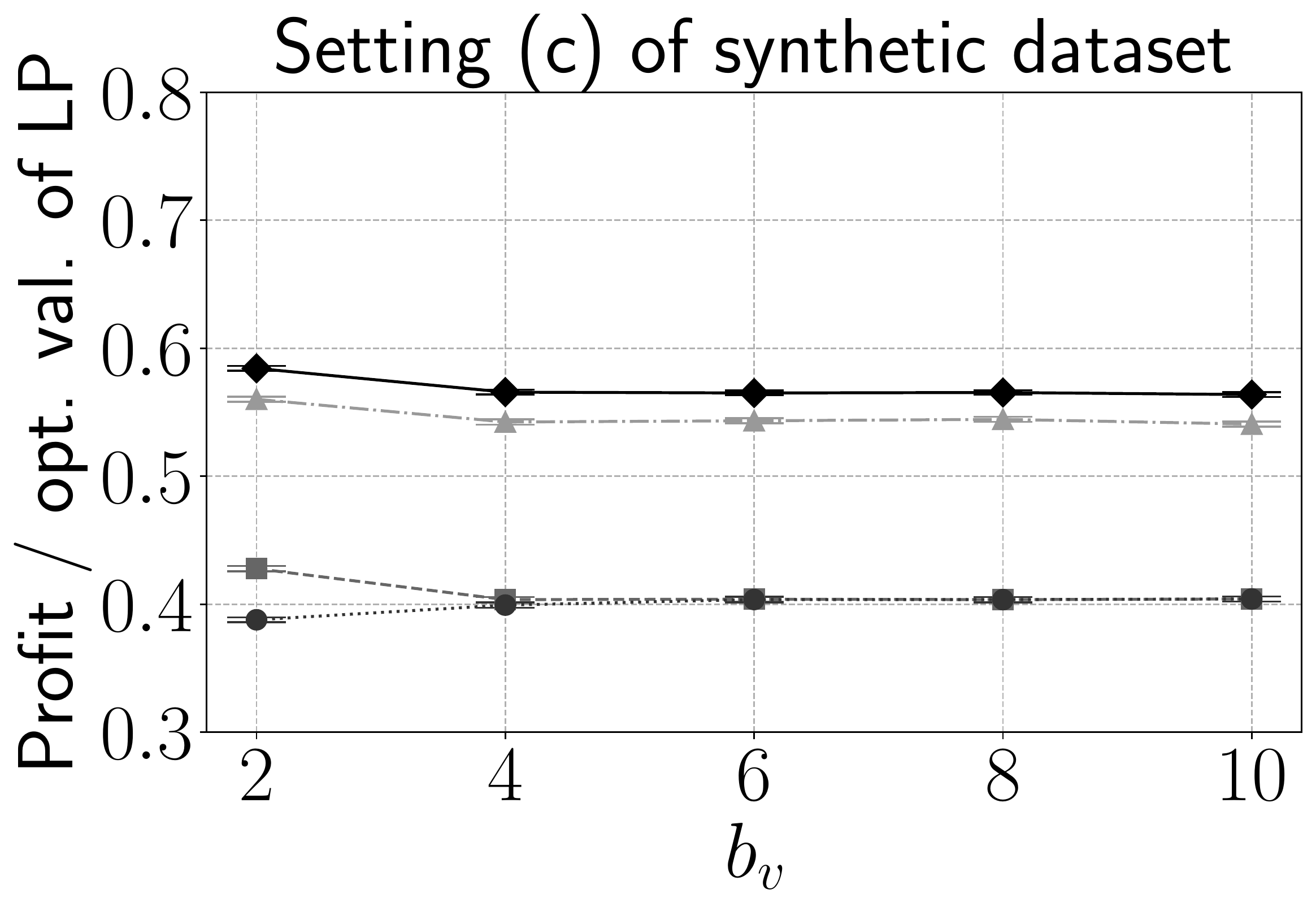}
    \subcaption{KAD with $\Delta_u \leq 3$}
    \end{minipage}
    \begin{minipage}[b]{0.24\linewidth}
    \includegraphics[keepaspectratio, width=\linewidth]{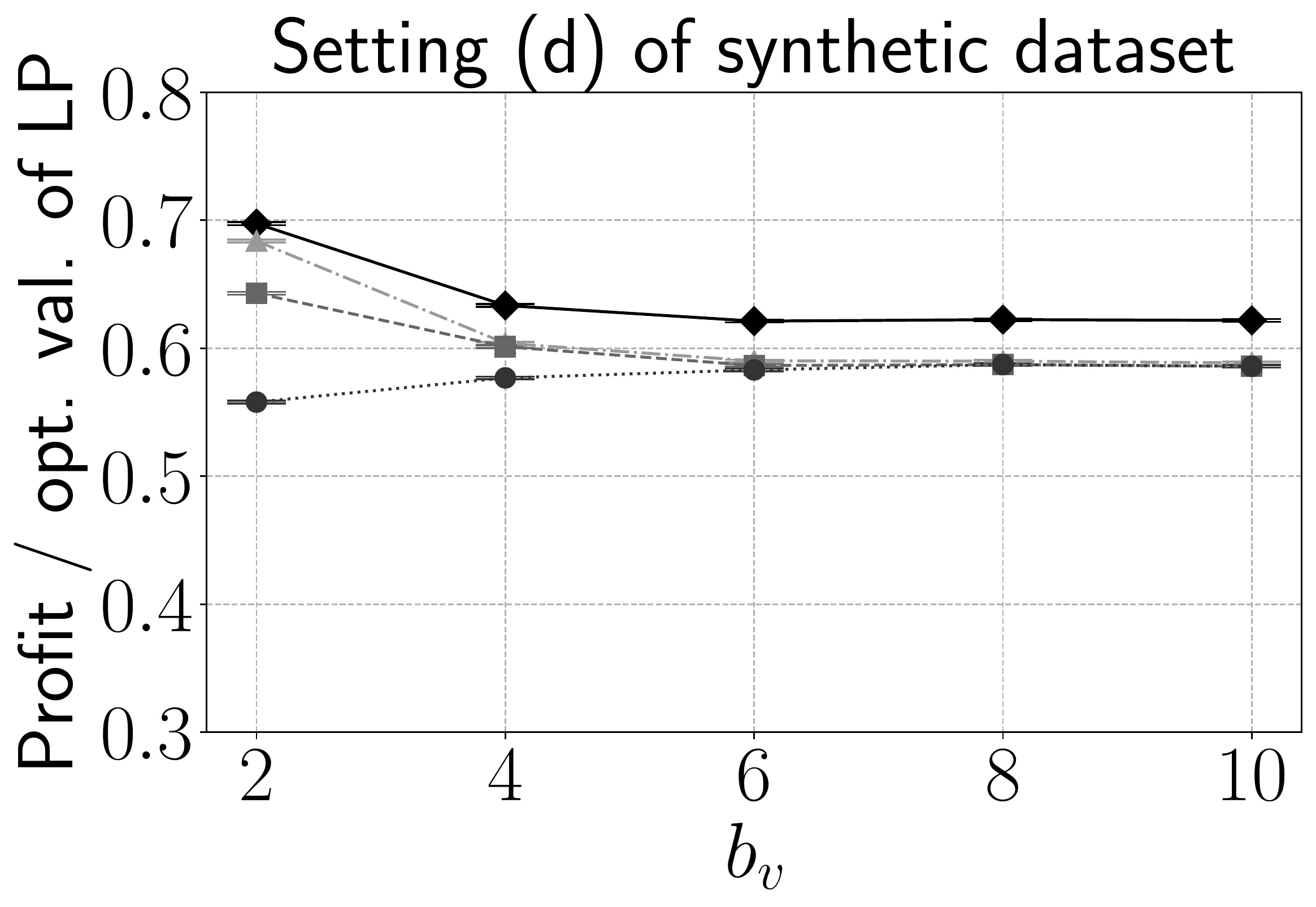}
    \subcaption{KAD with $\Delta_u=+\infty$}
    \end{minipage}
    \caption{The rates of average profits for synthetic datasets.}
    \label{fig:synthetic}
\end{figure}

\subsection{Real-world Dataset}
We evaluate the performance for instances generated from the New York City yellow cabs dataset\footnote{http://www.andresmh.com/nyctaxitrips/},
which is used also in existing work such as~\cite{Dickerson2018,Nanda2020}.
Using the data in January 2013 we set up a bipartite graph with $|U|=30$ and $|V|=100$ and parameters similarly to \cite{Nanda2020}.
The detailed set-up is described in Appendix~\ref{sec:taxi-setup}.
We set $T=100k$ ($k\in [4]$) in setting (a) and $T=288k$ ($k\in [4]$) in others.

Figure~\ref{fig:taxi} shows the ratios between average profits and the LP optimal values.
Note that our algorithm earns more than half of the LP optimal value in any cases. 
In settings (a) and (c), Random and Greedy cannot obtain even half of the LP optimal value for a large $T$.
NAdap* performs as well as ours in settings except (c), in which it performs worse than our algorithm.
Our algorithm may be too careful in (b) and (d), in which there is no need to care about the rejection constraint, and has less performance.
On the other hand,
since offline vertices easily become unavailable in (a) and (c),
our algorithm, which considers a long-term effect of the current assignment, performs the best.

All the algorithms run in less than 1 second for processing even $1152$ online vertices.
The preprocess in our algorithm completes in 400 seconds.
Since the preprocess is done before arrival of online vertices, our algorithm makes decision as fast as others.
For the detailed results, see Appendix~\ref{sec:runtime}.

\begin{figure}[htbp]
    \centering
    \begin{minipage}[b]{0.24\linewidth}
    \includegraphics[keepaspectratio, width=\linewidth]{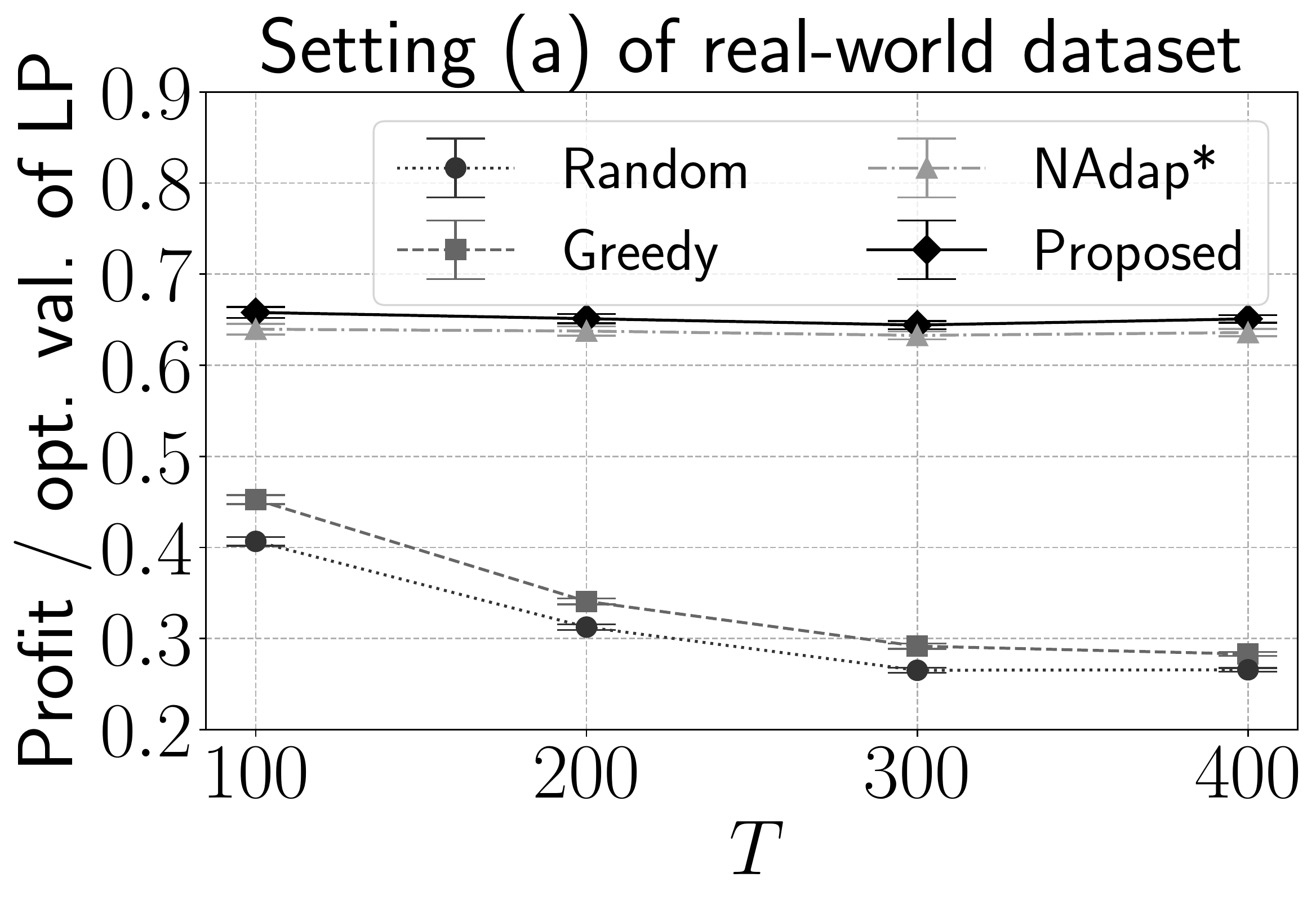}
    \subcaption{KIID without reusability}
    \end{minipage}
    \begin{minipage}[b]{0.24\linewidth}
    \includegraphics[keepaspectratio, width=\linewidth]{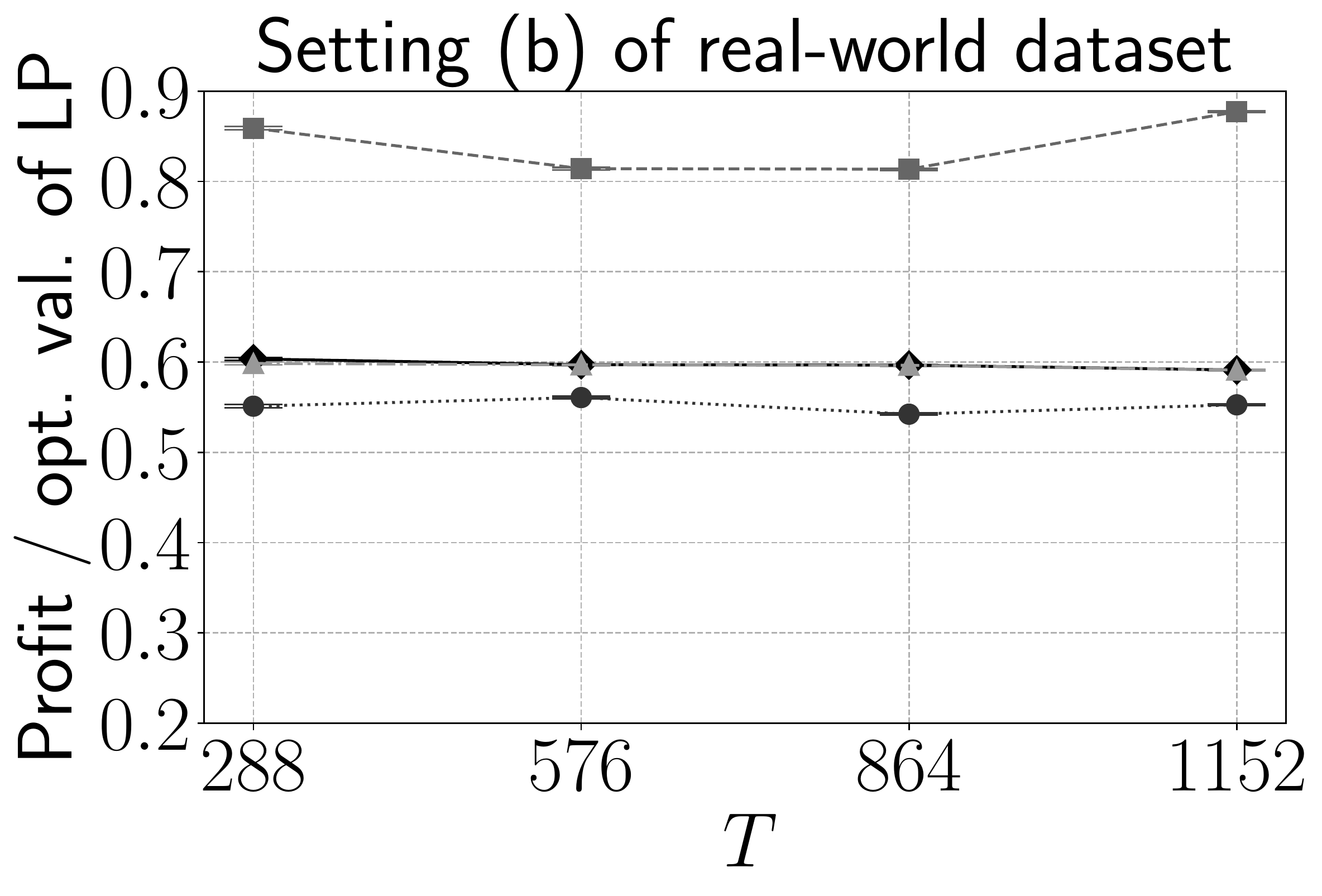}
    \subcaption{KAD without rejection}
    \end{minipage}
    \begin{minipage}[b]{0.24\linewidth}
    \includegraphics[keepaspectratio, width=\linewidth]{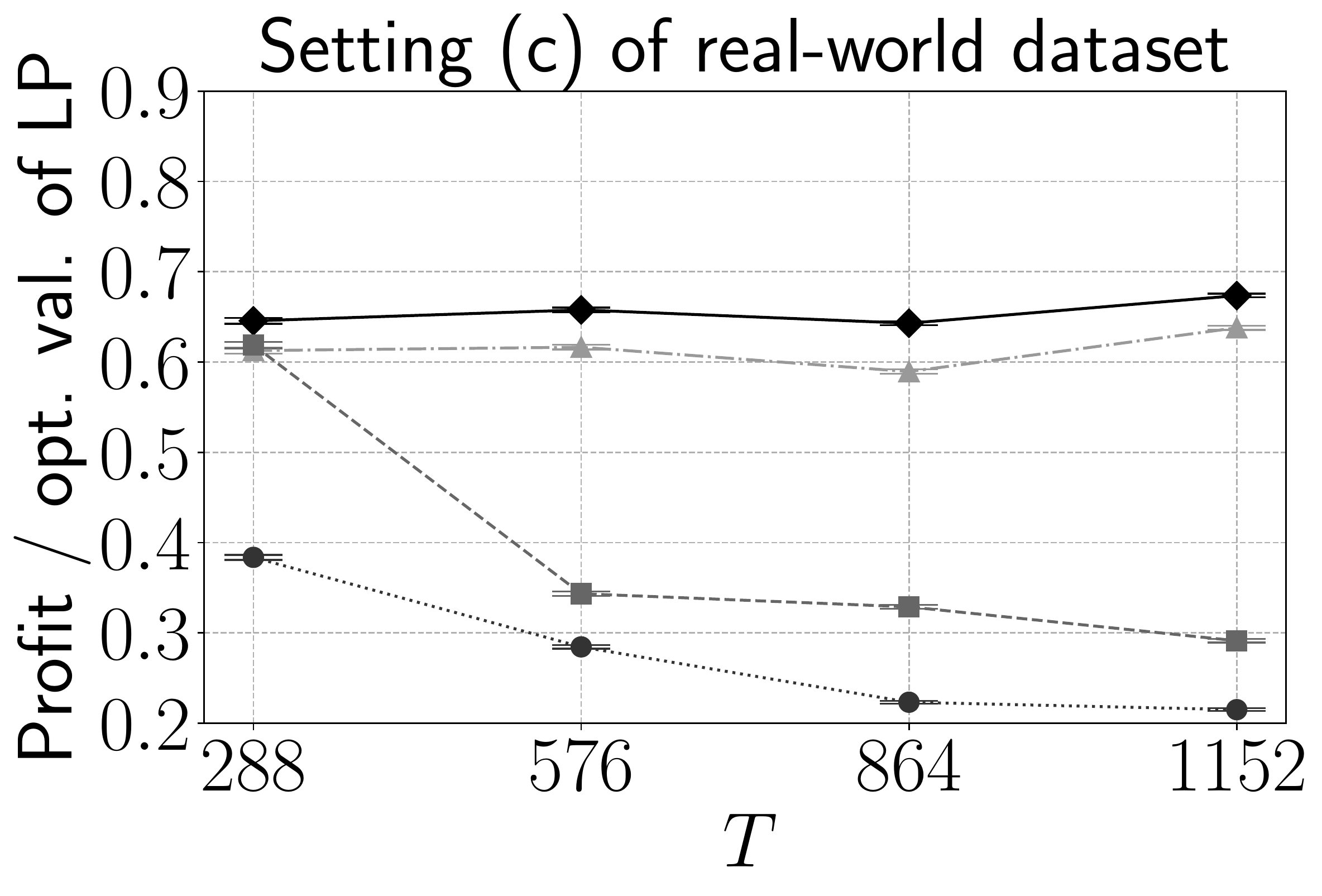}
    \subcaption{KAD with $\Delta_u \leq 3$}
    \end{minipage}
    \begin{minipage}[b]{0.24\linewidth}
    \includegraphics[keepaspectratio, width=\linewidth]{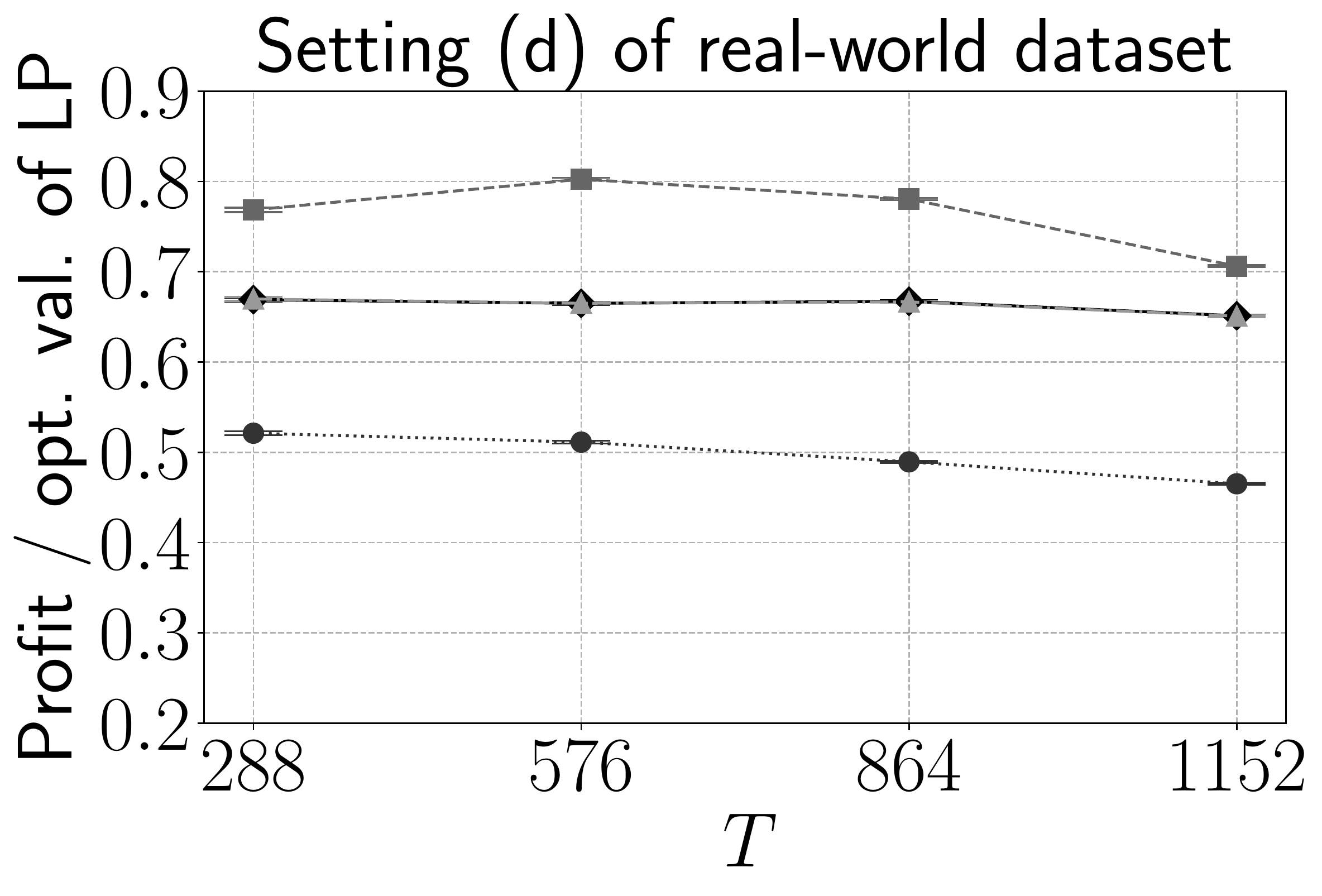}
    \subcaption{KAD with $\Delta_u=+\infty$}
    \end{minipage}
    \caption{The rates of average profits for real-world datasets.}
    \label{fig:taxi}
\end{figure}

\section{Conclusion}
In this paper, we studied the online task assignment problem with reusable resources in the KAD model, which generalizes the online bipartite matching in the KAD model.
Our problem incorporates practical conditions arising in applications such as ridesharing and crowdsourcing.
We proposed an online algorithm that is $1/2$-competitive for the unlimited rejection case, which is tight, and $\Delta/(3\Delta-1)$-competitive for the general case.
Practical usefulness of our algorithm is confirmed by numerical experiments.

For future work,
as solving LP~(\texttt{Off}) is time-consuming,
it would be interesting to obtain a constant competitive ratio without solving the LP.
One direction is to use an approximate solution to the LP in our algorithm.
Another future work is to consider a model that, when a task is rejected, it is allowed to re-assign some other agents.

\section*{Acknowledgments}
HS was supported by JSPS KAKENHI Grant Numbers JP17K12646, JP21K17708, and JP21H03397, Japan.
SI was supported by JST, ACT-I, Grant Number JPMJPR18U5, Japan.
TF was supported by JSPS KAKENHI Grant Numbers JP20H05965, JP21K11759, and JP21H03397, Japan.
NK was supported by JSPS KAKENHI Grant Numbers JP20H05795, JP18H05291, and JP21H03397, Japan.
KK was supported by JSPS KAKENHI Grant Number JP18H05291, Japan.

\bibliographystyle{abbrvnat}
\bibliography{ref}

\clearpage
\appendix
\section{Bad Example of the Simple Extension}\label{sec:simple-is-bad}

We present an instance such that simply assigning offline vertices  according to LP cannot have competitive ratio better than $1/3$.

Let us consider the following instance. 
	Let $U=\{u\}$, $V=\{v_1,v_2\}$ and $E=\{(u,v_1), (u, v_2)\}$. Let also $T=2$ and $\Delta_u=2$.
	At time $t$, only a vertex $v_t$ can arrive, whose probability $p_t$. 
	Thus we identify an edge $(u,v_t)$ with $t$ in this example. 
	The parameters are summarized in Table~\ref{tab:example-bad}.
	Note that this instance is included in the unlimited rejection case.
		\begin{table}[htbp]
		\centering
		\caption{A bad instance for a simple extension.}
		\label{tab:example-bad}
		\begin{tabular}{cccccc}
		\toprule
		    $t$ &  $p_t$ & $q_t$ & $w_t$ & $\Pr[C_t=1]$ & $\Pr[C_t=2]$ \\ \hline
		    1 & 1 & 2/3 & 1 & 0 & 1 \\
		    2 & 1 & 1/3 & $M$ & 0 & 1 \\
		    \bottomrule
		\end{tabular}
		\end{table}
	
	When we assign only $v_1$ (respectively, $v_2$), the expected profit is $\frac{2}{3}$ (respectively, $\frac{M}{3}$).
	When we assign $v_1$ and $v_2$ if possible, the expected profit is $\frac{2}{3}\cdot 1 + \frac{1}{3}\cdot \frac{M}{3} = \frac{M+6}{9}$.
	Thus, the offline optimal value is $\frac{M}{3}$ if $M>3$.
	
	The offline LP (\texttt{Off}) is written specifically as follows:
	\begin{align*}
	    \max \quad & \frac{2}{3}x_1 + \frac{M}{3} x_2 \\
	    \text{s.t.} \quad & \frac{2}{3}x_1 + \frac{1}{3}x_2 \leq 1 \\
	    & x_1+x_2 \leq 2\\
	    & 0 \leq x_1 \leq 1, \ 0 \leq x_2 \leq 1.
	\end{align*}
	This LP has a unique optimal solution $\begin{bmatrix}1 & 1\end{bmatrix}$.
	With this solution, we can consider the algorithm that we choose a vertex $v_t$ with probability $x_t=1$ for $t=1,2$.
	In this case, the expected profit is $\frac{M+6}{9}$.
	Therefore, the ratio to the offline optimal value is $\frac{M+6}{9} \cdot \frac{3}{M} = \frac{M+6}{3M}$, and this is less than $\frac{1}{2}$ if $M>12$.
	Moreover, the ratio approaches $\frac{1}{3}$ when $M$ goes to $+\infty$. 
	
	We note that Algorithm~\ref{alg} attains the expected profit $\frac{M}{3}$, which is the same as the offline optimal value. 

\section{Details of Experiments}\label{sec:experiments-appendix}

We detail experiment setups and present further results.

\paragraph{Instance Settings}
We conduct experiments in fifteen kinds of settings (a)--(o). 
The settings (a)--(d) are applicable to both the synthetic and the real-world datasets, and 
the settings (e)--(o) are applied to only the real-world dataset.
We summarized those settings in Table~\ref{tab:experiment-input}.
In the table, ``NA'' for the rejection constraint means that $q_e = 1$ for all $e\in E$.
For the real-world dataset of taxi trip records in the settings (a), (e), (f) (with a ``Peak-hour'' check-mark), we focus on records in a peak hour of 7--8 PM, similarly to \cite{Nanda2020}.
\begin{table}[htb]
    \centering
    \begin{minipage}[t]{0.6\linewidth}
    \centering
    \caption{Input types of our experiments.}
        \label{tab:experiment-input}
    \scalebox{0.8}{
    \begin{tabular}{ccccc}
        \toprule
        & Online Vertices & Reusable Vertices & \# Rejection & Peak hour \\ \hline
        (a) & KIID &  & $\leq \Delta$ & \checkmark\\
        (b) & KAD & \checkmark & NA & \\
        (c) & KAD & \checkmark & $\leq \Delta$ & \\ 
        (d) & KAD & \checkmark & $+\infty$ & \\ \hline
        \hline
        (e) & KIID &  & NA & \checkmark\\
        (f) & KIID &  & $+\infty$ & \checkmark\\
        (g) & KIID &  & $\leq \Delta$ &\\
        (h) & KIID &  & NA & \\
        (i) & KIID &  & $+\infty$ & \\
        (j) & KIID & \checkmark & $\leq \Delta$ &  \\
        (k) & KIID & \checkmark & NA & \\
        (l) & KIID & \checkmark & $+\infty$ & \\
        (m) & KAD &  & $\leq \Delta$ & \\ 
        (n) & KAD &  & NA & \\
        (o) & KAD &  & $+\infty$ & \\ 
        \bottomrule
    \end{tabular}
    }
    \end{minipage}
\begin{minipage}[t]{0.38\linewidth}
\centering
        \caption{Random seeds.}
    \label{tab:random-seeds}
    \scalebox{0.8}{
    \begin{tabular}{ccc} \toprule
    \multicolumn{2}{c}{Setting} & Seeds \\ \hline
    \multirow{3}{*}{Synthetic}& (a) & 787846414 \\
    & (b) & 3143890026\\ 
    & (c), (d) & 3348747335 \\ \hline
    \multirow{10}{*}{Real-world}& (a), (f) & 1608637542 \\
    & (b) & 3421126067 \\
   & (c), (d) & 4083286876 \\
   & (e) & 249467210 \\
    & (g), (i) & 1914837113 \\
    & (h) & 670094950 \\
    & (j) & 2563451924 \\
    & (k), (l) & 2571218620 \\
    & (m), (o) & 429389014 \\
    & (n) & 669991378 \\ \bottomrule
    \end{tabular}
    }
\end{minipage}
\end{table}

\paragraph{Random Seeds}
We provide the random seeds used to generate input sequences in Table~\ref{tab:random-seeds} .

\subsection{Setup for the Real-World Dataset}\label{sec:taxi-setup}

\paragraph{Dataset Description}
The dataset consists of records of completed taxi trips from Manhattan, Brooklyn, and Queens each month in 2013.
Each record contains an anonymized driver license ID, pick-up/drop-off locations (longtitudes and latitudes) and datetimes, a trip time, and a trip distance. 

\paragraph{Instance Generation from the Real-world Taxi Dataset}
From the data in January 2013, we construct a problem instance as follows.
We set $|U|=30$ and $|V|=100$.
We focus on an area of longitudes from $\ang{-75}$ to $\ang{-73}$ and latitudes from $\ang{40.4}$ to $\ang{40.95}$, and split into grids by step size of $0.05$ as in~\cite{Nanda2020}. 
A grid is called a \emph{location}.
We construct $U$ by sampling $30$ license IDs of drivers uniformly at random.
We assume that each driver is in a location which is chosen randomly from her history of picking up a rider.
The set $V$ of online vertices~(types of riders) is set based on the $100$ most frequently-appeared pairs of pick-up and drop-off locations.
Note that the $100$ types account for $95.1$\% of records in the area. 
For each $u\in U$ and $v\in V$, an edge $e=(u,v)$ exists if the location of driver $u$ is the same as the pick-up location of rider $v$.

The edge profit $w_e$ is set based on a trip distance, as a fare is roughly proportional to a trip distance.
In addition, we reflect a cost of a driver to pick up a rider, similarly to~\cite{Dickerson2018}.
Since a driver can be assigned to riders with a pick-up point in the same location in our experiments, we assume that the pick-up cost mainly depends on the point of a driver.

Thus, we define a pick-up cost $D_u$ for each driver $u\in U$, and 
we choose $D_u$ uniformly at random from $[0, 2.7]$ based on the fact that the mean trip distance is about $2.7$.
Then, we set $w_e = \max(L_v-D_u, 0)$ with $e=(u,v)$, where $L_v$ is a mean trip distance of riders with type $v$.

The acceptance probability $q_e$ is chosen uniformly at random from $[0.5, 1]$.
The distribution of an occupation time $C_e$ is set by a distribution of $C+5$, where $C$ is twice of trip time of riders with type $v$.
This is because we expect that a driver makes a round trip, taking twice of one-way trip time. 
We add $5$ minutes to include the time from the request acceptance to the pick-up. 
We note that the mean of the one-way trip time is around $11$ minutes.
For settings (a) and (c), we set $\Delta_u$ by choosing uniformly at random from $\{1,2,3\}$ for each $u \in U$.

The settings (a), (e), and (f) focus on only trip records within 7--8 PM.
We divide the one-hour period into time slots of at least $100$ and no more than $400$.
On the other hand, in the other settings, we divide the one-day period into $288$ time slots by $5$ minutes as in \cite{Dickerson2018}. 
For the settings (a), (e), and (f), the probability $p_{v,t}$ that an online vertex $v$ arrives is set to be proportional to a frequency of riders with type $v$ during 7--8 PM in the 31 days in January 2013. 
For the KIID settings (g)--(l), $p_{v,t}$ is set to be proportional to a frequency of riders with type $v$ throughout the 31 days.
In the KAD settings, we set $p_{v,t}$ to be proportional to a frequency of riders with type $v$ in the $t$-th time slot in the 31 days, for each time slot $t$. 

\subsection{Further Results for the Real-World Dataset}\label{sec:taxi-further}
In this subsection, we present further results for the real-world dataset and add observations.
Figure~\ref{fig:taxi-additional} shows the ratios between the average profits and the LP optimal value of LP (\texttt{Off}) in settings (e)--(o).

\begin{figure}[ht]
    \centering
    \begin{subfigure}[b]{0.24\linewidth}
    \addtocounter{subfigure}{4}
    \includegraphics[keepaspectratio, width=\linewidth]{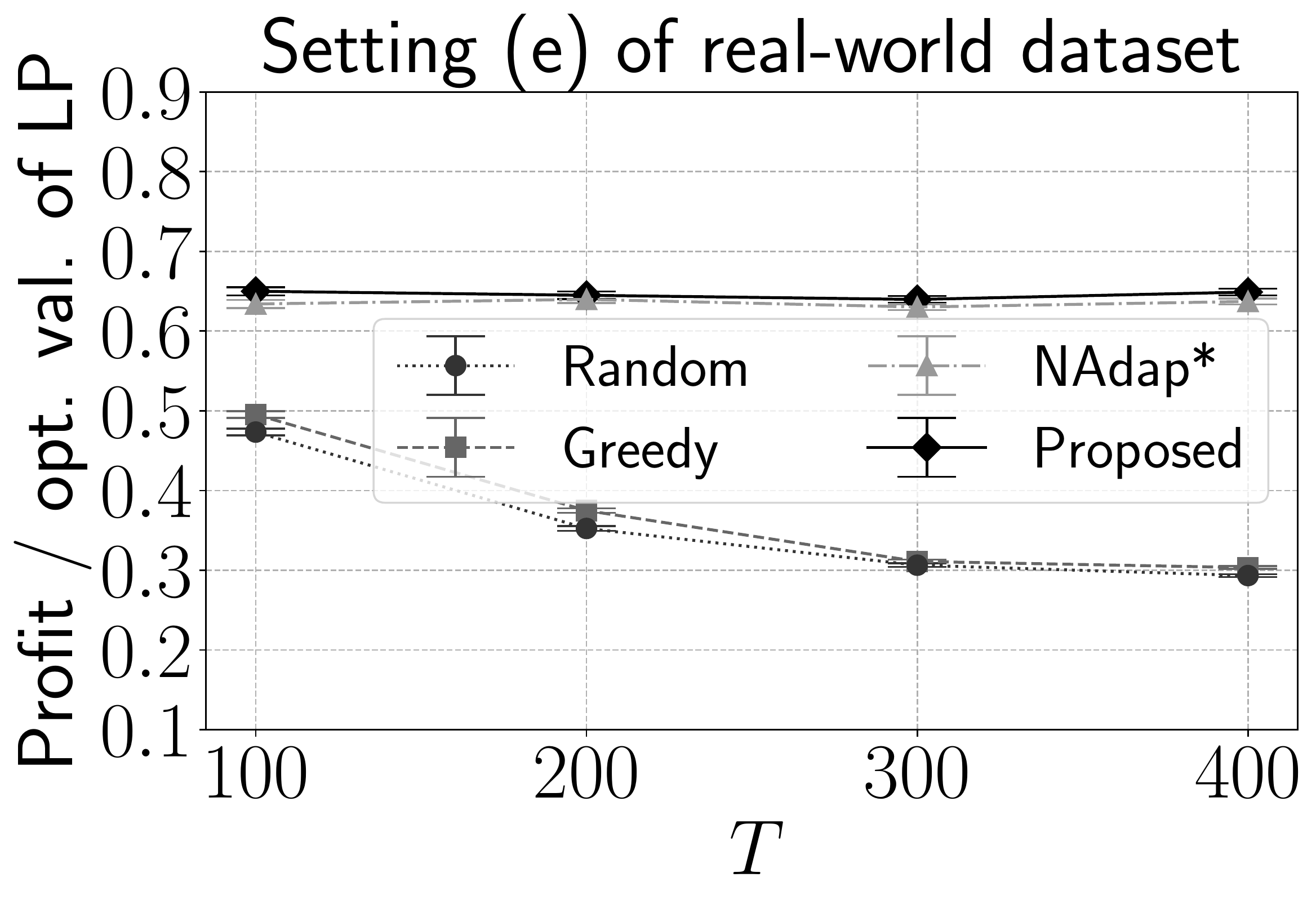}
    \subcaption{KIID for the peak hour without rejections}
    \end{subfigure}
    \begin{minipage}[b]{0.24\linewidth}
    \includegraphics[keepaspectratio, width=\linewidth]{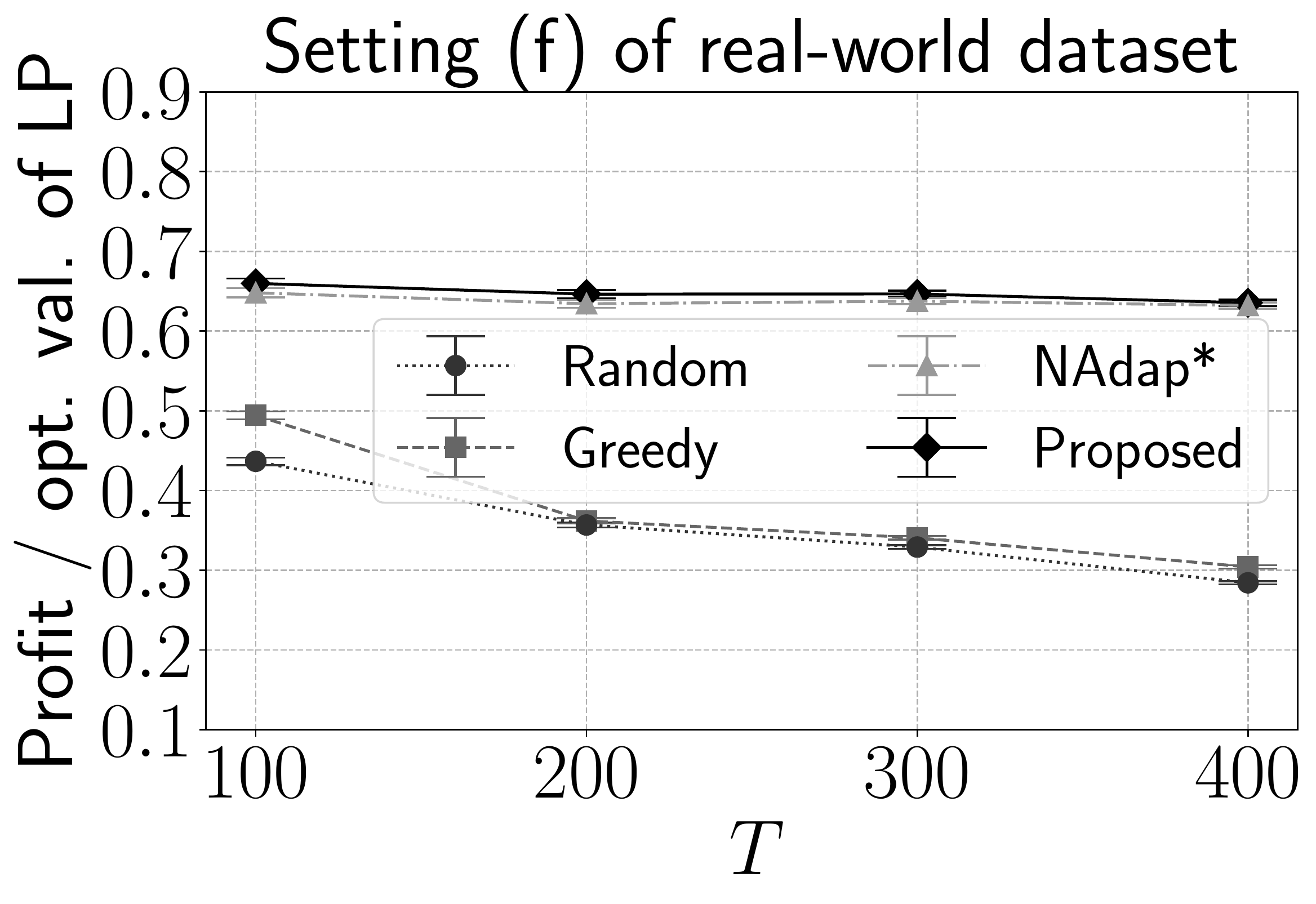}
    \subcaption{KIID for the peak hour with $\Delta_u=+\infty$}
    \end{minipage}
    \begin{minipage}[b]{0.24\linewidth}
    \includegraphics[keepaspectratio, width=\linewidth]{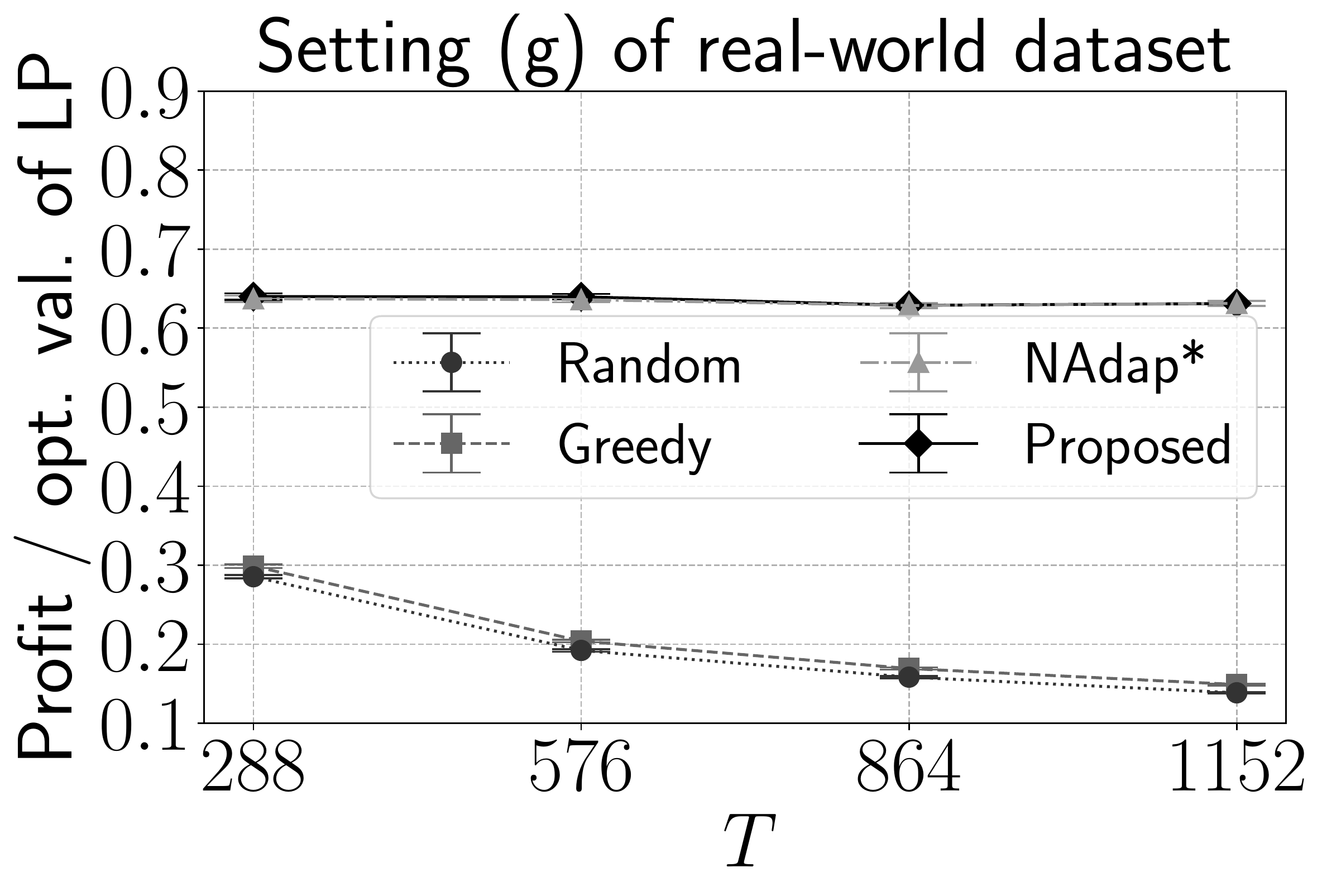}
    \subcaption{KIID without reusability \\ ($\Delta_u \leq 3$)}
    \end{minipage}
    \begin{minipage}[b]{0.24\linewidth}
    \includegraphics[keepaspectratio, width=\linewidth]{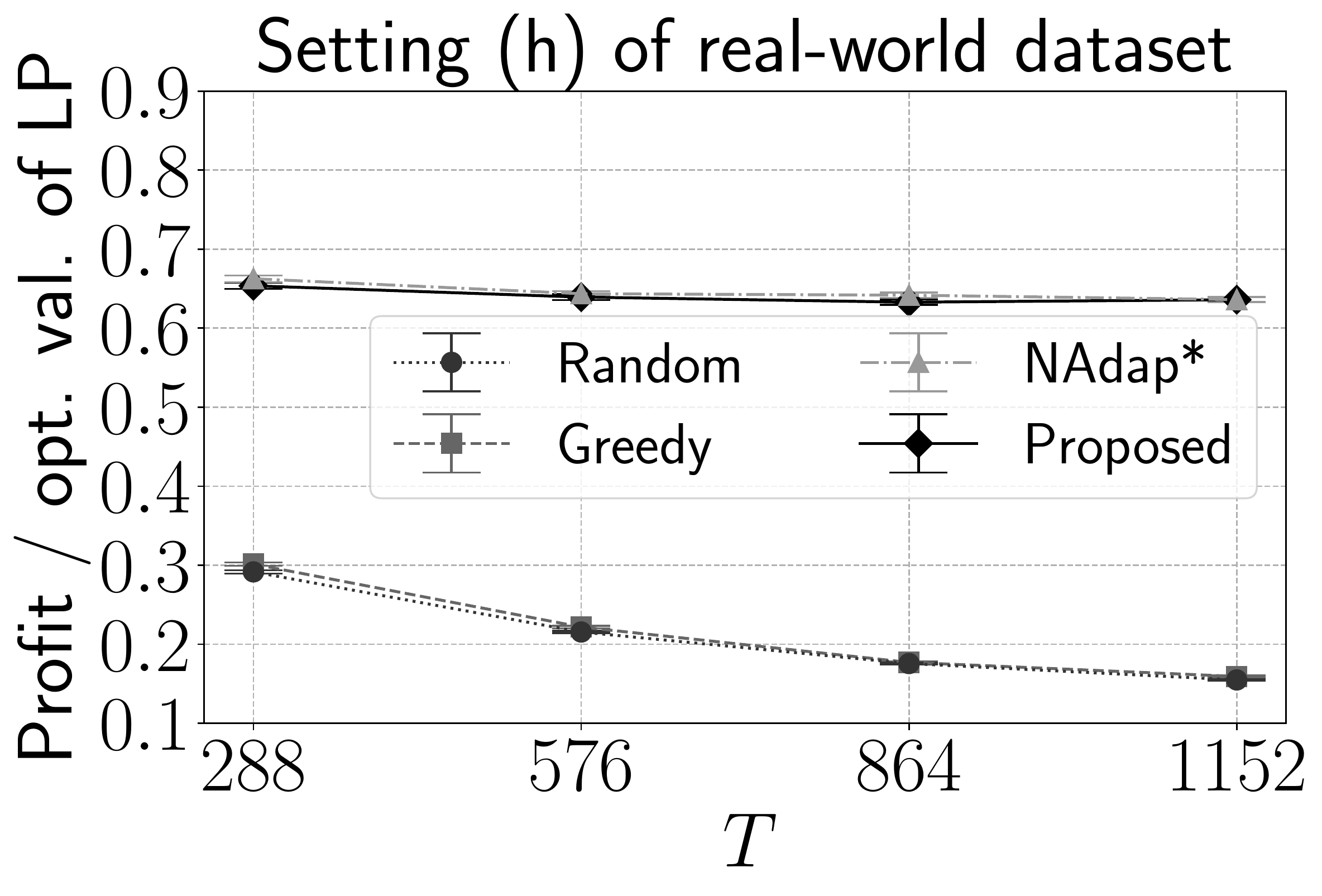}
    \subcaption{KIID without reusability and rejections}
    \end{minipage}
    
    \medskip
    
    \begin{minipage}[b]{0.24\linewidth}
    \includegraphics[keepaspectratio, width=\linewidth]{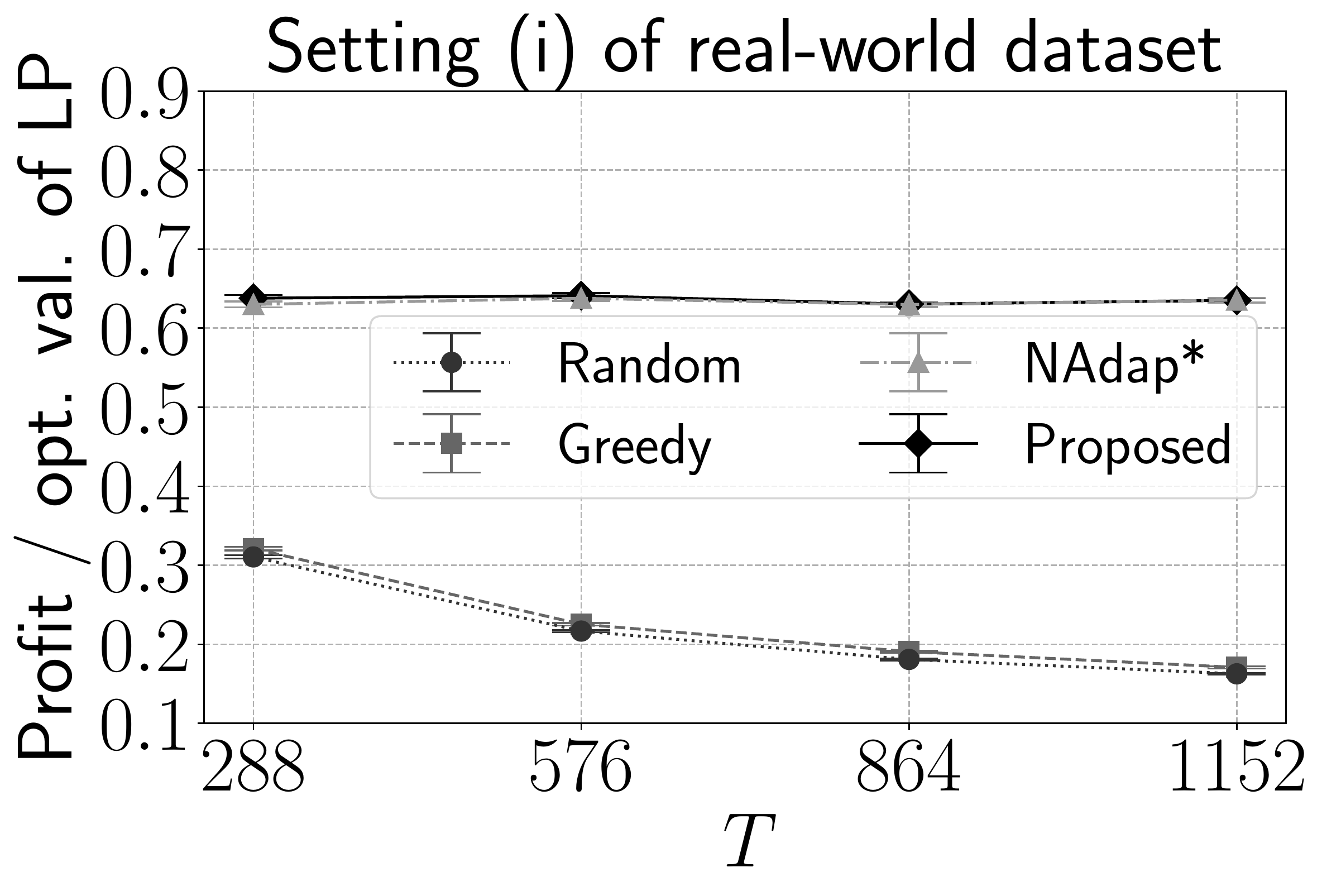}
    \subcaption{KIID without reusability \\ ($\Delta_u=+\infty$)}
    \end{minipage}
    \begin{minipage}[b]{0.24\linewidth}
    \includegraphics[keepaspectratio, width=\linewidth]{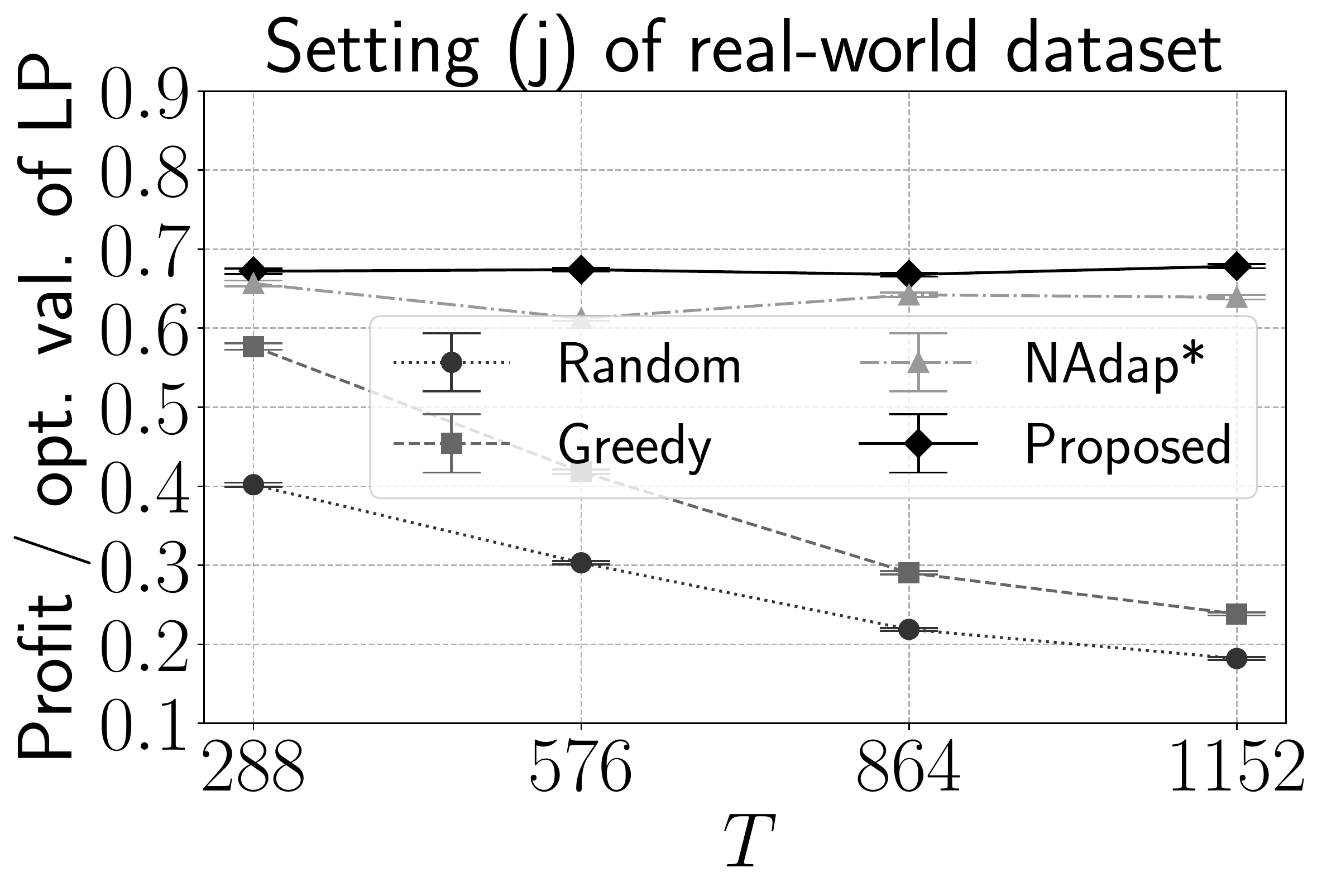}
    \subcaption{KIID with $\Delta_u\leq 3$ \\ \phantom{a}}
    \end{minipage}
    \begin{minipage}[b]{0.24\linewidth}
    \includegraphics[keepaspectratio, width=\linewidth]{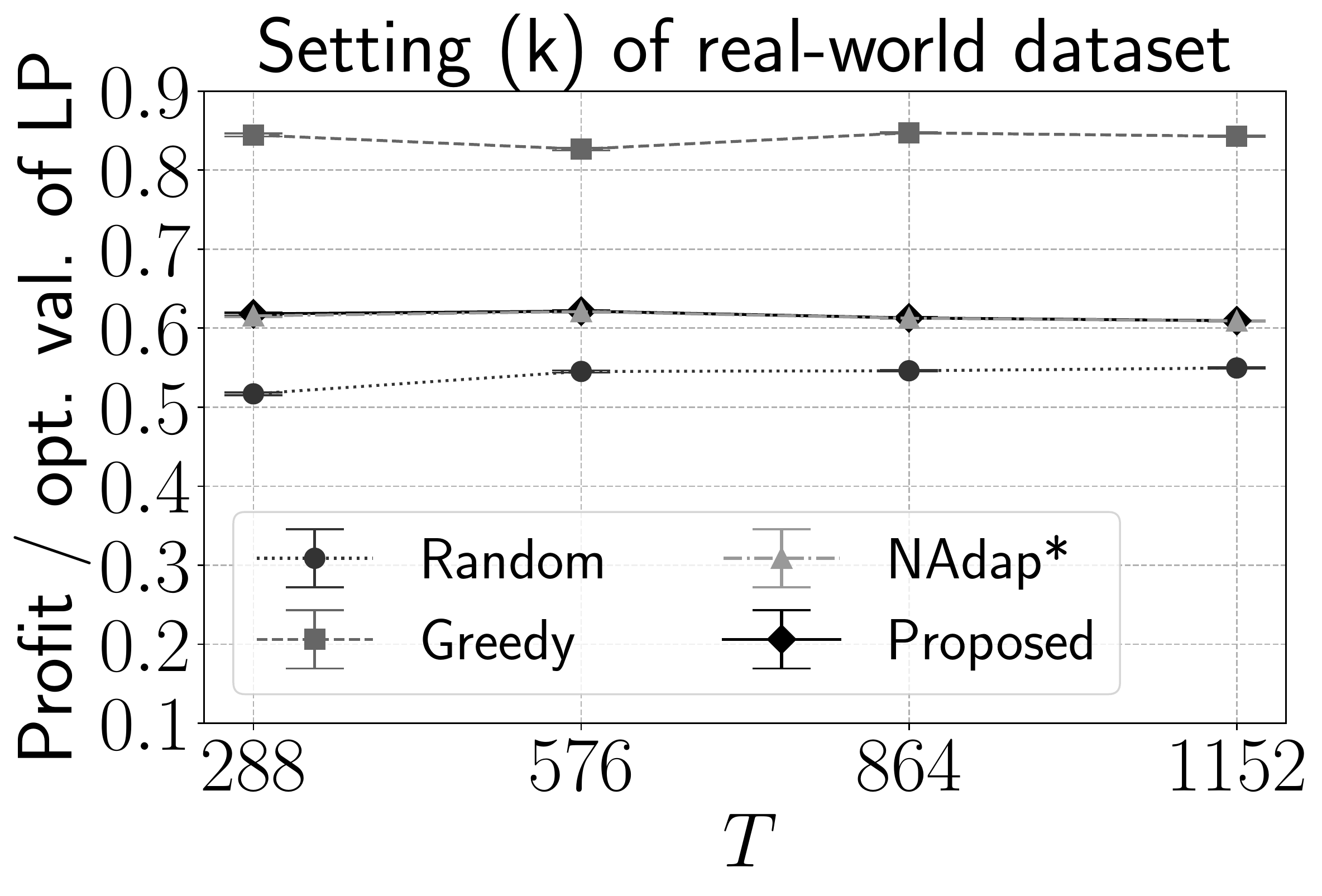}
    \subcaption{KIID without rejections \\ \phantom{a}}
    \end{minipage}
    \begin{minipage}[b]{0.24\linewidth}
    \includegraphics[keepaspectratio, width=\linewidth]{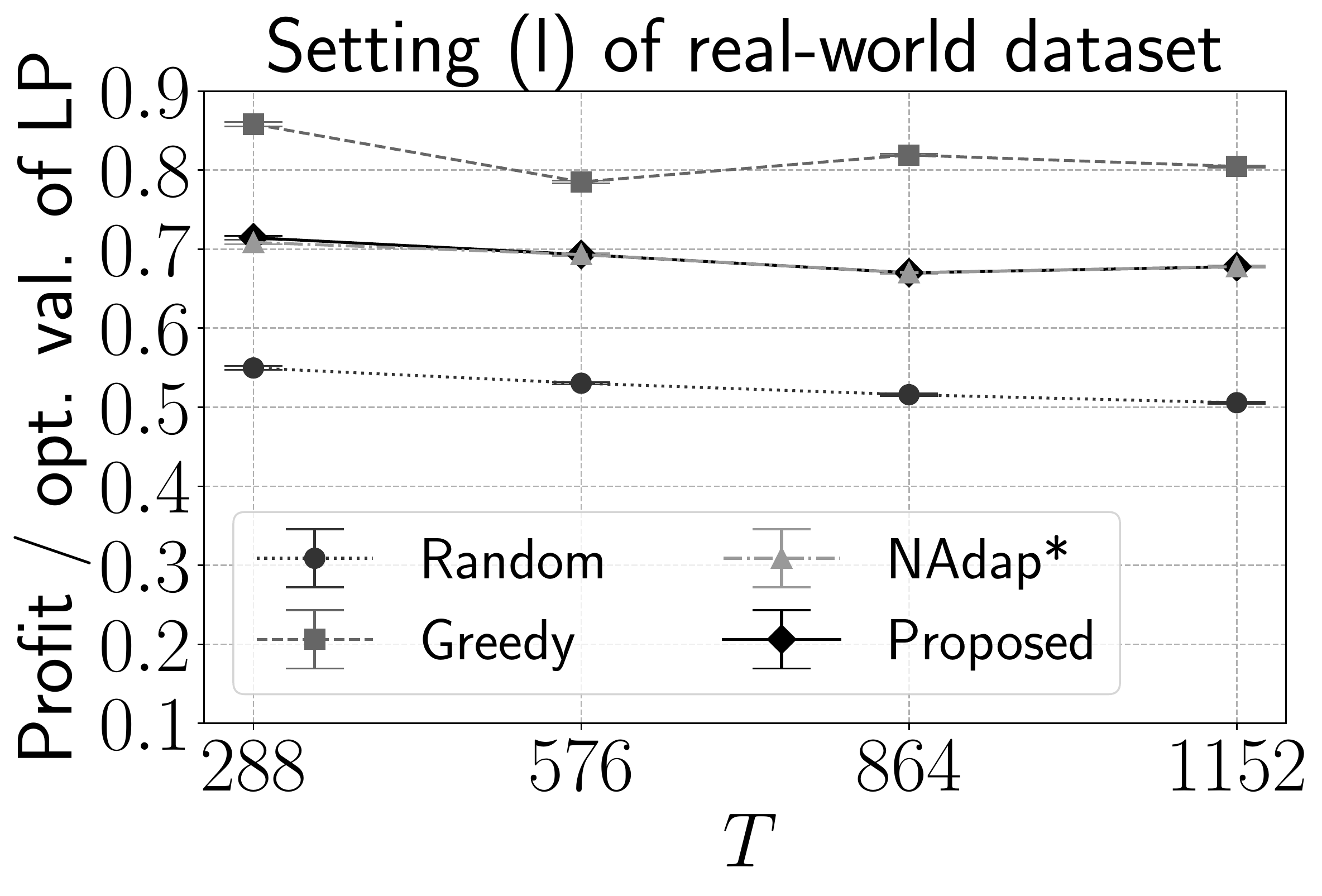}
    \subcaption{KIID with $\Delta_u=+\infty$ \\ \phantom{a}}
    \end{minipage}
    
    \medskip
    
    \begin{minipage}[b]{0.24\linewidth}
    \includegraphics[keepaspectratio, width=\linewidth]{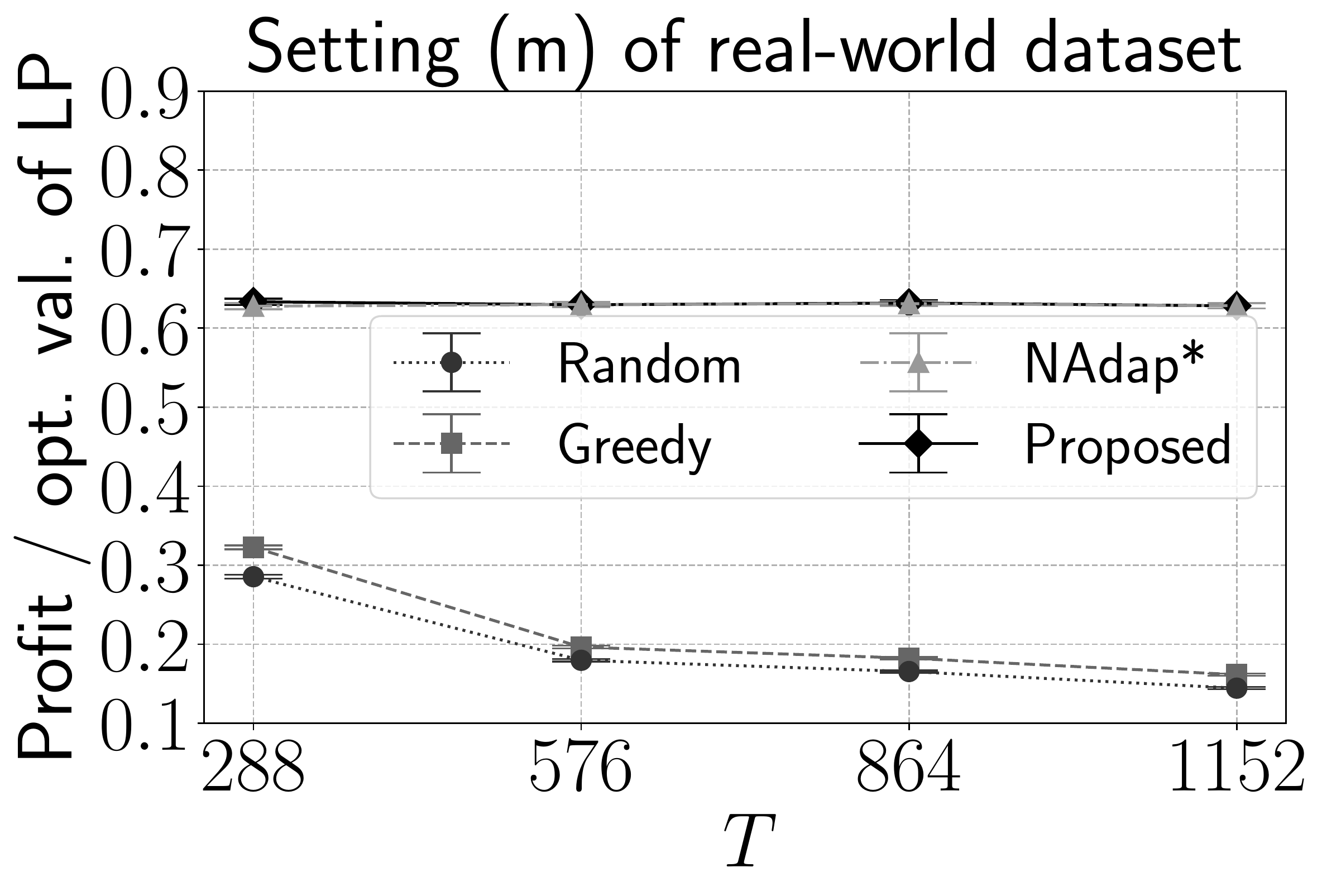}
    \subcaption{KAD without reusability ($\Delta_u \leq 3$)}
    \end{minipage}
    \begin{minipage}[b]{0.24\linewidth}
    \includegraphics[keepaspectratio, width=\linewidth]{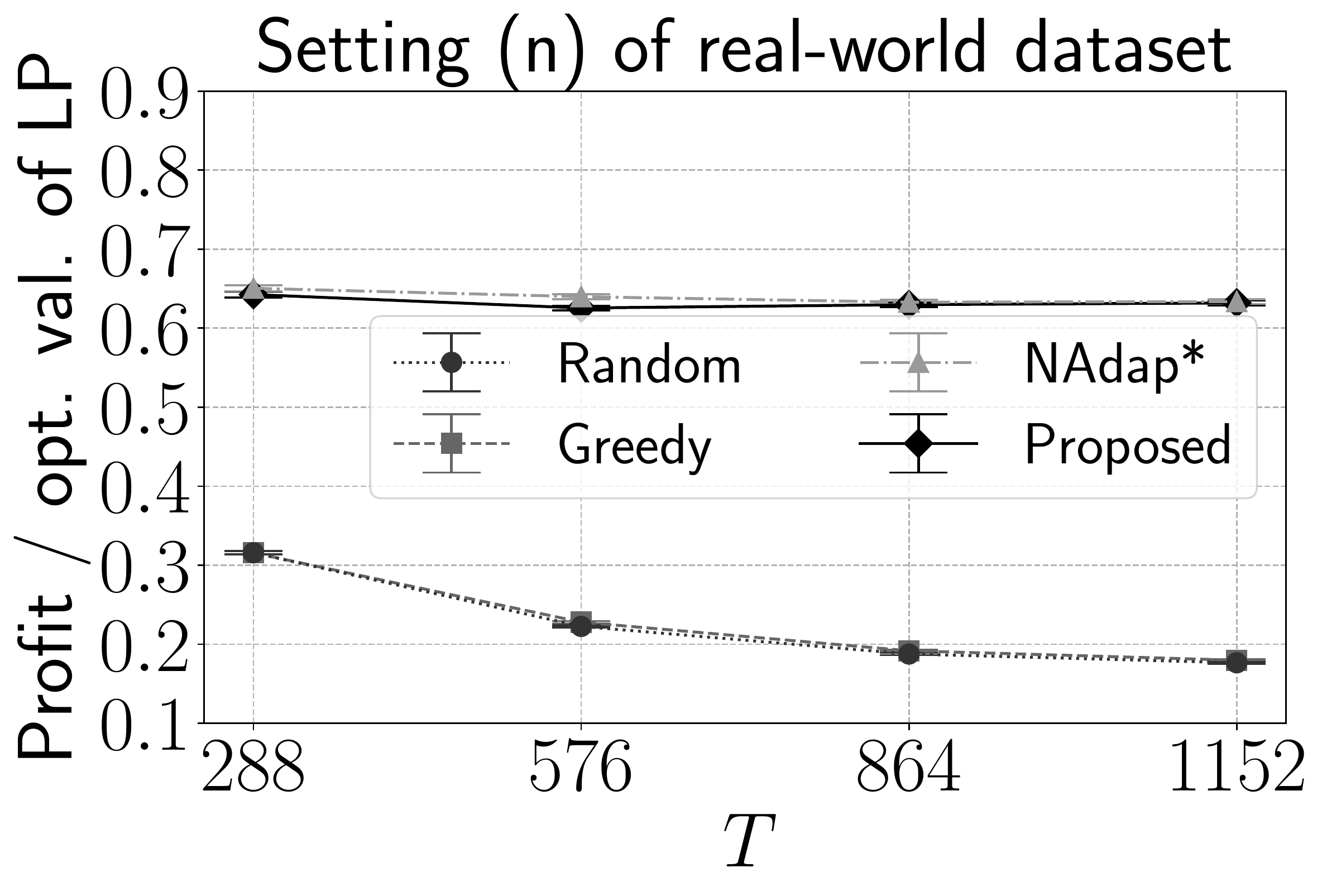}
    \subcaption{KAD without reusability, rejections}
    \end{minipage}
    \begin{minipage}[b]{0.24\linewidth}
    \includegraphics[keepaspectratio, width=\linewidth]{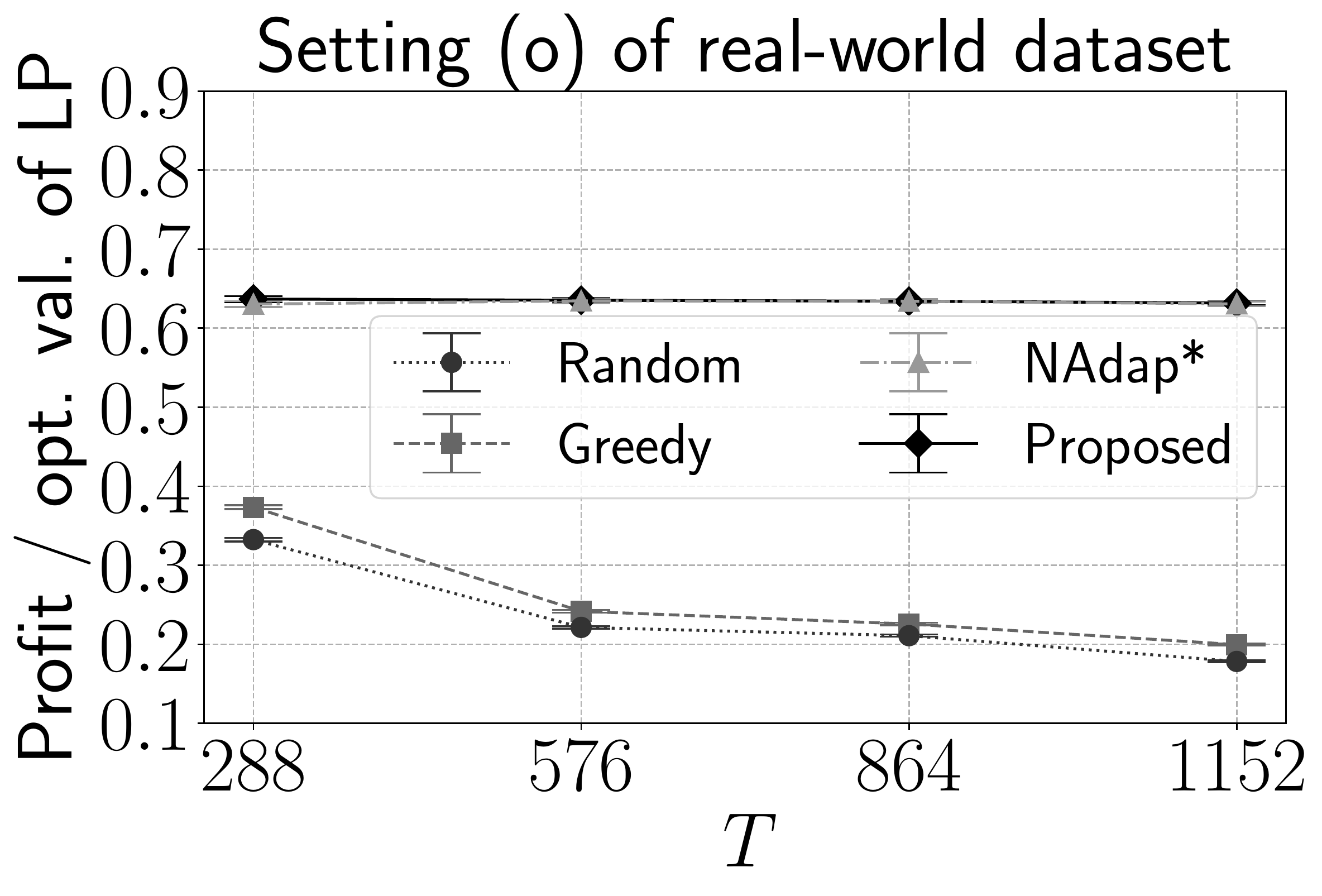}
    \subcaption{KAD without reusability ($\Delta_u=+\infty$)}
    \end{minipage}
    
    \caption{The rates of average profits for real-world datasets in settings (e)--(o).}
    \label{fig:taxi-additional}
\end{figure}

The results indicate that the proposed algorithm has a stably high performance.
The performance of NAdap* is mostly similar to ours, and the difference is clear in (j), where offline vertices easily become unavailable. 
We also observe that by a similar reason to (b) and (d), our algorithm has less performance in (k) and (l), where offline vertices rarely become unavailable.
On the other hand, the proposed algorithm always obtains more than half of the LP optimal value, while Random and Greedy sometimes fail. 
This is prominent when $T$ is larger in settings except (k) and (l).

These results together with those in (a)--(d) imply that the proposed algorithm performs well, and even better than baseline algorithms in most cases.

\subsection{Detailed Results on Runtime}\label{sec:runtime}

We present the detail of average runtime.
Table~\ref{tab:syn-time}--\ref{tab:taxi-time-kad} show the average runtimes of algorithms and the preprocess.
In the tables, ``\veryfast'' indicates that the runtime is shorter than $0.005$ seconds.
We also denote Random by ``\Random'', Greedy by ``\Greedy'', NAdap* by ``\NAdap'', the proposed algorithm by ``\Proposed'', and the preprocess of the proposed algorithm by ``Prep'' in the tables.
Recall that the preprocess of the proposed algorithm consists of solving LP (\texttt{Off}) and computing $R^d_{u,t}$'s. 
We exclude the time needed for buiding a model in the LP solver because this is very specific to the programming language and the solver. 
Note that in the settings (a), (e), and (f) for the real-world dataset, NAdap* uses a simpler LP~\eqref{eq:offlineLP-kiid} with $b_v=1$ ($v\in V$) for the purpose of reproducing NAdap~\cite{Nanda2020}.

We mention that all the algorithms process online vertices in less than 1 second even for $T=1125$. 
Our algorithm takes much time due to the preprocess. 
In fact, our algorithm takes $1347$ seconds in the setting (m) with $T=1152$, which is the most time-consuming case for us.
We see that our algorithm in total is practically fast enough, and it is as fast as baseline algorithms during processing online vertices.

\begin{table}[hb]
    \centering
    \begin{minipage}[t]{0.45\linewidth}
    \centering
    \caption{Average runtime[sec] for synthetic datasets.}
    \label{tab:syn-time}
    \scalebox{0.75}{
    \begin{tabular}{ccccccc} \toprule
        && \multicolumn{5}{c}{ $b_v$} \\
        && $2$ & $4$ & $6$ & $8$& $10$\\ \hline
        \multirow{5}{*}{(a)} 
        & \Random & \veryfast & \veryfast & \veryfast & \veryfast & \veryfast \\
        & \Greedy & \veryfast & \veryfast &\veryfast &\veryfast & \veryfast \\
        & \NAdap & 1.26&1.23&1.04&1.05&0.94 \\
        & \Proposed & 2.73&2.29&2.65&2.32&2.3 \\
        \cline{2-7}
        & Prep & 2.72&2.29&2.64&2.32&2.3 \\ \hline
        \multirow{5}{*}{(b)}
        & \Random & 0.01&0.01&0.01&0.01&0.01 \\
        & \Greedy & 0.01&0.01&0.01&0.01&0.01 \\
        & \NAdap & 1.96&1.71&1.33&1.35&1.24 \\
        & \Proposed & 5.83&5.4&5.06&5.15&5.19 \\
        \cline{2-7}
        & Prep & 5.82&5.38&5.04&5.13&5.17 \\ \hline
        \multirow{5}{*}{(c)} 
        & \Random & 0.01&0.01&0.01&0.01&0.01 \\
        & \Greedy & 0.01&0.01&\veryfast&0.01&0.01 \\
        & \NAdap & 1.09&0.87&0.94&0.83&0.77 \\
        & \Proposed & 6.36&6.17&5.99&6.23&5.82 \\
        \cline{2-7}
        & Prep & 6.35&6.16&5.98&6.22&5.81 \\ \hline
        \multirow{5}{*}{(d)} 
        & \Random & 0.01&0.01&0.01&0.01&0.01 \\
        & \Greedy & 0.01&0.01&0.01&0.01&0.01 \\
        & \NAdap & 0.97&0.81&0.88&0.94&0.9 \\
        & \Proposed & 5.19&4.79&4.46&4.4&4.41 \\
        \cline{2-7}
        & Prep & 5.17&4.77&4.45&4.39&4.39 \\ \bottomrule
    \end{tabular}
    }
\end{minipage}
\hspace{1em}
\begin{minipage}[t]{0.45\linewidth}
    \centering
     \caption{Average runtime[sec] for real-world datasets: the KIID settings with peak-hour data.}
     \label{tab:taxi-time-peak}
    \scalebox{0.75}{
    \begin{tabular}{cccccc} \toprule
        & & \multicolumn{4}{c}{$T$} \\
        &  & $100$ & $200$ & $300$ & $400$\\ \hline
        \multirow{5}{*}{(a)} 
        & \Random & \veryfast&0.01&0.01&0.01 \\
        & \Greedy & \veryfast&0.01&0.01&0.02 \\
        & \NAdap & 0.01&0.02&0.01&0.01 \\
        & \Proposed & 6.21&22.59&50.58&100.4 \\
        \cline{2-6}
        & Prep & 6.2&22.59&50.57&100.39 \\ \hline
        \multirow{5}{*}{(e)}
        & \Random & \veryfast&0.01&0.01&0.01 \\
        & \Greedy & \veryfast&0.01&0.01&0.02 \\
        & \NAdap & 0.01&0.02&0.02&0.01 \\
        & \Proposed & 5.45&19.83&41.5&79.76 \\
        \cline{2-6}
        & Prep & 5.44&19.82&41.49&79.76 \\ \hline
        \multirow{5}{*}{(f)} 
        & \Random & \veryfast&0.01&0.01&0.01 \\
        & \Greedy & \veryfast&0.01&0.01&0.02 \\
        & \NAdap & 0.02&0.01&0.01&0.01 \\
        & \Proposed & 5.15&16.71&40.12&79.05 \\
        \cline{2-6}
        & Prep & 5.14&16.7&40.11&79.04 \\ \bottomrule
    \end{tabular}
    }
     \end{minipage}
     
\bigskip

    \begin{minipage}[t]{0.45\linewidth}
    \centering
    \caption{Average runtime[sec] for real-world datasets: the KIID settings with the whole data.}
     \label{tab:taxi-time-kiid}
    \scalebox{0.75}{
    \begin{tabular}{cccccc} \toprule
        & & \multicolumn{4}{c}{$T$} \\
        & & $288$ & $576$ & $864$ & $1152$\\ \hline
        \multirow{5}{*}{(g)} 
        & \Random & 0.01&0.02&0.03&0.04 \\
        & \Greedy & 0.01&0.03&0.03&0.04 \\
        & \NAdap & 26.68&132.71&410.4&1011.11 \\
        & \Proposed & 49.52&213.44&635.45&1346.18 \\
        \cline{2-6}
        & Prep & 49.52&213.43&635.44&1346.17 \\ \hline
        \multirow{5}{*}{(h)} 
        & \Random & 0.01&0.02&0.03&0.05 \\
        & \Greedy & 0.01&0.02&0.03&0.06 \\
        & \NAdap & 22.36&136.64&494.58&1076.51 \\
        & \Proposed & 36.54&191.17&536.78&586.82 \\
        \cline{2-6}
        & Prep & 36.54&191.16&536.77&586.8 \\ \hline
        \multirow{5}{*}{(i)} 
        & \Random & 0.01&0.02&0.03&0.04 \\
        & \Greedy & 0.01&0.02&0.03&0.04 \\
        & \NAdap & 28.06&133.03&400.41&1051.34 \\
        & \Proposed & 41.12&188.09&525.87&1247.75 \\
        \cline{2-6}
        & Prep & 41.12&188.08&525.86&1247.74 \\ \hline
        \multirow{5}{*}{(j)} 
        & \Random & 0.01&0.02&0.03&0.04 \\
        & \Greedy & 0.01&0.03&0.04&0.05 \\
        & \NAdap & 11.17&17.14&28.19&29.44 \\
        & \Proposed & 32.62&105.73&246.09&371.35 \\
        \cline{2-6}
        & Prep & 32.6&105.7&246.05&371.31 \\ \hline
        \multirow{5}{*}{(k)} 
        & \Random & 0.01&0.02&0.03&0.04 \\
        & \Greedy & 0.02&0.03&0.05&0.07 \\
        & \NAdap & 10.42&23.81&38.05&75.76 \\
        & \Proposed & 22.4&75.42&157.52&279.69 \\
        \cline{2-6}
        & Prep & 22.38&75.38&157.46&279.62 \\ \hline
        \multirow{5}{*}{(l)} 
        & \Random & 0.01&0.02&0.03&0.04 \\
        & \Greedy & 0.02&0.03&0.05&0.07 \\
        & \NAdap & 3.22&7.01&10.92&14.76 \\
        & \Proposed & 16.82&61.92&135.52&238.17 \\
        \cline{2-6}
        & Prep & 16.8&61.89&135.46&238.09 \\ \bottomrule
    \end{tabular}
    }
\end{minipage}
\hspace{1em}
\begin{minipage}[t]{0.45\linewidth}
\centering
\caption{Average runtime[sec] for real-world datasets: the KAD settings.}
     \label{tab:taxi-time-kad}
\scalebox{0.75}{
    \begin{tabular}{cccccc} \toprule
        & & \multicolumn{4}{c}{$T$} \\
        & & $288$ & $576$ & $864$ & $1152$\\ \hline
                \multirow{5}{*}{(b)} 
        & \Random & 0.01&0.02&0.03&0.04 \\
        & \Greedy & 0.02&0.03&0.05&0.07 \\
        & \NAdap & 11.93&31.62&60.05&79.41 \\
        & \Proposed & 25.63&82.88&178.34&293.51 \\
        \cline{2-6}
        & Prep & 25.61&82.84&178.28&293.43 \\ \hline
         \multirow{5}{*}{(c)} 
        & \Random & 0.01&0.02&0.03&0.05 \\
        & \Greedy & 0.01&0.03&0.04&0.06 \\
        & \NAdap & 8.12&16.32&53.67&54.24 \\
        & \Proposed & 29.27&103.51&223.56&448.22 \\
        \cline{2-6}
        & Prep & 29.25&103.48&223.52&448.15 \\ \hline
        \multirow{5}{*}{(d)} 
        & \Random & 0.01&0.02&0.03&0.04 \\
        & \Greedy & 0.02&0.03&0.05&0.06 \\
        & \NAdap & 4.27&9.19&11.46&17.57 \\
        & \Proposed & 17.95&65.47&137.05&242.53 \\
        \cline{2-6}
        & Prep & 17.93&65.43&136.99&242.46 \\ \hline
        \multirow{5}{*}{(m)}
        & \Random & 0.01&0.02&0.03&0.04 \\
        & \Greedy & 0.01&0.02&0.03&0.04 \\
        & \NAdap & 28.53&128.22&387.42&996.4 \\
        & \Proposed & 49.35&201.62&586.96&1347.24 \\
        \cline{2-6}
        & Prep & 49.35&201.61&586.95&1347.23 \\ \hline
        \multirow{5}{*}{(n)}
        & \Random & 0.01&0.02&0.03&0.04 \\
        & \Greedy & 0.01&0.02&0.03&0.06 \\
        & \NAdap & 26.71&140.02&398.93&1017.74 \\
        & \Proposed & 36.84&186.18&305.01&543.98 \\
        \cline{2-6}
        & Prep & 36.84&186.17&305.0&543.96 \\ \hline
        \multirow{5}{*}{(o)} 
        & \Random & 0.01&0.02&0.03&0.04 \\
        & \Greedy & 0.01&0.02&0.03&0.05 \\
        & \NAdap & 26.15&132.93&414.53&1036.29 \\
        & \Proposed & 44.34&191.5&536.48&1237.22 \\
        \cline{2-6}
        & Prep & 44.34&191.49&536.47&1237.21 \\ \bottomrule
    \end{tabular}
    }
     \end{minipage}
\end{table}

\end{document}